\long\def\ca#1\cb{} 
\newcommand{\braket}[2]{\langle #1 \hspace{1pt} | \hspace{1pt} #2 \rangle}
\newcommand{\ketbra}[2]{| \hspace{1pt} #1 \rangle \langle #2 \hspace{1pt} |}
\newcommand{\norm}[2][]{#1| \! #1| #2 #1| \! #1|}
\newcommand{\ket}[1]{|#1\rangle}               
\newcommand{\bra}[1]{\langle #1|}              
\DeclareMathOperator*{\E}{{\mathbb{E}}}
\newcommand{\poly}{\operatorname{poly}}
\newcommand{\EC}{\mathcal{E}}
\newcommand{\OC}{\mathcal{O}}
\newcommand{\Tr}{{\rm Tr}}
\renewcommand{\geq}{\geqslant}
\renewcommand{\leq}{\leqslant}
\renewcommand{\vec}[1]{\boldsymbol{#1}}  
\theoremstyle{definition}
\newtheorem{theorem}{Theorem}
\newtheorem{lemma}{Lemma}
\newtheorem{corollary}{Corollary}
\newtheorem{definition}{Definition}
\newtheorem{remark}{Remark}
\newtheorem*{thm-mother}{Theorem \ref{thm-mother}}
\newcommand{\stkout}[1]{\ifmmode\text{\sout{\ensuremath{#1}}}\else\sout{#1}\fi}
\newif\ifverbose
\begin{document}

\title{Generalization in quantum machine learning from few training data}

\author{Matthias C.~Caro}
\email{caro@ma.tum.de}
\affiliation{Department of Mathematics, Technical University of Munich, Garching, Germany}
\affiliation{Munich Center for Quantum Science and Technology (MCQST), Munich, Germany}

\author{Hsin-Yuan Huang}
\affiliation{Institute for Quantum Information and Matter, Caltech, Pasadena, CA, USA}
\affiliation{Department of Computing and Mathematical Sciences, Caltech, Pasadena, CA, USA}

\author{M.~Cerezo}
\affiliation{Information Sciences, Los Alamos National Laboratory, Los Alamos, NM 87545, USA}
\affiliation{Center for Nonlinear Studies, Los Alamos National Laboratory, Los Alamos, NM 87545, USA
}

\author{Kunal~Sharma} 
\affiliation{Joint Center for Quantum Information and Computer Science, University of Maryland, College Park, Maryland 20742, USA}

\author{Andrew Sornborger} 
\affiliation{Information Sciences, Los Alamos National Laboratory, Los Alamos, NM 87545, USA}
\affiliation{Quantum Science Center, Oak Ridge, TN 37931, USA}

\author{Lukasz Cincio}
\affiliation{Theoretical Division, Los Alamos National Laboratory, Los Alamos, NM 87545, USA}

\author{Patrick J.~Coles}
\affiliation{Theoretical Division, Los Alamos National Laboratory, Los Alamos, NM 87545, USA}

\begin{abstract}
Modern quantum machine learning (QML) methods involve variationally optimizing a parameterized quantum circuit on a training data set, and subsequently making predictions on a testing data set (i.e., generalizing).
In this work, we provide a comprehensive study of generalization performance in QML after training on a limited number~$N$ of training data points.
We show that the generalization error of a quantum machine learning model with $T$ trainable gates scales at worst as $\sqrt{T/N}$. When only $K \ll T$ gates have undergone substantial change in the optimization process, we prove that the generalization error improves to $\sqrt{K / N}$.
Our results imply that the compiling of unitaries into a polynomial number of native gates, a crucial application for the quantum computing industry that typically uses exponential-size training data, can be sped up significantly.
We also show that classification of quantum states across a phase transition with a quantum convolutional neural network requires only a very small training data set.
Other potential applications include learning quantum error correcting codes or quantum dynamical simulation.
Our work injects new hope into the field of QML, as good generalization is guaranteed from few training data.  
\end{abstract}

\maketitle

\section{Introduction}


The ultimate goal of machine learning (ML) is to make accurate predictions on unseen data. This is known as generalization, and significant effort has been expended to understand the generalization capabilities of classical ML models. For example, theoretical results have been formulated as upper bounds on the generalization error as a function of the training data size and the model complexity~\cite{vapnik1971uniform, pollard1984convergence, gine1984some, dudley1999uniform, bartlett2002rademacher}. Such bounds provide guidance as to how much training data is required and/or sufficient to achieve accurate generalization.


Quantum machine learning (QML) is an emerging field that has generated great excitement~\cite{biamonte2017quantum,schuld2015introduction,schuld2014quest,dunjko2018machine}. Modern QML typically involves training a parameterized quantum circuit in order to analyze either classical or quantum data sets~\cite{cerezo2020variationalreview,havlivcek2019supervised,farhi2018classification,romero2017quantum,wan2017quantum,larocca2021theory,schatzki2021entangled}. Early results indicate that, for classical data analysis, QML models may offer some advantage over classical models under certain circumstances~\cite{huang2021power,abbas2020power,liu2021rigorous}. It has also been proven that QML models can provide an exponential advantage in sample complexity for analyzing quantum data \cite{huang2021information, aharonov2021quantum}.


However, little is known about the conditions needed for accurate generalization in QML. Significant progress has been made in understanding the trainability of QML models~\cite{mcclean2018barren,cerezo2020cost,cerezo2020impact,arrasmith2020effect,holmes2021connecting,pesah2020absence,volkoff2021large,sharma2020trainability,holmes2020barren,marrero2020entanglement,uvarov2020barren,patti2020entanglement,abbas2020power,wang2020noise,larocca2021diagnosing, thanaslip2021subtleties}, but trainability is a separate question from generalization~\cite{abbas2020power,banchi2021generalization,du2021efficient}. Overfitting of training data could be an issue for QML, just as it is for classical machine learning. Moreover, the training data size required for QML generalization has yet to be fully studied. Na\"ively, one could expect that an exponential number of training points are needed when training a function acting on an exponentially large Hilbert space. For instance, some studies have found that, exponentially in $n$, the number of qubits, large amounts of training data would be needed, assuming that one is trying to train an arbitrary unitary~\cite{poland2020no,sharma2020reformulation}. This is a concerning result, since it would imply exponential scaling of the resources required for QML, which is precisely what the field of quantum computation would like to avoid.


In practice, a more relevant scenario to consider instead of arbitrary unitaries is learning a unitary that can be represented by a polynomial-depth quantum circuit. This class of unitaries corresponds to those that can be efficiently implemented on a quantum computer, and it is exponentially smaller than that of arbitrary unitaries. More generally, one could consider a QML model with $T$ parameterized gates and relate the training data size $N$ needed for generalization to $T$. Even more general would be to consider generalization error a dynamic quantity that varies during the optimization.


In this work, we prove highly general theoretical bounds on the generalization error in variational QML: The generalization error is approximately upper bounded by $\sqrt{T/N}$. 
In our proofs, we first establish covering number bounds for the class of quantum operations that a variational QML model can implement. From these, we then derive generalization error bounds using the chaining technique for random processes.
A key implication of our results is that an efficiently implementable QML model, one such that \mbox{$T\in \mathcal{O}(\poly n )$}, only requires an efficient amount of training data, $N\in \mathcal{O}(\poly n )$, to obtain good generalization. This implication, by itself, will improve the efficiency guarantees of variational quantum algorithms~\cite{cerezo2020variationalreview,bharti2021noisy,endo2021hybrid} that employ training data, such as quantum autoencoders~\cite{romero2017quantum}, quantum generative adversarial networks~\cite{romero2021variational}, variational quantum error correction~\cite{johnson2017qvector,cong2019quantum}, variational quantum compiling~\cite{khatri2019quantum,sharma2019noise}, and variational dynamical simulation~\cite{cirstoiu2020variational,commeau2020variational,endo2020variational,li2017efficient}. It also yields improved efficiency guarantees for classical algorithms that simulate QML models.

We furthermore refine our bounds to account for the optimization process. We show that generalization improves if only some parameters have undergone substantial change during the optimization. Hence, even if we used a number of parameters $T$ larger than the training data size $N$, the QML model could still generalize well if only some of the parameters have changed significantly.
This suggests that QML researchers should be careful not to overtrain their models especially when the decrease in training error is insufficient.

To showcase our results, we consider quantum convolutional neural networks (QCNNs)~\cite{cong2019quantum,pesah2020absence}, a QML model that has received significant attention. QCNNs have only $T =  \mathcal{O}(\log n )$ parameters and yet they are capable of classifying quantum states into distinct phases. 
Our theory guarantees that QCNNs have good generalization error for quantum phase recognition with only polylogarithmic training resources, $N \in \mathcal{O}(\log^2 n )$.
We support this guarantee with a numerical demonstration, which suggests that even constant-size training data can suffice.


Finally, we highlight the task of quantum compiling, a crucial application for the quantum computing industry. State-of-the-art classical methods for approximate optimal compiling of unitaries often employ exponentially large training data sets~\cite{cincio2018learning,cincio2021machine,qFactor}. However, our work indicates that only polynomial-sized data sets are needed, suggesting that state-of-the-art compilers could be further improved. Indeed, we numerically demonstrate the surprisingly low data cost of compiling the quantum Fourier transform at relatively large scales.

\section{Results}\label{sec:results}

\subsection{Framework}\label{sec:Background}

Let us first outline our theoretical framework.
We consider a quantum machine learning model (QMLM) as being a parameterized quantum channel, i.e., a completely positive trace preserving (CPTP) map that is parameterized. We denote a QMLM as $\EC^{\mathrm{QMLM}}_{\vec{\alpha}}(\cdot)$ where $\vec{\alpha} = (\vec{\theta}, \vec{k})$ denotes the set of parameters, including continuous parameters $\vec{\theta}$ inside gates, as well as discrete parameters $\vec{k}$ that allow the gate structure to vary. We make no further assumptions on the form of the dependence of the CPTP map $\EC^{\mathrm{QMLM}}_{\vec{\alpha}}(\cdot)$ on the parameters $\vec{\alpha}$. During the training process, one would optimize the continuous parameters $\vec{\theta}$ and potentially also the structure $\vec{k}$ of the QMLM.


A QMLM takes input data in the form of quantum states. For classical data $x$, the input is first encoded in a quantum state via a map $x\mapsto\rho(x)$. 
This allows the data to be either classical or quantum in nature, since regardless it is eventually encoded in a quantum state. 
We assume that the data encoding is fixed in advance and not optimized over. We remark here that our results also apply for more general encoding strategies involving data re-uploading~\cite{perez2020data}, as we explain in Remark~\ref{rmk:data-reuploading}.


For the sake of generality, we allow the QMLM to act on a subsystem of the state $\rho(x)$. Hence, the output state can be written as $(\mathcal{E}^{\mathrm{QMLM}}_{\vec{\alpha}}\otimes \operatorname{id})(\rho(x))$. For a given data point $(x_i, y_i)$, we can write the loss function as
\begin{equation}
    \ell (\vec{\alpha};x_i,y_i)
    =\Tr\left[O^{\mathrm{loss}}_{x_i,y_i}(\mathcal{E}^{\mathrm{QMLM}}_{\vec{\alpha}}\otimes \operatorname{id})(\rho(x_i))\right],\label{eq:loss-function}
\end{equation}
for some Hermitian observable $O^{\mathrm{loss}}_{x_i, y_i}$. As is common in classical learning theory, the prediction error bounds will depend on the largest (absolute) value that the loss function can attain. In our case, we therefore assume $C_\mathrm{loss}:=\sup_{x,y}\norm{O^{\mathrm{loss}}_{x, y}}<\infty$, i.e., the spectral norm can be bounded uniformly over all possible loss observables.


In Eq.~\eqref{eq:loss-function}, we take the measurement to act on a single copy of the output of the QMLM $\mathcal{E}^{\mathrm{QMLM}}_{\vec{\alpha}}(\cdot)$ upon input of (a subsystem of) the data encoding state $\rho(x_i)$. At first this looks like a restriction. 
However, note that one can choose $\mathcal{E}^{\mathrm{QMLM}}_{\vec{\alpha}}(\cdot)$ to be a tensor product of multiple copies of a QMLM, each with the same parameter setting, applied to multiple copies of the input state. Hence our framework is general enough to allow for global measurements on multiple copies. In this addition to the aforementioned situation, we further study the case in which trainable gates are more generally reused.


For a training dataset $S=\{(x_i,y_i)\}_{i=1}^{N}$ of size $N$, the average loss for parameters $\vec{\alpha}$ on the training data is
\begin{equation}
    \hat{R}_S(\vec{\alpha})
    = \frac{1}{N} \sum_{i=1}^N \ell (\vec{\alpha}; x_i, y_i),
\end{equation}
which is often referred to as the \emph{training error}.
When we obtain a new input $x$, the \emph{prediction error} of a parameter setting $\vec{\alpha}$ is taken to be the expected loss
\begin{equation}
    R(\vec{\alpha})
    = \E_{(x, y)\sim P} \left[\ell (\vec{\alpha}; x, y)\right],
\end{equation}
where the expectation is with respect to the distribution $P$ from which the training examples are generated.


Achieving small prediction error $R(\vec{\alpha})$ is the ultimate goal of (quantum) machine learning. As $P$ is generally not known, the training error $\hat{R}_S(\vec{\alpha})$ is often taken as a proxy for $R(\vec{\alpha})$. This strategy can be justified via bounds on the \emph{generalization error} 
\begin{equation}
    \operatorname{gen}(\vec{\alpha})
    = R(\vec{\alpha}) - \hat{R}_S(\vec{\alpha}),
\end{equation}
which is the key quantity that we bound in our theorems.

\subsection{Analytical Results}\label{sec:theorems}

\begin{figure*}
    \centering
    \includegraphics[width=1.0\linewidth]{}
    \caption{\textbf{Various types of Quantum Machine Learning Models (QMLMs).} 
    Panel (a) depicts a basic QMLM with $T=6$ independently parameterized gates. The gray boxes illustrate some global evolutions that are not trainable. 
    Panel (b) shows a gate-sharing QMLM with $T=2$ independently parameterized gates, each gate is repeatedly used for $M=3$ times. 
    In panel (c), we depict a multi-copy QMLM. We take measurement data from $M$ rounds of a basic QMLM with $T=6$ parameterized gates and post-process the measurement outcomes to produce an output. Running $M$ copies of a basic QMLM with $T$ gates is equivalent to running a gate-sharing QMLM with $T=6$ parameterized gates, in which each gate is repeated $M$ times. 
    Panel (d) describes a gate-sharing QMLM under optimization. The parameterized gate to the left undergoes a small change, while the one to the right undergoes a large change. If we sort the changes $\Delta_1, \Delta_2$ from large to small, then $\Delta_1 \gg \Delta_2 \approx 0$. 
    Finally, panel (e) illustrates gate-sharing QMLM with variable structure. The number $T$ of parameterized gates changes throughout the optimization. The figure begins with $T=1$ and ends with $T=2$.}
    \label{fig:QMLM-settings}
\end{figure*}

We prove probabilistic bounds on the generalization error of a QMLM. Our bounds guarantee that a good performance on a sufficiently large training data set implies, with high probability, a good performance on previously unseen data points. In particular, we provide a precise meaning of ``sufficiently large'' in terms of properties of the QMLM and the employed training procedure.

Fig.~\ref{fig:QMLM-settings} gives an overview of the different scenarios considered in this work. We begin with the basic form of our result. We consider a QMLM that has arbitrarily many non-trainable global quantum gates and $T$ trainable local quantum gates. Here, by local we mean $\kappa$-local for some $n$-independent locality parameter $\kappa$, and a local quantum gate can be a unitary or a quantum channel acting on $\kappa$ qubits. Then we have the following bound on the generalization error for the QMLM with final parameter setting $\vec{\alpha}^\ast$ after training:

\begin{theorem}[Basic QMLM] \label{thm:simpleQMLM}
For a QMLM with $T$ parameterized local quantum channels, with high probability over training data of size $N$, we have that
\begin{equation}
    \operatorname{gen}(\vec{\alpha}^\ast) \in \mathcal{O}\left( \sqrt{\frac{T \log T}{N}}\right).
\end{equation}
\end{theorem}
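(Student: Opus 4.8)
The plan is to bound the generalization error uniformly over all reachable parameter settings, since the trained parameter $\vec{\alpha}^\ast$ is itself data-dependent and cannot be fixed in advance. Concretely, I would control $\sup_{\vec{\alpha}}|\operatorname{gen}(\vec{\alpha})|$ by viewing $\operatorname{gen}(\vec{\alpha})$ as a centered stochastic process indexed by $\vec{\alpha}$ and applying the chaining (Dudley) bound. Two ingredients are needed: (i) a metric on the parameter space under which the process has sub-Gaussian increments, and (ii) a covering-number estimate for that metric. For (i), since the loss is bounded by $C_{\mathrm{loss}}$, Hoeffding's inequality shows that for fixed $\vec{\alpha},\vec{\alpha}'$ the increment $\operatorname{gen}(\vec{\alpha})-\operatorname{gen}(\vec{\alpha}')$ is sub-Gaussian with variance proxy $\propto \tfrac{1}{N}\sup_{x,y}|\ell(\vec{\alpha};x,y)-\ell(\vec{\alpha}';x,y)|^2$. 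By Hölder's inequality and the stabilized definition of the diamond norm, $|\ell(\vec{\alpha};x,y)-\ell(\vec{\alpha}';x,y)|\leq C_{\mathrm{loss}}\,\|\EC^{\mathrm{QMLM}}_{\vec{\alpha}}-\EC^{\mathrm{QMLM}}_{\vec{\alpha}'}\|_\diamond$, so the natural metric is $d(\vec{\alpha},\vec{\alpha}')=\tfrac{C_{\mathrm{loss}}}{\sqrt N}\|\EC^{\mathrm{QMLM}}_{\vec{\alpha}}-\EC^{\mathrm{QMLM}}_{\vec{\alpha}'}\|_\diamond$.

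The heart of the argument is (ii), the covering-number bound for the channel class $\{\EC^{\mathrm{QMLM}}_{\vec{\alpha}}\}$ in diamond norm, which I would build from a per-gate estimate. Each of the $T$ trainable gates is a $\kappa$-local channel with $\kappa$ fixed, so it lives in a compact parameter manifold of dimension $d_\kappa$ independent of $n$; covering the set of such channels to diamond-norm accuracy $\delta$ costs at most $(c/\delta)^{d_\kappa}$ points. The decisive structural fact is that diamond-norm errors accumulate only additively under composition and tensoring: channels satisfy $\|\mathcal{A}\circ\mathcal{B}\|_\diamond\leq\|\mathcal{A}\|_\diamond\|\mathcal{B}\|_\diamond$, have unit diamond norm, and obey $\|\mathcal{F}\otimes\operatorname{id}\|_\diamond=\|\mathcal{F}\|_\diamond$. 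A telescoping argument then gives $\|\EC^{\mathrm{QMLM}}_{\vec{\alpha}}-\EC^{\mathrm{QMLM}}_{\vec{\alpha}'}\|_\diamond\leq\sum_{t=1}^T\|(\text{gate }t)_{\vec{\alpha}}-(\text{gate }t)_{\vec{\alpha}'}\|_\diamond$, with the fixed global gates dropping out because they are identical on both sides. Hence approximating each gate to accuracy $\epsilon/T$ approximates the whole channel to accuracy $\epsilon$, yielding $\log\mathcal{N}(\epsilon,\|\cdot\|_\diamond)\lesssim T d_\kappa\log(T/\epsilon)$ --- the crucial point being that this grows only \emph{linearly} in $T$.

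With both ingredients in hand I would invoke Dudley's entropy integral,
\begin{equation}
    \E\left[\sup_{\vec{\alpha}}|\operatorname{gen}(\vec{\alpha})|\right]\lesssim\frac{C_{\mathrm{loss}}}{\sqrt N}\int_0^{2}\sqrt{\log\mathcal{N}(\epsilon,\|\cdot\|_\diamond)}\,d\epsilon,
\end{equation}
where the upper limit $2$ is the diamond-norm diameter of the channel class. Substituting the entropy bound, the integral $\int_0^2\sqrt{T d_\kappa\log(T/\epsilon)}\,d\epsilon$ factors as $\sqrt{T d_\kappa}$ times $\int_0^2\sqrt{\log(T/\epsilon)}\,d\epsilon$, and the latter converges and scales as $\sqrt{\log T}$ for large $T$ (the $\log(1/\epsilon)$ piece integrates to a constant). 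Treating $C_{\mathrm{loss}}$ and $d_\kappa$ as $n$-independent constants gives $\E[\sup_{\vec{\alpha}}|\operatorname{gen}(\vec{\alpha})|]\in\mathcal{O}(\sqrt{T\log T/N})$. Finally, because changing any single training pair moves each empirical average by at most $2C_{\mathrm{loss}}/N$, McDiarmid's bounded-difference inequality shows $\sup_{\vec{\alpha}}|\operatorname{gen}(\vec{\alpha})|$ concentrates about its mean, upgrading the bound to hold with high probability; since $\operatorname{gen}(\vec{\alpha}^\ast)\leq\sup_{\vec{\alpha}}|\operatorname{gen}(\vec{\alpha})|$, the theorem follows.

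The main obstacle I anticipate is the covering-number lemma --- specifically, establishing the additive error composition in diamond norm when the trainable local gates act on \emph{different}, possibly overlapping, subsystems and are interleaved with arbitrary fixed global evolutions. One must verify that tensoring each local gate up to the full register and composing through the fixed gates preserves the clean telescoping estimate, and that the per-gate manifold dimension $d_\kappa$ is genuinely $n$-independent, so that the exponent is $T d_\kappa$ rather than something that secretly scales with $n$. Everything downstream --- the Hölder/diamond-norm Lipschitz step, Dudley's integral, and the concentration step --- is then comparatively routine.
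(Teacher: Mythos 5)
Your proposal is correct, and its decisive ingredient---the metric-entropy bound $\log\mathcal{N}(\epsilon,\lVert\cdot\rVert_\diamond)\lesssim T\,d_\kappa\log(T/\epsilon)$, obtained by covering each trainable $\kappa$-local channel to accuracy $\epsilon/T$ and telescoping via subadditivity of the diamond distance under composition, with the fixed global evolutions cancelling---is exactly the paper's central lemma (its metric-entropy theorems for QMLMs, proved from the norm-ball covering estimate for local CPTP maps together with the subadditivity lemma for the diamond distance). The ``main obstacle'' you flag at the end is resolved there precisely as you anticipate: interleaved fixed gates drop out of the telescoping sum, and the per-gate covering dimension ($512$ for $2$-qubit CPTP maps, $2\cdot 2^{4\kappa}$ in general) is manifestly $n$-independent. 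Where you genuinely diverge is the empirical-process step. The paper takes the textbook Rademacher route: McDiarmid reduces $\sup_{\vec{\theta}}\operatorname{gen}(\vec{\theta})$ to its expectation, a ghost-sample symmetrization bounds that expectation by a Rademacher average, a second McDiarmid passes to the data-conditional Rademacher average, and that conditional process is chained at dyadic scales using Massart's finite-class lemma, yielding the entropy integral. You instead chain the centered empirical process $\operatorname{gen}(\vec{\alpha})$ directly: Hoeffding gives sub-Gaussian increments in the metric $\tfrac{C_{\mathrm{loss}}}{\sqrt{N}}\lVert\mathcal{E}^{\mathrm{QMLM}}_{\vec{\alpha}}-\mathcal{E}^{\mathrm{QMLM}}_{\vec{\alpha}'}\rVert_\diamond$ (via the same H\"older/stabilized-diamond-norm Lipschitz step the paper uses), and Dudley's inequality plus a single McDiarmid application at the end finishes the argument. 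Both implementations arrive at the same entropy integral and the same $\sqrt{T\log T/N}$ rate; yours is somewhat leaner, skipping symmetrization and Massart entirely (the only detail to add is the anchor term $\mathbb{E}\lvert\operatorname{gen}(\vec{\alpha}_0)\rvert\leq C_{\mathrm{loss}}/\sqrt{N}$ that Dudley's bound on increments leaves over, which is lower order), while the paper's conditional-Rademacher formulation is the scaffolding it reuses verbatim for the refined gate-sharing, optimization-dependent, and variable-structure extensions of the theorem.
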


\begin{remark}\label{rmk:sample-complexity-bound}
    Theorem \ref{thm:simpleQMLM} directly implies sample complexity bounds: For any $\varepsilon>0$, we can, with high success probability, guarantee that $\operatorname{gen}(\vec{\alpha}^\ast) \leq \varepsilon$, already with training data of size $N\sim\nicefrac{T\log T}{\varepsilon^2}$, which scales effectively linearly with $T$, the number of parameterized gates. 
    
    For efficiently implementable QMLMs with $T\in\mathcal{O}(\poly n)$, a sample size of $N\in\mathcal{O}\left(\nicefrac{\poly n}{\varepsilon^2}\right)$ is already sufficient. More concretely, if $T\in\mathcal{O}\left(n^D\right)$ for some degree $D$, then the corresponding sufficient sample complexity obtained from Theorem \ref{thm:simpleQMLM} satisfies $N\in\tilde{\mathcal{O}}\left(\nicefrac{n^D}{\varepsilon^2}\right)$, where the $\tilde{\mathcal{O}}$ hides factors logarithmic in $n$. In the NISQ era~\cite{preskill2018quantum}, we expect the number $T$ of trainable maps to only grow mildly with the number of qubits, e.g., as in the architectures discussed in Refs.~\cite{cong2019quantum,romero2018strategies,abbas2020power}.
    In this case, Theorem~\ref{thm:simpleQMLM} gives an especially strong guarantee. 
\end{remark}

In various QMLMs, such as QCNNs, the same parameterized local gates are applied repeatedly. One could also consider running the same QMLM multiple times to gather measurement data and then post-processing that data.
In both cases, one should consider the QMLM as using the same parameterized local gates repeatedly. We assume each gate to be repeated at most $M$ times.
A direct application of Theorem \ref{thm:simpleQMLM} would suggest that we need a training data size $N$ of roughly $MT$, the total number of parameterized gates. However, the required number of training data actually is much smaller:

\begin{theorem}[Gate-sharing QMLM]\label{thm:gate-sharing-QMLM}
Consider a QMLM with $T$ independently parameterized local quantum channels, where each channel is reused at most $M$ times. With high probability over training data of size $N$, we have
\begin{equation}
    \operatorname{gen}(\vec{\alpha}^\ast) \in \mathcal{O}\left( \sqrt{\frac{T \log (MT)}{N}} \right).
\end{equation}
\end{theorem}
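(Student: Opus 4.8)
The plan is to follow the same two-stage route used for Theorem~\ref{thm:simpleQMLM}: first bound the covering number of the class of channels the QMLM can realize, then feed this into the chaining bound for the empirical process $\operatorname{gen}(\cdot)$. The essential observation is that gate reuse does not increase the number of \emph{free} parameters: although the circuit now contains up to $MT$ gate applications, these are tied to only $T$ independently parameterized channels, so the effective parameter space stays $T$-dimensional (up to the constant $\mathcal{O}(4^\kappa)$ parameters per $\kappa$-local gate). It is this dimension $T$, not the total number $MT$ of applications, that should sit in the exponent of the covering number. Naively regarding the gate-sharing QMLM as a basic QMLM with $MT$ gates and invoking Theorem~\ref{thm:simpleQMLM} would give the weaker $\sqrt{MT\log(MT)/N}$; the gain comes from imposing the sharing constraint directly on the cover.

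First I would build the cover in diamond norm. Since $|\ell(\vec{\alpha};x,y)-\ell(\vec{\alpha}';x,y)|\leq C_{\mathrm{loss}}\norm{\mathcal{E}^{\mathrm{QMLM}}_{\vec{\alpha}}-\mathcal{E}^{\mathrm{QMLM}}_{\vec{\alpha}'}}_\diamond$ uniformly in $(x,y)$, a diamond-norm cover of the channel class is automatically a cover in the pseudometric controlling the increments of $\operatorname{gen}$. Writing the realized channel as an alternating composition of the fixed non-trainable channels and the trainable local channels, I would use the telescoping contractivity estimate $\norm{\Phi_2\circ\Phi_1-\Phi_2'\circ\Phi_1'}_\diamond\leq\norm{\Phi_2-\Phi_2'}_\diamond+\norm{\Phi_1-\Phi_1'}_\diamond$, valid because composing with a channel is a diamond-norm contraction. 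Applied across all $MT$ applications, and using that each of the $T$ distinct channels occurs at most $M$ times, a simultaneous $\delta$-perturbation of all parameters moves the overall channel by at most $\mathcal{O}(MT\delta)$, while the interleaved non-trainable channels drop out as contractions. As a single $\kappa$-local gate is Lipschitz in its $\mathcal{O}(4^\kappa)$ parameters, an $\epsilon$-cover follows from a $\delta$-net of the $T$-dimensional parameter region with $\delta\sim\epsilon/(MT)$; a volumetric net in dimension $\mathcal{O}(T)$ then yields
\begin{equation}
    \log\mathcal{N}(\epsilon)\;\lesssim\;T\log\!\left(\frac{MT}{\epsilon}\right).
\end{equation}

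Next I would substitute this into the Dudley-type entropy-integral bound established in the proof of Theorem~\ref{thm:simpleQMLM}, which controls $\operatorname{gen}(\vec{\alpha}^\ast)$ with high probability by $C_{\mathrm{loss}}N^{-1/2}$ times $\int_0^{D}\sqrt{\log\mathcal{N}(\epsilon)}\,d\epsilon$ over the constant diameter $D$. The integrand becomes $\sqrt{T}\,\sqrt{\log(MT)+\log(1/\epsilon)}$, whose integral converges at $\epsilon\to0$ and is dominated by the $\sqrt{T\log(MT)}$ term, giving $\operatorname{gen}(\vec{\alpha}^\ast)\in\mathcal{O}(\sqrt{T\log(MT)/N})$.

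The main obstacle is precisely the covering step: extracting exponent $T$ while paying only a $\log(MT)$ Lipschitz penalty. The delicate point is that tying a channel across its $M$ reuses makes a single coordinate perturbation propagate to all $M$ copies simultaneously, so one must verify that the telescoping bound still applies term-by-term under this coupling---making the Lipschitz constant the total count $MT$---even though the net is taken only over the shared $T$-dimensional parameter vector, keeping the covering exponent at $\mathcal{O}(T)$. Confirming that the fixed global channels are genuine diamond-norm contractions, and hence vanish from the telescoping sum, is what rules out any hidden dependence on circuit depth beyond the explicit factor $MT$.
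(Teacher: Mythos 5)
Your proposal matches the paper's proof essentially step for step: the paper (in its appendix) builds per-channel diamond-norm nets at resolution $\varepsilon/(T M_t)$, combines them via subadditivity of the diamond distance under composition so that the covering exponent stays $T$ while the Lipschitz penalty is the total number $\sum_t M_t$ of gate applications, yielding the metric entropy $\mathcal{O}\left(T\log(MT/\varepsilon)\right)$, and then feeds this into exactly the same McDiarmid-symmetrization-plus-chaining machinery used for the basic theorem. The only cosmetic difference is that you cover the shared parameter region and invoke Lipschitzness of the gates in their parameters, whereas the paper—which assumes nothing about how the channels depend on $\vec{\alpha}$—covers the compact set of $\kappa$-local CPTP maps directly; the structure and the resulting bound are otherwise identical.
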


Thus, good generalization, as in Remark \ref{rmk:sample-complexity-bound}, can already be guaranteed, with high probability, when the data size effectively scales linearly in $T$ (the number of independently parameterized gates) and only logarithmically in $M$ (the number of uses).
In particular, applying multiple copies of the QMLM in parallel does not significantly worsen the generalization performance compared to a single copy.
Thus, as we discuss in Remark~\ref{rmk:multiple-copy-postprocessing}, Theorem~\ref{thm:gate-sharing-QMLM} ensures that we can increase the number of shots used to estimate expectation values at the QMLM output without substantially harming the generalization behavior.

The optimization process of the QMLM also plays an important role in the generalization performance. Suppose that during the optimization process, the $t^{\textrm{th}}$ local gate changed by a distance $\Delta_t$. We can bound the generalization error by a function of the changes $\{\Delta_t\}_t$.

\begin{theorem}[Gate-sharing QMLM under optimization]\label{thm:gate-sharing-QMLM-optimization}
Consider a QMLM with $T$ independently parameterized local quantum channels, where the $t^{\textrm{th}}$ channel is reused at most $M$ times and is changed by $\Delta_t$ during the optimization. Assume $\Delta_1 \geq \ldots \geq \Delta_T$. With high probability over training data of size $N$, we have
\small
\begin{equation}\label{eq:gate-sharing-QMLM-optimization-main}
   \operatorname{gen}(\vec{\alpha}^\ast) \in \mathcal{O}\left( \min_{K = 0, \ldots, T} \!\left\{ \sqrt{\frac{K \log (MT)}{N}} +\!\!\!\! \sum_{k=K+1}^T \!\!\!\!M \Delta_{k} \right\} \!\!\right).
\end{equation}
\end{theorem}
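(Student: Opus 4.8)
The plan is to interpolate between the trained model $\EC^{\mathrm{QMLM}}_{\vec{\alpha}^\ast}$ and a \emph{reference} model in which the gates that barely moved during training are reset to their initial values, thereby reducing the effective number of trainable gates from $T$ to $K$. Fix $K\in\{0,\dots,T\}$ and let $\vec{\beta}_K$ denote the parameter setting that agrees with $\vec{\alpha}^\ast$ on the $K$ gates with the largest changes (by the sorting assumption, the gates $1,\dots,K$) and resets the remaining $T-K$ gates to their pre-training values. Writing the generalization error as the telescoping sum
\begin{equation*}
\operatorname{gen}(\vec{\alpha}^\ast) = \operatorname{gen}(\vec{\beta}_K) + \big[R(\vec{\alpha}^\ast)-R(\vec{\beta}_K)\big] - \big[\hat{R}_S(\vec{\alpha}^\ast)-\hat{R}_S(\vec{\beta}_K)\big],
\end{equation*}
I would bound the first term by the complexity of an effectively $K$-gate model and the two bracketed difference terms by how far $\vec{\alpha}^\ast$ and $\vec{\beta}_K$ sit apart as channels.

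First, for the term $\operatorname{gen}(\vec{\beta}_K)$, I would invoke the covering-number and chaining machinery already developed for Theorem~\ref{thm:gate-sharing-QMLM}, but applied to the restricted hypothesis class in which only a fixed set of $K$ gates (each still reused at most $M$ times) is trainable while the rest are frozen. Since $\vec{\beta}_K$ lies in this class, the uniform high-probability bound underlying Theorem~\ref{thm:gate-sharing-QMLM} gives $\operatorname{gen}(\vec{\beta}_K)\in\mathcal{O}(\sqrt{K\log(MT)/N})$. The subtlety is that the identity of the active gates is chosen \emph{after} seeing the data, so I must take a union bound over the $\binom{T}{K}$ possible active sets (and over the $T+1$ values of $K$). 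Because $\log\binom{T}{K}\leq K\log T$ is of the same order as the $K\log(MT)$ metric entropy already present, this union bound only renormalizes constants and leaves the scaling $\sqrt{K\log(MT)/N}$ intact.

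Second, for the difference terms, I would use that the loss is linear in the output state and that $\lVert O^{\mathrm{loss}}_{x,y}\rVert_{\infty}\leq C_{\mathrm{loss}}$, so that for every data point
\begin{equation*}
\big|\ell(\vec{\alpha}^\ast;x,y)-\ell(\vec{\beta}_K;x,y)\big| \leq C_{\mathrm{loss}}\,\lVert \EC^{\mathrm{QMLM}}_{\vec{\alpha}^\ast}-\EC^{\mathrm{QMLM}}_{\vec{\beta}_K}\rVert_{\diamond}.
\end{equation*}
Both $R(\vec{\alpha}^\ast)-R(\vec{\beta}_K)$ and $\hat{R}_S(\vec{\alpha}^\ast)-\hat{R}_S(\vec{\beta}_K)$, being averages of such per-point differences, inherit the same bound. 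It then remains to control the diamond-norm distance between the two channels, which differ only in the frozen gates $k=K+1,\dots,T$, gate $k$ by an amount $\Delta_k$ and reused at most $M$ times. A hybrid (telescoping) argument that swaps the uses of the frozen gates one at a time, combined with subadditivity of the diamond norm under composition and its stability under tensoring with the identity and pre/post-composition with CPTP maps, yields $\lVert \EC^{\mathrm{QMLM}}_{\vec{\alpha}^\ast}-\EC^{\mathrm{QMLM}}_{\vec{\beta}_K}\rVert_{\diamond}\leq \sum_{k=K+1}^{T} M\Delta_k$. Combining the three contributions gives $\operatorname{gen}(\vec{\alpha}^\ast)\in\mathcal{O}(\sqrt{K\log(MT)/N}+\sum_{k>K}M\Delta_k)$ for every $K$ simultaneously, and taking the minimum over $K$ yields the claimed bound.

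The main obstacle I anticipate is the interplay between the data-dependent choice of which gates are ``active'' and the requirement of a uniform generalization guarantee: one cannot apply a fixed-hypothesis bound to $\vec{\beta}_K$ because $\vec{\beta}_K$ depends on the training set through the optimization trajectory. Making the union bound over active sets genuinely cheap --- i.e., verifying that $\log\binom{T}{K}$ is absorbed into the existing $K\log(MT)$ entropy rather than degrading the rate --- is the crux, and it is what allows the clean $\min_K$ form. A secondary technical point is ensuring the hybrid argument counts the reuse multiplicity correctly, so that each frozen gate contributes $M\Delta_k$ rather than $\Delta_k$ or $M^2\Delta_k$.
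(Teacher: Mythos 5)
Your proposal is correct, and it rests on the same core ingredients as the paper's proof: freezing the low-movement gates at their data-independent initial values, a union bound over the $\binom{T}{K}$ possible active sets and over $K$ (observing that $\log\binom{T}{K}\leq K\log T$ is of the same order as the metric entropy $K\log(MT)$ already present), subadditivity of the diamond norm under composition (Lemma~\ref{lem:subadditivity-diamond}) giving $M\Delta_k$ per frozen gate, and the covering-number/chaining machinery for the $K$ active gates. The organization differs in one respect. The paper (Theorems~\ref{thm:metric-entropy-optimization-dependent} and~\ref{thm:optimization-dependent-gen-bound}, generalized in Theorem~\ref{thm-mother}) never forms your reference hypothesis $\vec{\beta}_K$ explicitly: it builds covering nets in which the inactive gates sit at the reference maps, so that the nets cover the optimization-constrained class only up to radius $\varepsilon+\sum_{k>K}M\Delta_k$, and the term $\sum_{k>K}M\Delta_k$ then emerges from truncating Dudley's entropy integral at that scale inside a single chaining argument. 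Your version instead splits $\operatorname{gen}(\vec{\alpha}^\ast)$ into $\operatorname{gen}(\vec{\beta}_K)$ plus two risk-difference terms, applies the fixed-architecture uniform bound (the machinery of Theorem~\ref{thm:gate-sharing-QMLM}) to $\vec{\beta}_K$, and bounds the differences pointwise by $C_{\mathrm{loss}}\lVert\mathcal{E}^{\mathrm{QMLM}}_{\vec{\alpha}^\ast}-\mathcal{E}^{\mathrm{QMLM}}_{\vec{\beta}_K}\rVert_\diamond\leq C_{\mathrm{loss}}\sum_{k>K}M\Delta_k$. This buys a cleaner separation of the statistical content (uniform bounds over fixed, data-independent classes) from the deterministic perturbation carrying all of the data-dependence of the $\Delta_k$'s, which makes the handling of the data-dependent $\Delta_k$'s and active-set identity especially transparent; the paper's single-chaining formulation, in exchange, slots directly into the mother theorem, where the same entropy-integral argument simultaneously absorbs variable architectures, per-gate reuse counts $M_t$, and statistically estimated losses without re-running the decomposition for each extension.
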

\normalsize

When only $K \ll T$ local quantum gates have undergone a significant change, then the generalization error will scale at worst linearly with $K$ and logarithmically in the total number of parameterized gates $MT$. Given that recent numerical results suggest that the parameters in a deep parameterized quantum circuit only change by a small amount during training~\cite{shirai2021quantumtangent, liu2021representation}, Theorem~\ref{thm:gate-sharing-QMLM-optimization} may find application in studying the generalization behavior of deep QMLMs.

Finally, we consider a more advanced type of variable ansatz optimization strategy that is also adopted in practice~\cite{grimsley2019adaptive,tang2019qubit,bilkis2021semi,zhu2020adaptive}. Instead of fixing the structure of the QMLM, such as the number of parameterized gates and how the parameterized gates are interleaved with the fixed gates, the optimization algorithm could vary the structure, e.g., by adding or deleting parameterized gates. 
We assume that for each number $T$ of parameterized gates, there are $G_T$ different QMLM architectures.

\begin{theorem}[Gate-sharing QMLM with variable structure]\label{thm:gate-sharing-QMLM-variable-structure}
Consider a QMLM with an arbitrary number of parameterized local quantum channels, where for each $T > 0$, we have $G_T$ different QMLM architectures with $T$ parameterized gates. Suppose that after optimizing on the data, the QMLM has $T$ independently parameterized local quantum channels, each repeated at most $M$ times. Then, with high probability over input training data of size $N$, 
\begin{equation}
    \operatorname{gen}(\vec{\alpha}^\ast) \in \mathcal{O}\left( \sqrt{\frac{T \log (MT)}{N}} + \sqrt{\frac{\log(G_T)}{N}} \right).
\end{equation}
\end{theorem}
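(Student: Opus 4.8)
The plan is to reduce to Theorem~\ref{thm:gate-sharing-QMLM} via a union bound over the discrete set of admissible architectures. The first step is to record the gate-sharing bound in an explicit high-probability form: for a \emph{fixed} architecture with $T$ independently parameterized local channels, each reused at most $M$ times, the chaining argument underlying Theorem~\ref{thm:gate-sharing-QMLM} in fact yields, with probability at least $1-\delta$,
\begin{equation}
    \operatorname{gen}(\vec{\alpha}^\ast) \leq C\sqrt{\frac{T\log(MT)+\log(1/\delta)}{N}},
\end{equation}
where the $\log(1/\delta)$ term comes from the concentration (bounded-differences) step that controls the deviation of the empirical process from its expectation, while the $T\log(MT)$ term is the covering-number/entropy contribution. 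I would isolate this statement as the workhorse, keeping the failure probability $\delta$ as a free parameter to be distributed later.

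The second step is the union bound. The architectures are indexed by a pair $(T,g)$ with $T\geq 1$ and $g\in\{1,\dots,G_T\}$, so there are countably many of them. I would allocate to architecture $(T,g)$ a failure budget $\delta_{T,g} = \delta\, p_T / G_T$, where $p_T$ is any summable weight with $\sum_{T\geq 1} p_T = 1$; the Basel choice $p_T = 6/(\pi^2 T^2)$ is convenient. Then $\sum_{T,g}\delta_{T,g} = \delta$, so by the union bound the fixed-architecture estimate holds simultaneously for \emph{all} architectures with probability at least $1-\delta$. In particular it holds for the architecture $(T,g^\ast)$ actually selected by the optimizer, giving
\begin{equation}
    \operatorname{gen}(\vec{\alpha}^\ast) \leq C\sqrt{\frac{T\log(MT)+\log G_T + 2\log T + \log(1/\delta)+\mathcal{O}(1)}{N}}.
\end{equation}

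The final step is simplification. Using $\sqrt{a+b}\leq\sqrt{a}+\sqrt{b}$ repeatedly, I would split off the $\sqrt{\log(G_T)/N}$ term and absorb the remaining pieces: the weighting overhead $2\log T$ is dominated by $T\log(MT)$ (indeed $2\log T\leq T\log T\leq T\log(MT)$ for $T\geq2$, and it vanishes at $T=1$), and the constant $\log(1/\delta)+\mathcal{O}(1)$ contributes only an $\mathcal{O}(1/\sqrt{N})$ term subsumed by the high-probability qualifier. This leaves exactly the claimed $\mathcal{O}\!\left(\sqrt{T\log(MT)/N}+\sqrt{\log(G_T)/N}\right)$.

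The main obstacle I anticipate is handling the union over the \emph{unbounded} range of $T$: because $T$ can be arbitrarily large, a naive uniform split of $\delta$ fails, and the summability of the weights $p_T$ is precisely what makes the countable union legitimate. The accompanying subtlety is to confirm that the price of this weighting---the extra $\log T$ appearing inside the square root for the realized $T$---is genuinely dominated by the leading $T\log(MT)$ term, so that it does not surface in the final bound. A secondary point worth verifying is that the fixed-architecture estimate can indeed be stated with the clean additive $\log(1/\delta)$ dependence required by the union bound; this follows from the bounded-differences concentration step already present in the proof of Theorem~\ref{thm:gate-sharing-QMLM}, so no new ingredient beyond that proof is needed.
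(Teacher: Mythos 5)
Your proposal is correct and follows essentially the same route as the paper: the paper also derives the variable-structure bound by invoking the fixed-architecture gate-sharing bound with failure probability proportional to $(G_T T^2)^{-1}$ for each architecture (i.e., the same summable $1/T^2$ weighting you use), taking a union bound over the $G_T$ structures for each $T$ and then over the countably many values of $T$, and finally absorbing the resulting $\log T$ overhead and $\log(1/\delta)$ terms exactly as you describe.
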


Thus, even if the QMLM can in principle use exponentially many parameterized gates, we can control the generalization error in terms of the number of parameterized gates used in the QMLM after optimization, and the dependence on the number of different architectures is only logarithmic.  This logarithmic dependence is crucial as even in the cases when $G_T$ grows exponentially with $T$, we have $\nicefrac{\log(G_T)}{N}\in\OC(\nicefrac{T}{N})$.

\subsection{Numerical Results}\label{sec:numerics}

In this section we present generalization error results obtained by simulating the following two QML implementations: (1) using a QCNN to classify states belonging to different quantum phases, and (2) training a parameterized quantum circuit to compile a quantum Fourier transform matrix. 

\subsubsection{Phase classification}

\begin{figure*}
    \centering
    \includegraphics[width=1.0\linewidth]{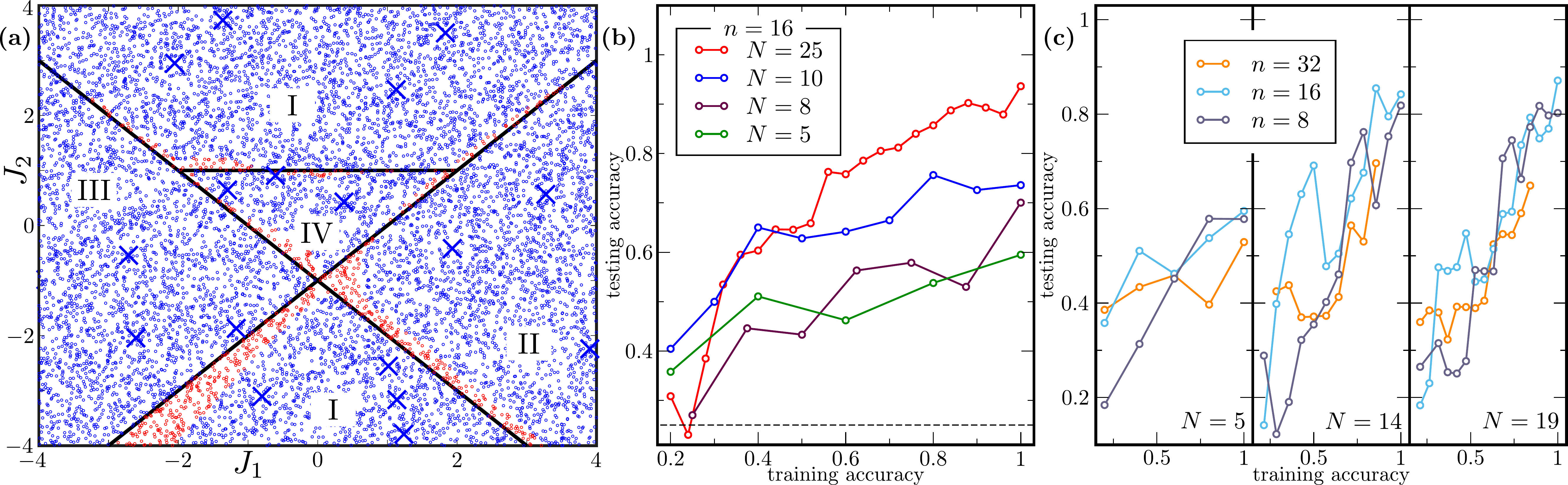}
    \caption{\textbf{Generalization performance of quantum phase recognition.} We employed the QCNN architecture for quantum phase recognition on ground states of the generalized cluster Hamiltonian $H$ of Eq.~\eqref{eq:Hamiltonian}. We evaluated the phase assigned by the QCNN to a point in the $J_1$-$J_2$-plane by sampling $8192$ computational basis measurement outcomes and taking the least frequent outcome as the predicted phase. 
    Panel (a) visualizes the performance of the QCNN for $16$-qubits, trained with $30$ data points, which were labelled according to the analytically determined phase diagram.
    Blue crosses denote training data points (not all $30$ are shown). 
    Blue (red) circles represent correctly (incorrectly) classified points.
    Panel (b) shows that, as the training data size increases, the training accuracy quickly becomes a good predictor for the testing accuracy on $10000$ randomly sampled points, i.e., the dependence of testing accuracy on training accuracy is approximately linear with slope increasing with $N$. The different points in the plot correspond to different parameter settings in the QCNN throughout the optimization. The dotted gray line shows the baseline accuracy of $25\%$ achieved by random guessing.
    Panel (c) shows that the improvement in the slope with growing training data size is similar for different numbers of qubits, reflecting the at-worst polylogarithmic dependence of $N$ on $n$ predicted by our bounds.
    }
    \label{fig:training-versus-testing-qpr}
\end{figure*}

The QCNN architecture introduced in~\cite{cong2019quantum} generalizes the model of (classical) convolutional neural networks with the goal of performing pattern recognition on quantum data. 
It is composed of so-called \emph{convolutional} and \emph{pooling} layers, which alternate. In a convolutional layer, a sequence of translationally invariant parameterized unitaries on neighbouring qubits is applied in parallel, which works as a filter between feature maps in different layers of the QCNN. Then, in the pooling layers, a subset of the qubits are measured to reduce the dimensionality of the state while preserving the relevant features of the data. Conditioned on the corresponding measurement outcomes, translationally invariant parameterized $1$-qubit unitaries are applied. The QCNN architecture has been employed for supervised QML tasks of classification of phases of matter and to devise quantum error correction schemes~\cite{cong2019quantum}. Moreover, QCNNs have been shown not to exhibit barren plateaus, making them a generically trainable QML architecture~\cite{pesah2020absence}.

The action of a QCNN can be considered as mapping an input state $\rho_{\text{in}}$ to an output state $\rho_{\text{out}}$ given as $\rho_{\text{out}}=\mathcal{E}^{\mathrm{QCNN}}_{\vec{\alpha}}(\rho_{\text{in}})$. Then, given $\rho_{\text {out}}$, one measures the expectation value of a task-specific Hermitian operator. 

In our implementation,  we employ a QCNN to classify states belonging to different symmetry protected topological phases. Specifically, we consider the generalized cluster Hamiltonian
\begin{equation}\label{eq:Hamiltonian}
    H=\sum_{j=1}^n \left(Z_j -J_1 X_jX_{j+1}-J_2X_{j-1}Z_jX_{j+1} \right)\,,
\end{equation}
where $Z_i$ ($X_i$) denote the Pauli $z$ ($x$) operator acting on qubit $i$, and where $J_1$ and $J_2$ are tunable coupling coefficients. As proved in~\cite{verresen2017one}, and as schematically shown in Fig.~\ref{fig:training-versus-testing-qpr}, the ground-state phase diagram of the Hamiltonian of Eq.~\eqref{eq:Hamiltonian} has four different phases: symmetry-protected topological (I), ferromagnetic (II), anti-ferromagnetic (III), and trivial (IV). In Section \ref{sec:Methods}, we provide additional details regarding the classical simulation of the ground states of $H$.

By sampling parameters in the $(J_1,J_2)$ plane, we create a training set $\{(\ket{\psi_i},y_i)\}_{i=1}^N$ composed of ground states $\ket{\psi_i}$ of $H$ and their associated labels $y_i$. Here, the labels are  in the form of length-two bit strings, i.e., $y_i\in\{0,1\}^{2}$, where each possible bit string corresponds to a phase that $\ket{\psi_i}$ can belong to. The QCNN maps the $n$-qubit input state $\ket{\psi_i}$ to a $2$-qubit output state. We think of the information about the phase as being encoded into the output state by which of the $4$ computational basis effect operators is assigned the smallest probability. 
Namely, we define the loss function as $\ell(\vec{\alpha};\ket{\psi_i}, y_i )\coloneqq \bra{y_i}\mathcal{E}^{\mathrm{QCNN}}_{\vec{\alpha}}(\ketbra{\psi_i}{\psi_i})\ket{y_i}$. This leads to an empirical risk given by 
\begin{equation} \label{eq:empirical_risk_QCNN}
    \hat{R}_S(\vec{\alpha})  
    = \frac{1}{N} \sum_{i=1}^N \bra{y_i}\mathcal{E}^{\mathrm{QCNN}}_{\vec{\alpha}}(\ketbra{\psi_i}{\psi_i})\ket{y_i} \, .
\end{equation}

In Fig.~\ref{fig:training-versus-testing-qpr}, we visualize the phase classification performance achieved by our QCNN, trained according to this loss function, while additionally taking the number of misclassified points into account. Moreover, we show how the true risk, or rather the test accuracy as proxy for it, correlates well with the achieved training accuracy, already for small training data sizes. 
This is in agreement with our theoretical predictions, discussed in more detail in Appendix \ref{Sct:application-quantum-phase-recognition}, which for QCNNs gives a generalization error bound polylogarithmic in the number of qubits. We note that Refs.~\cite{kottmann2021unsupervised, kottmann2021variational} observed similarly favorable training data requirements for a related task of learning phase diagrams.

\subsubsection{Unitary compiling}\label{SbSct:unitary-compiling}

Compiling is the task of transforming a high-level algorithm into a low-level code that can be implemented on a device. Unitary compiling is a paradigmatic task in the NISQ era where a target unitary is compiled into a gate sequence that complies with NISQ device limitations, e.g., hardware-imposed connectivity and shallow depth to mitigate errors. Unitary compiling is crucial to the quantum computing industry, as it is essentially always performed prior to running an algorithm on a NISQ device, and various companies have their own commercial compilers~\cite{cross2017open, smith2016apractical}. Hence, any ability to accelerate unitary compiling could have industrial impact.

\begin{figure*}[t]
    \centering
    \includegraphics[width=\linewidth]{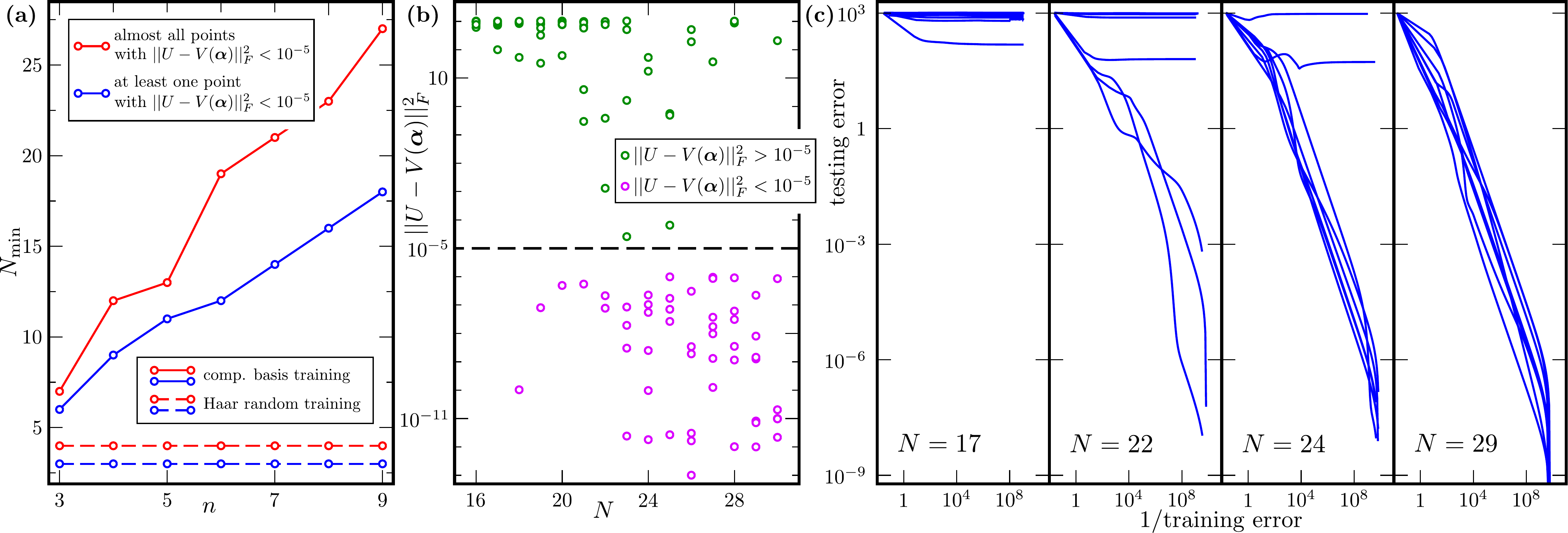}
    \caption{
    \textbf{Generalization performance of variational unitary compiling.} We employed a variable structure QMLM (as discussed near Theorem~\ref{thm:gate-sharing-QMLM-variable-structure}). 
    Panel (a) shows the dependence of $N_\mathrm{min}$, the minimum training data size for accurate compilation, on $n$, the number of qubits. Accurate compilation is defined as achieving $\lVert U-V(\vec{\alpha})\rVert^2_F <10^{-5}$ on $1$ out of $8$ (blue) or on $7$ out of $8$ (red) runs. For training data with random computational basis inputs (solid lines), $N_\mathrm{min}$ scales linearly in $n$. When training on examples with Haar random inputs (dashed lines), $N_\mathrm{min}$ is constant up to system size $n=9$. In Panel (b), for $n=9$ qubits, we plot the prediction error of successfully trained (training cost $<10^{-8}$) runs for $8$ training data sets with $N=16$ to $N=30$ random computational basis inputs. Panel (c) shows the dependence of the testing error on the reciprocal of the training error for different training data sizes, in the case of $9$ qubits. Here, the data consisted of random computational basis states and the corresponding outputs. As $N$ increases, small training error becomes a more reliable predictor for small testing error. 
    Each subplot shows $8$ different training runs, trained on different training data sets.
    }
    \label{fig:CompilingNumerics}
\end{figure*}

Here we consider the task of compiling the unitary $U$ of the $n$-qubit Quantum Fourier Transform (QFT)~\cite{nielsen2000quantum} into a short-depth parameterized quantum circuit $V(\vec{\alpha)}$. For $V(\vec{\alpha})$ we employ the VAns (Variable Ansatz) algorithm~\cite{bilkis2021semi, VAns}, which uses a machine learning protocol to iteratively grow a parameterized quantum circuit by placing and removing gates in a way that empirically leads to lower loss function values. Unlike traditional approaches that  train  just continuous parameters in a fixed structure circuit, VAns also trains discrete parameters, e.g., gate placement or type of gate, to explore the architecture hyperspace. 
In Appendix \ref{Sct:application-unitary-compiling}, we apply our theoretical results in this compiling scenario.

The training set for compilation is of the form $\{\ket{\psi_i},U\ket{\psi_i}\}_{i=1}^N$, consisting of input states $\ket{\psi_i}$ and output states obtained through the action of $U$. The $\ket{\psi_i}$ are drawn independently from an underlying data-generating distribution. In our numerics, we consider three such distributions: (1) random computational basis states, (2) random (non-orthogonal) low-entangled states, and (3) Haar random $n$-qubit states. Note that states in the first two distributions are easy to prepare on a quantum computer, whereas states from the last distribution become costly to prepare as $n$ grows. As the goal is to train $V(\vec{\alpha})$ to match the action of $U$ on the training set, we define the loss function as the squared trace distance between $U\ket{\psi_i}$ and $V(\vec{\alpha})\ket{\psi_i}$, i.e., $\ell(\alpha; \ket{\psi_i}, U\ket{\psi_i})\coloneqq\norm{U\ketbra{\psi_i}{\psi_i}U^\dagger - V(\vec{\alpha})\ketbra{\psi_i}{\psi_i}V(\vec{\alpha})^\dagger}_1^2$. This leads to the empirical risk
\small
\begin{equation} \label{eq:empirical_risk_unitary_comp}
    \hat{R}_S(\vec{\alpha})    
    =\frac{1}{N}\sum_{i=1}^N \norm{U\ketbra{\psi_i}{\psi_i}U^\dagger - V(\vec{\alpha})\ketbra{\psi_i}{\psi_i}V(\vec{\alpha})^\dagger}_1^2\,,
\end{equation}
\normalsize
where $\norm{\cdot}_1$ indicates the trace norm.

Fig.~\ref{fig:CompilingNumerics} shows our numerical results. As predicted by our analytical results, we can, with high success probability, accurately compile the QFT when training on a data set of size polynomial in the number of qubits. 
Our numerical investigation shows a linear scaling of the training requirements when training on random computational basis states. This better than the quadratic scaling implied by a direct application of our theory, which holds for any arbitrary data-generating distribution. Approximate implementations of QFT with a reduced number of gates~\cite{nam2020approximate}, combined with our results, could help to further study this apparent gap theoretically. When training on Haar random states, our numerics suggest that an even smaller number of training data points is sufficient for good generalization: Up to $n=9$ qubits, we generalize well from a constant number of training data points, independent of the system size.

Even more striking are our results when initializing close to the solution. In this case, as shown in Fig.~\ref{fig:CompilingNearSolution}, we find that two training data points suffice to obtain accurate generalization, which holds even up to a problem size of $40$ qubits. 
Our theoretical results in Theorem~\ref{thm:gate-sharing-QMLM-optimization} do predict reduced training requirements when initializing near the solution. Hence, the numerics are in agreement with the theory, although they paint an even more optimistic picture and suggest that further investigation is needed to understand why the training data requirements are so low. 
While the assumption of initialization near the solution is only viable assuming additional prior knowledge, it could be justified in certain scenarios. For example, if the unitaries to be compiled depend on a parameter, e.g., time, and if we have already compiled the unitary for one parameter setting, we might use this as initialization for unitaries with a similar parameter.

\section{Discussion}\label{sec:Discussion}

We conclude by discussing the impact of our work on specific applications, a comparison to prior work, the interpretation of our results from the perspective of quantum advantage, and some open questions.

\begin{figure}[t]
    \includegraphics[width=\columnwidth]{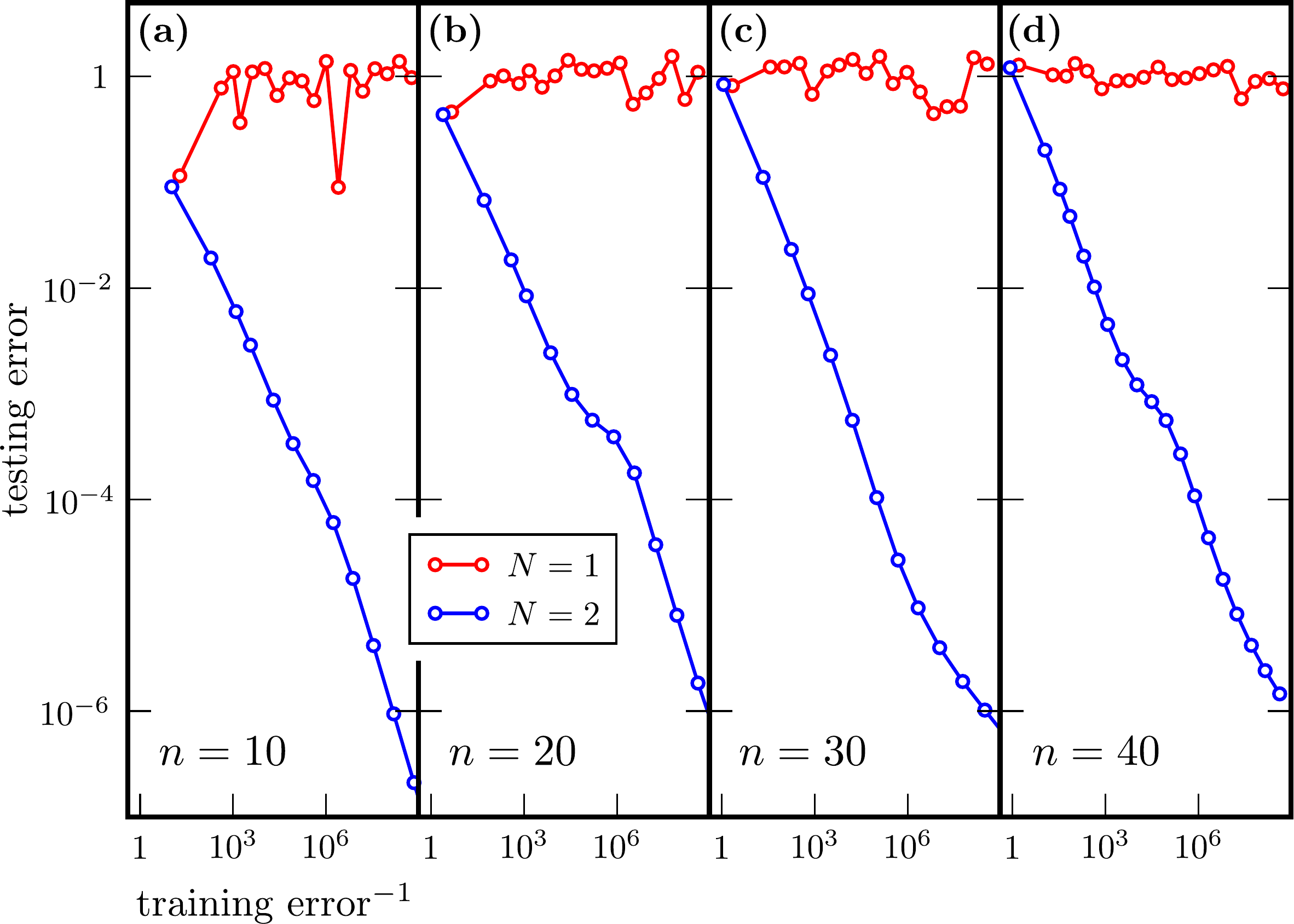}
    \caption{
    \textbf{Performance of variational unitary compiling when initializing near the solution.} Each panel shows the results of a single randomly initialized training run, where we used randomly drawn low-entangled states for training. The testing error on $20$ test states, which we allow to be more strongly entangled than the states used during training, is plotted versus the reciprocal of the training error for training data sizes $N=1,2$, for different system sizes $n$. 
    A training data set of size $N=1$ is not sufficient to guarantee good generalization: Even with decreasing training error, the testing error remains large. In contrast, assuming favorably initialized training, $N=2$ training data points suffice for good generalization, even for up to $n=40$ qubits. 
    }
    \label{fig:CompilingNearSolution}
\end{figure}

\subsection{Impact on specific applications}


Quantum phase classification is an exciting application of QML, to which Ref.~\cite{cong2019quantum} has successfully applied QCNNs. However, Ref.~\cite{cong2019quantum} only provided a heuristic explanation for the good generalization performance of QCNNs. Here, we have presented a rigorous theory that encompasses QCNNs and explains their performance, and we have confirmed it numerically for a fairly complicated phase diagram and a wide range of system sizes. In particular, our analysis allows us to go beyond the specific model of QCNNs and extract general principles for how to ensure good generalization. As generating training data for this problem asks an experimenter to prepare a variety of states from different phases of matter, which will require careful tuning of different parameters in the underlying Hamiltonian, good generalization guarantees for small training data sizes are crucial to allow for the implementation of phase classification through QML in actual physical experiments.


Several successful protocols for unitary compiling make use of training data \cite{qFactor,cincio2018learning,cincio2021machine}. However, prior work has relied on training data sets whose size scaled exponentially with the number of qubits. This scaling is problematic, both because it suggests a similarly bad scaling of the computational complexity of processing the data and because generating training data can be expensive in actual physical implementations. Our generalization bounds provide theoretical guarantees on the performance that unitary compiling with only polynomial-size training data can achieve, for the relevant case of efficiently implementable target unitaries. As we have numerically demonstrated in the case of the Quantum Fourier Transform, this significant reduction in training data size makes unitary compiling scalable beyond what previous approaches could achieve. Moreover, our results provide new insight into why the VAns algorithm~\cite{bilkis2021semi} is successful for unitary compiling. We believe that the QML perspective on unitary compiling advocated for in this work will lead to new and improved ansätze, which could scale to even larger systems.


Recent methods for variational dynamical simulation rely on quantum compiling to compile a Trotterized unitary into a structured ansatz with the form of a diagonalization~\cite{cirstoiu2020variational,commeau2020variational,gibbs2021long,geller2021experimental}. This technique allows for quantum simulations of times longer than an iterated Trotterization because parameters in the diagonalization may be changed by hand to provide longer-time simulations with a fixed depth circuit. We expect the quantum compiling results presented here to carry over to this application. This will allow these variational quantum simulation methods to use fewer training resources (either input-output pairs, or entangling auxiliary systems), yet still achieve good generalization and scalability.


Discovering quantum error correcting codes can be viewed as an optimization problem~\cite{fletcher2008channel,fletcher2008structured,kosut2008robust,kosut2009quantum,taghavi2010channel,johnson2017qvector,cong2019quantum}. Furthermore, it can be thought of as a machine learning problem, since computing the average fidelity of the code involves training data (e.g., chosen from a 2-design~\cite{johnson2017qvector}). Significant effort has been made to solve this problem on classical computers~\cite{fletcher2008channel,fletcher2008structured,kosut2008robust,kosut2009quantum,taghavi2010channel}. Such approaches can benefit from our generalization bounds, potentially leading to faster classical discovery of quantum codes. More recently, it was proposed to use near-term quantum computers to find such codes~\cite{johnson2017qvector,cong2019quantum}. Again our bounds imply good generalization performance with small training data for this application, especially for QCNNs~\cite{cong2019quantum}, due to their logarithmic number of parameters.


Finally, autoencoders and generative adversarial networks (GANs) have recently been generalized to the quantum setting~\cite{romero2017quantum,lloyd2018quantumgenerative,dallaire2018quantum,romero2021variational}. Both employ training data, and hence our generalization bounds provide quantitative guidance for how much training data to employ in these applications. Moreover, our results can provide guidance for ansatz design. While there is no standard ansatz yet for quantum autoencoders or quantum GANs, ansätze with a minimal number of parameters will likely lead to the best generalization performance.


\subsection{Related work on generalization}

Some prior works have studied the generalization capabilities of quantum models, among them the classical learning-theoretic approaches of \cite{caro2020pseudo, bu2021onthestatistical, bu2021effects, bu2021rademacher, gyurik2021structural, caro2021encodingdependent, chen2021expressibility, popescu2021learning, cai2022sample}; the more geometric perspective of \cite{abbas2020power, huang2021power}; and the information-theoretic technique of \cite{huang2021information, banchi2021generalization}.
Independently of this work, Ref.~\cite{du2021efficient} also investigated covering numbers in QMLMs. However our bounds are stronger, significantly more general, and broader in scope. We give a detailed comparison of our results to related work in Appendix~\ref{sec:related-work}.

\subsection{Quantum advantage and future outlook}

Our results do not prove a quantum advantage of quantum over classical machine learning. However, generalization bounds for QMLMs are necessary to understand their potential for quantum advantage. Namely, QMLMs can outperform classical methods, assuming both achieve small training error, only in scenarios in which QMLMs generalize well, but classical ML methods do not. We therefore consider our results a guide in the search for quantum advantage of QML: We need to identify a task in which QMLMs with few trainable gates achieve small training error, but classical models need substantially higher model complexity to achieve the same goal. Then, our bounds guarantee that the QMLM performs well also on unseen data, but we expect the classical model to generalize poorly due to the high model complexity.


We conclude with some open questions. For QMLMs with exponentially many independently trainable gates, our generalization error bounds scale exponentially with~$n$, and hence we do not make non-trivial claims about this regime. However, this does not yet imply that exponential-size QMLMs have bad generalization behavior. Whether and under which circumstances this is indeed the case is an interesting open question (e.g., see~\cite{huang2021power, banchi2021generalization} for some initial results). 
More generally, one can ask: Under what circumstances will a QMLM, even one of polynomial size, outperform our general bound. 
For example, if we have further prior knowledge about the loss, arising from specific target applications, it might be possible to use this information to tighten our generalization bounds. Moreover, as our generalization bounds are valid for arbitrary data-generating distributions, they may be overly pessimistic for favorable distributions. 
Concretely, in our numerical experiments for unitary compiling, highly entangled states were more favorable than especially efficiently preparable states from the perspective of generalization.
It may thus be interesting to investigate distribution-specific tightenings of our results. Finally, it may be fruitful to combine the generalization bounds for QMLMs studied in this work and the effect of data encodings in~\cite{caro2021encodingdependent} to yield a better picture on generalization in quantum machine learning.

\section{Methods}\label{sec:Methods}


This section gives an overview over our techniques. First, we outline the proof strategy that leads to the different generalization bounds stated above. Second, we present more details about our numerical investigations.

\subsection{Analytical methods}

An established approach to generalization bounds in classical statistical learning theory is to bound a complexity measure for the class under consideration. Metric entropies, i.e., logarithms of covering numbers, quantify complexity in exactly the way needed for generalization bounds, as one can show using the chaining technique from the theory of random processes~\cite{mohri2018foundations, vershynin2018highdimensional}. Therefore, a high level view of our proof strategy is: We establish novel metric entropy bounds for QMLMs and then combine these with known generalization results from classical learning theory. The strongest form of our generalization bounds is the following.

\begin{theorem}[Mother theorem]\label{thm:mother-main-text}
Consider a QMLM with an arbitrary number of parameterized local quantum channels, where for each $T > 0$, we have $G_T$ different QMLM architectures with $T$ trainable local gates. Suppose that after optimizing on the training data, the QMLM has $T$ independently parameterized local quantum channels, where the $t^{\textrm{th}}$ channel is reused at most $M$ times and is changed by $\Delta_t$ during the optimization. Without loss of generality, assume $\Delta_1 \geq \ldots \geq \Delta_T$. Then with high probability over input training data of size $N$, we have
\small
\begin{equation}
    \operatorname{gen}(\vec{\alpha}^\ast)
    \in \mathcal{O}\left( \min\limits_{K=0,\ldots,T}f(K) + \sqrt{\frac{\log(G_T)}{N}} \right),
\end{equation}
\normalsize
where $f(K)\coloneqq \sqrt{\frac{K \log (MT)}{N}} + \sum\limits_{k=K+1}^T M\Delta_k$ .
\end{theorem}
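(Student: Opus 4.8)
The plan is to follow the two-stage strategy flagged in the Methods overview and to aim directly at this most general statement, recovering Theorems~\ref{thm:simpleQMLM}--\ref{thm:gate-sharing-QMLM-variable-structure} by specialization. First I would bound the metric entropy (log-covering number) of the class of loss functions that the QMLM can realize; then I would feed this into a chaining bound for the empirical process $\{R(\vec\alpha)-\hat R_S(\vec\alpha)\}_{\vec\alpha}$. The key observation that makes the loss amenable is that it is linear in the output state, so that $|\ell(\vec\alpha;x,y)-\ell(\vec\beta;x,y)|\le C_{\mathrm{loss}}\lVert(\EC^{\mathrm{QMLM}}_{\vec\alpha}-\EC^{\mathrm{QMLM}}_{\vec\beta})\otimes\operatorname{id}\rVert_\diamond$, i.e.\ a diamond-norm net on channels yields a uniform net on loss functions.

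For the metric-entropy step I would write the realized channel for a fixed architecture as a composition of its (at most $MT$) local gate slots interleaved with the fixed global gates. Since a $\kappa$-local channel lives in a set of $n$-independent dimension $d_\kappa$, its $\eta$-covering number in diamond norm is $(C/\eta)^{d_\kappa}$. Using subadditivity of the diamond norm under composition, a perturbation of $\eta$ in each of the $T$ independent gates changes the overall channel by at most $MT\eta$; choosing $\eta=\epsilon/(MT)$ gives an $\epsilon$-net of cardinality $(CMT/\epsilon)^{d_\kappa T}$, hence $\log\mathcal N(\epsilon)\lesssim d_\kappa\,T\log(MT/\epsilon)$. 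Plugging this into Dudley's entropy integral bounds the empirical Rademacher complexity (and thus, via symmetrization and a McDiarmid/bounded-differences concentration step, the generalization error with high probability) by $\tfrac{C_{\mathrm{loss}}}{\sqrt N}\int_0^{D}\sqrt{d_\kappa\,T\log(MT/\epsilon)}\,d\epsilon\lesssim\sqrt{T\log(MT)/N}$. This is exactly Theorem~\ref{thm:gate-sharing-QMLM}, and setting $M=1$ recovers Theorem~\ref{thm:simpleQMLM}.

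The optimization refinement I would obtain by a truncation argument. For each $K$ introduce a comparison parameter $\vec\alpha^{(K)}$ built from $\vec\alpha^\ast$ by resetting the $T-K$ least-changed gates to their initial values. Two ingredients then combine: the approximation error $|R(\vec\alpha^\ast)-R(\vec\alpha^{(K)})|$ (and its empirical analog) is at most $C_{\mathrm{loss}}\,M\sum_{k>K}\Delta_k$ by the same telescoping diamond-norm estimate, while $\vec\alpha^{(K)}$ lies in a subclass with only $K$ free gates whose generalization error is controlled as $\sqrt{K\log(MT)/N}$ by the previous paragraph. Since the identity of the $K$ retained gates is data-dependent, I would union-bound over the $\binom{T}{K}$ choices, adding $\log\binom{T}{K}\le K\log T$ to the metric entropy, which is absorbed into the $K\log(MT)$ term. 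Decomposing $\operatorname{gen}(\vec\alpha^\ast)$ through $\vec\alpha^{(K)}$ and optimizing over $K$ produces $\min_K f(K)$, which is Theorem~\ref{thm:gate-sharing-QMLM-optimization}. Finally, a union bound over the $G_T$ architectures with $T$ gates contributes an additive $\log G_T$ to the metric entropy, i.e.\ an additive $\sqrt{\log(G_T)/N}$ term (Theorem~\ref{thm:gate-sharing-QMLM-variable-structure}), completing the Mother theorem.

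I expect the main obstacle to be the covering-number estimate combined with the reuse bookkeeping and the data-dependent subset selection, rather than the chaining itself. The delicate point is that gate reuse must enter only through the covering \emph{resolution} $\eta=\epsilon/(MT)$, so that $M$ appears logarithmically (via $\log(MT)$) and not polynomially in the final bound; this hinges on the diamond-norm composition estimate telescoping cleanly so that resetting the small-change gates costs exactly $M\sum_{k>K}\Delta_k$ and no more. One must also verify that the union bounds over subsets, over $K=0,\dots,T$, and over architectures all fold into the stated metric-entropy terms without deteriorating the rate. Once these pieces are in place, the passage from metric entropy to a high-probability generalization bound is standard.
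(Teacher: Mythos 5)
Your overall strategy coincides with the paper's: diamond-norm covering nets for the channel class built from local nets at resolution $\varepsilon/(MT)$ (so that reuse enters only through the logarithm), a Lipschitz transfer from channels to losses via $\lvert\ell(\vec{\alpha};x,y)-\ell(\vec{\beta};x,y)\rvert\le C_{\mathrm{loss}}\lVert\mathcal{E}^{\mathrm{QMLM}}_{\vec{\alpha}}-\mathcal{E}^{\mathrm{QMLM}}_{\vec{\beta}}\rVert_\diamond$, symmetrization plus McDiarmid, a Dudley/chaining bound, and union bounds over the $\binom{T}{K}$ subsets and over $K$. Your treatment of the optimization refinement is a mild variant: you decompose $\operatorname{gen}(\vec{\alpha}^\ast)$ through an explicitly truncated parameter $\vec{\alpha}^{(K)}$, bounding the approximation error of both the true and the empirical risk by $C_{\mathrm{loss}}M\sum_{k>K}\Delta_k$ and applying the fixed-subclass bound to the $K$ free gates, whereas the paper absorbs the same truncation into the covering net itself (an $\bigl(\varepsilon+\sum_{k>K}M\Delta_k\bigr)$-net whose elements freeze the small-change gates at their data-independent reference maps) and truncates the chaining integral at scale $\sum_{k>K}M\Delta_k$. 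Both mechanisms produce $f(K)$; yours is the more textbook approximation--estimation split, the paper's keeps everything inside a single empirical-process bound.

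The one genuine gap is the final union bound over architectures. Because the number of gates $T=T(S)$ and the selected architecture are outputs of the data-dependent optimization, you cannot condition on the realized $T$ and union bound only over the $G_T$ architectures with that many gates: the collection of events that must be controlled simultaneously is indexed by all $\tau\in\mathbb{N}$ and all $G_\tau$ architectures for each $\tau$, which is countably infinite. The fix, and what the paper does, is a weighted union bound: invoke the fixed-architecture bound at confidence level $\delta/(2G_\tau\tau^2)$ for each architecture, so the total failure probability is at most $\sum_\tau G_\tau\cdot\delta/(2G_\tau\tau^2)\le\delta$; evaluating the resulting bound at the realized architecture then yields exactly the additive $\sqrt{\log(G_T)/N}$ term (with the extra $\sqrt{\log(T^2)/N}$ and $\sqrt{\log(1/\delta)/N}$ contributions absorbed). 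Without some such summable allocation of failure probabilities, your last step does not give a valid high-probability statement for a data-dependent $T$. The analogous issue does not arise for your union bounds over $K$ and over subsets, since those index sets are finite once the architecture is fixed.
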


We give a detailed proof in Appendix \ref{Sct:analytical-results-details}. 
There, we also describe a variant in case the loss function cannot be evaluated exactly, but only estimated statistically. 
Here, we present only a sketch of how to prove Theorem \ref{thm:mother-main-text}. 

Before the proof sketch, however, we discuss how Theorem \ref{thm:mother-main-text} relates to the generalization bounds stated above. In particular, we demonstrate how to obtain Theorems \ref{thm:simpleQMLM}, \ref{thm:gate-sharing-QMLM}, \ref{thm:gate-sharing-QMLM-optimization}, and \ref{thm:gate-sharing-QMLM-variable-structure} as special cases of Theorem \ref{thm:mother-main-text}.

In the scenario of Theorem \ref{thm:simpleQMLM}, the QMLM architecture is fixed in advance, each trainable map is only used once, and we do not take properties of the optimization procedure into account. In the language of Theorem \ref{thm:mother-main-text}, this means: There exists a single $T>0$ with $G_T=1$ and we have $G_{\tilde{T}}=0$ for all $\tilde{T}\neq T$. Also, $M=1$. And instead of taking the minimum over $K=1,\ldots,T$, we consider the bound for $K=T$. Plugging this into the generalization bound of Theorem \ref{thm:mother-main-text}, we recover Theorem \ref{thm:simpleQMLM}.

Similarly, Theorem \ref{thm:mother-main-text} implies Theorems \ref{thm:gate-sharing-QMLM}, \ref{thm:gate-sharing-QMLM-optimization}, and \ref{thm:gate-sharing-QMLM-variable-structure}. Namely, if we take $G_T=1$ and $G_{\tilde{T}}=0$ for all $\tilde{T}\neq T$, and evaluate the bound for $K=T$, we recover Theorem~\ref{thm:gate-sharing-QMLM}. Choosing $G_T=1$ and $G_{\tilde{T}}=0$ for all $\tilde{T}\neq T$, the bound of Theorem~\ref{thm:mother-main-text} becomes that of Theorem~\ref{thm:gate-sharing-QMLM-optimization}. Finally, we can obtain Theorem~\ref{thm:gate-sharing-QMLM-variable-structure} by bounding the minimum in Theorem~\ref{thm:mother-main-text} in terms of the expression evaluated at $K=T$.

Now that we have established that Theorem \ref{thm:mother-main-text} indeed implies generalization bounds for all the different scenarios depicted in Fig.~\ref{fig:QMLM-settings}, we outline its proof. The first central ingredient to our reasoning are metric entropy bounds for the class of all $n$-qubit CPTP maps that a QMLM as described in Theorem \ref{thm:mother-main-text} can implement, where the distance between such maps is measured in terms of the diamond norm. Note: The trivial metric entropy bound obtained by considering this class of maps as compact subset of an Euclidean space of dimension exponential in $n$ is not sufficient for our purposes since it scales exponentially in $n$. Instead, we exploit the layer structure of QMLMs to obtain a better bound. More precisely, we show: If we fix a QMLM architecture with $T$ trainable $2$-qubit maps and a number of maps $0\leq K\leq T$, and we assume (data-dependent) optimization distances $\Delta_1\geq\ldots\geq \Delta_T$, then it suffices to take $(\nicefrac{\varepsilon}{KM})$-covering nets for each of the sets of admissible $2$-qubit CPTP maps for the first $K$ trainable maps to obtain a $(\varepsilon + \sum_{k=K+1}^T M\Delta_k)$-covering net for the whole QMLM. The cardinality of a covering net built in this way, crucially, is independent of $n$, but depends instead on $K$, $M$, and $T$. In detail, its logarithm can effectively be bounded as $\in \mathcal{O}\left(K\log\left(\nicefrac{MT}{\varepsilon}\right)\right)$. This argument directly extends from the $2$-local to the $\kappa$-local case, as we describe in Appendix~\ref{SbSct:CoveringNumberBounds}, Remark~\ref{Rmk:k-local}.

Now we employ the second core ingredient of our proof strategy. Namely, we combine a known upper bound on the generalization error in terms of the expected supremum of a certain random process with the so-called chaining technique. This leads to a generalization error bound in terms of a metric entropy integral. As we need a non-standard version of this bound, we provide a complete derivation for this strengthened form. This then tells us that, for each fixed $T$, $M$, $K$, and $\Delta_1\geq\ldots\geq \Delta_T$, using the covering net constructed above, we can bound the generalization error as $\operatorname{gen}(\vec{\alpha}^\ast)\in \mathcal{O}\left( \sqrt{\nicefrac{K\log(MT)}{N}} + \sum_{k=K+1}^T M\Delta_k \right)$, with high probability.

The last step of the proof consists of two applications of the union bound. The first instance is a union bound over the possible values of $K$. This leads to a generalization error bound in which we minimize over $K=0,\ldots,T$. So far, however, the bound still applies only to any QMLM with fixed architecture. We extend it to variable QMLM architectures by taking a second union bound over all admissible numbers of trainable gates $T$ and the corresponding $G_T$ architectures. As this is, in general, a union bound over countably many events, we have to ensure that the corresponding failure probabilities are summable. Thus, we invoke our fixed-architecture generalization error bound for a success probability that is proportional to $(G_T T^2)^{-1}$. In that way, the union bound over all possible architectures yields the logarithmic dependence on $G_T$ in the final bound and completes the proof of Theorem \ref{thm:mother-main-text}.

\subsection{Numerical methods}

This section discusses numerical methods used throughout the paper.
The subsections give details on computational techniques applied to phase classification of the cluster Hamiltonian in Eq.~\eqref{eq:Hamiltonian} and Quantum Fourier Transform compilation.

\subsubsection{Phase classification}

The training and testing sets consist of ground states $\ket{\psi_i}$ of the cluster Hamiltonian in Eq.~\eqref{eq:Hamiltonian}, computed for different coupling strengths $(J_1,J_2)$. The states $\ket{\psi_i}$ were obtained with the translation invariant Density Matrix Renormalization Group~\cite{white1992density}. The states in the training set (represented by blue crosses in Fig.~\ref{fig:training-versus-testing-qpr}(a)) are chosen to be away from phase transition lines, so accurate description of the ground states is already achieved at small bond dimension $\chi$. That value determines the cost of further computation involving the states $\ket{\psi_i}$ and we keep it small for efficient simulation.

We use Matrix Product State techniques~\cite{orus2014practical} to compute and optimize the empirical risk in Eq.~\eqref{eq:empirical_risk_QCNN}. The main part of that calculation is the simulation of the action of the QCNN $\mathcal{E}^{\mathrm{QCNN}}_{\vec{\alpha}}$ on a given ground state $\ket{\psi_i}$. The map $\mathcal{E}^{\mathrm{QCNN}}_{\vec{\alpha}}$ consists of alternating convolutional and pooling layers. In our implementation the layers are translationally invariant and are represented by parameterized two-qubit gates. The action of a convolutional layer on an MPS amounts to updating two nearest neighbor MPS tensors in a way similar to the time-evolving block decimation algorithm~\cite{vidal2007classical}. The pooling layer is simulated in two steps. First, we simulate the action of all two-qubit gates on an MPS. This is analogous to the action of a convolutional layer, but performed on a different pair of nearest neighbor MPS tensors. This step is followed by a measurement of half of the qubits. We use the fact that the MPS can be written as a unitary tensor network and hence allows for perfect sampling techniques~\cite{ferris2012perfect}. The measurement step results in a reduction of the system size by a factor of two.

We repeat the application of convolutional and pooling layers using the MPS as described above until the system size becomes small enough to allow for an exact description. A~few final layers are simulated in a standard way and the empirical risk is given by a two-qubit measurement according to the label $y_i$, as in Eq.~\eqref{eq:empirical_risk_QCNN}. The empirical risk is optimized with the Simultaneous Perturbation Stochastic Approximation algorithm~\cite{spall1998overview}. We grow the number of shots used in pooling layer measurements as the empirical risk is minimized. This results in a shot-frugal optimization~\cite{kubler2020adaptive}, as one can control the accuracy of the gradient based on the current optimization landscape. 

\subsubsection{Unitary compiling} 

In Section~\ref{sec:numerics}, we show that the task of unitary compilation can be translated into minimization of the empirical risk $\hat{R}_S(\vec{\alpha})$ defined in Eq.~\eqref{eq:empirical_risk_unitary_comp}. Here, $\vec{\alpha} = (\vec{\theta},\vec{k})$ denotes a set of parameters that specifies a trainable unitary $V(\vec{\alpha})$. The optimization is performed in the space of all shallow circuits. It has discrete and continuous components. The discrete parameters $\vec{k}$ control the circuit layout, that is, the placement of all gates used in the circuit. 
Those gates are described by the continuous parameters $\vec{\theta}$. The optimization $\min_{\vec{\alpha}} \hat{R}_S(\vec{\alpha})$ is performed with the recently introduced VAns algorithm~\cite{bilkis2021semi, VAns}. The unitary $V(\vec{\alpha})$ is initialized with a circuit that consists of a few randomly placed gates. In subsequent iterations, VAns modifies the structure parameter $\vec{k}$ according to certain rules that involve randomly placing a resolution of the identity and removing gates that do not significantly contribute to the minimization of the empirical risk $\hat{R}_S(\vec{\alpha})$. 
A modified~\cite{qfactornote} qFactor algorithm~\cite{qFactor} is used to optimize over continuous parameters $\vec{\theta}$ for fixed $\vec{k}$. This optimization is performed after each update to the structure parameter $\vec{k}$. In subsequent iterations, VAns makes a probabilistic decision whether the new set of parameters $\vec{\alpha'}$ is kept or rejected. This decision is based on the change in empirical risk $\hat{R}_S(\vec{\alpha'}) - \hat{R}_S(\vec{\alpha})$, an artificial temperature $T$, and a factor $\Lambda$ that sets the penalty for growing the circuit too quickly. To that end, we employ a simulated annealing technique, gradually decreasing $T$ and $\Lambda$, and repeat the iterations described above until $\hat{R}_S(\vec{\alpha})$ reaches a sufficiently small value. 

Let us now discuss the methods used to optimize the empirical risk when $V(\vec{\alpha})$ is initialized close to the solution. Here, we start with a textbook circuit for performing the QFT and modify it in the following way. First, the circuit is rewritten such that it consists of two-qubit gates only. Next, each two-qubit gate $u$ is replaced with $u' = u e^{i \delta h}$, where $h$ is a random Hermitian matrix and $\delta$ is chosen such that $|| u - u' || = \epsilon$ for an initially specified $\epsilon$. The results presented in Section~\ref{sec:numerics} are obtained with $\epsilon = 0.1$. The perturbation considered here does not affect the circuit layout and hence the optimization over continuous parameters $\vec{\theta}$ is sufficient to minimize the empirical risk $\hat{R}_S(\vec{\alpha})$. We use qFactor to perform that optimization.

The input states $\ket{\psi_i}$ in the training set $\{\ket{\psi_i},U_\mathrm{QFT} \ket{\psi_i}\}_{i=1}^N$ are random MPSs of bond dimension $\chi=2$. The QFT is efficiently simulable~\cite{browne2007efficient} for such input states, which means that $U_\mathrm{QFT} \ket{\psi_i}$ admits an efficient MPS description. Indeed, we find that a bond dimension $\chi < 20$ is sufficient to accurately describe $U_\mathrm{QFT} \ket{\psi_i}$. In summary, the use of MPS techniques allows us to construct the training set efficiently. Note that the states $V(\vec{\alpha})\ket{\psi_i}$ are in general more entangled than $U_\mathrm{QFT} \ket{\psi_i}$, especially at the beginning of the optimization. Because of that, we truncate the evolved MPS during the optimization. We find that a maximal allowed bond dimension of $100$ is large enough to perform stable, successful minimization of the empirical risk with qFactor. The testing is performed  with 20 randomly chosen initial states. We test with bond dimension $\chi = 10$ MPSs, so the testing is done with more strongly entangled states than the training. Additionally, for system sizes up to $16$ qubits, we verify that the trained unitary $V$ is close (in the trace norm) to $U_\mathrm{QFT}$, when training is performed with at least two states. 







\bibliography{quantum.bib}

\section*{Acknowledgements}

MCC was supported by the TopMath Graduate Center of the TUM Graduate School at the Technical University of Munich, Germany, the TopMath Program at the Elite Network of Bavaria, and by a doctoral scholarship of the German Academic Scholarship Foundation (Studienstiftung des deutschen Volkes). HH is supported by the J. Yang \& Family Foundation. MC, KS, and LC were initially supported by the LANL LDRD program under project number 20190065DR. MC was also supported by the Center for Nonlinear Studies at LANL. KS acknowledges support the Department of Defense. ATS and PJC acknowledge initial support from the LANL ASC Beyond Moore's Law project. PJC and LC were also supported by the U.S. Department of Energy (DOE), Office of Science, Office of Advanced Scientific Computing Research, under the Accelerated Research in Quantum Computing (ARQC) program. LC also acknowledges support from LANL LDRD program under project number 20200022DR. ATS was additionally supported by the U.S. Department of Energy, Office of Science, National Quantum Information Science Research Centers, Quantum Science Center. This research used resources provided by the Los Alamos National Laboratory Institutional Computing Program, which is supported by the U.S. Department of Energy National Nuclear Security Administration under Contract No. 89233218CNA000001.

\vspace{0.15in}

\section*{Author contributions}

The project was conceived by M.C.C., H.-Y.H., and P.J.C.
The manuscript was written by M.C.C., H.-Y.H., M.C., K.S., A.S., L.C., and P.J.C.
Theoretical results were proved by M.C.C.~and H.-Y.H.
The applications were conceived by M.C.C., H.-Y.H., M.C., K.S., A.S., L.C., and P.J.C.
Numerical implementations were performed by M.C. and L.C.


\onecolumngrid

\appendix
\appendixtitleon
\renewcommand{\thesubsection}{\arabic{subsection}.}
\renewcommand{\thesubsubsection}{\arabic{subsubsection}.}

\setcounter{theorem}{0}
\setcounter{corollary}{0}
\setcounter{lemma}{0}
\setcounter{proposition}{0}
\setcounter{definition}{0}
\setcounter{remark}{0}
\setcounter{example}{0}

\begin{appendices}

\newpage

\setcounter{page}{1}
\renewcommand\thefigure{\thesection\arabic{figure}}
\setcounter{figure}{0} 

\begin{center}
\large{ Supplementary Information for \\ ``Generalization in quantum machine learning from few training data''
}
\end{center}

\section{Related Work}\label{sec:related-work}

\subsection{Related Work on Generalization Bounds for Quantum Machine Learning}

In statistical learning theory, a variety of techniques for obtaining generalization bounds are known. The classical approach is based on complexity measures for the class of functions describing the machine learning model (MLM) under consideration. Among these complexity measures, the VC-dimension \cite{vapnik1971uniform}, the pseudo-dimension \cite{pollard1984convergence}, the Rademacher complexity \cite{gine1984some, bartlett2002rademacher}, and covering numbers (and the related metric entropies)~\cite{dudley1999uniform} are particularly well known. More recently, different approaches that take properties of the learning algorithm into account have been investigated, such as stability (introduced by \cite{bousquet2002stability}), differential privacy (going back to \cite{dwork2006calibrating}), sample compression (due to \cite{littlestone1986relating}), and the PAC-Bayesian framework (described in \cite{mcallester1999some}).
The theory of generalization for quantum machine learning (QML) is less developed. Nevertheless, there has been some prior work, of which we now give an overview.

Ref.~\cite{caro2020pseudo} proved bounds on the pseudo-dimension of quantum circuits in which the local unitaries can be varied. In particular, these pseudo-dimension bounds imply generalization bounds for learning polynomial-depth unitary quantum circuits from data. While the data encoding considered in Ref.~\cite{caro2020pseudo} was a simple product encoding, this can be understood as an early investigation of the generalization behavior of variational quantum circuits. In particular, the techniques of Ref.~\cite{caro2020pseudo} can also be applied for more general quantum data encodings. Ref.~\cite{popescu2021learning} has recently extended the generalization guarantees of Ref.~\cite{caro2020pseudo} from the realizable to the agnostic setting, using covering number arguments. We note that all our generalization bounds apply to the agnostic setting, but for more general QMLMs than considered in~\cite{popescu2021learning}.

Ref.~\cite{abbas2020power} suggested the so-called effective dimension, derived from the (empirical) Fisher information matrix, as a complexity measure for the parameter space of a QMLM. In particular, Ref.~\cite{abbas2020power} showed how to derive generalization bounds from bounds on the effective dimension and investigated this complexity measure numerically for different QMLMs. Contrary to the conclusions drawn in Ref.~\cite{abbas2020power}, the recommendations for QMLMs which we deduce from our generalization bounds are not unequivocably in favour of higher expressivity. Instead, we emphasize the ability to fit training data and the ability to generalize have to be balanced carefully. See Section \ref{sec:Discussion} for a discussion of implications of our results for a potential quantum advantage in QML.

Refs.~\cite{bu2021onthestatistical, bu2021effects, bu2021rademacher} studied the Rademacher complexity of parameterized quantum circuits and thus QMLMs. They proved bounds on this complexity measure that depend on the size and depth of the circuit as well as on a measure of magic in the circuit. Ref.~\cite{bu2021effects} provides a resource-theoretic perspective on the Rademacher complexity of a quantum circuit and Ref.~\cite{bu2021rademacher} investigated the effects of noise in the circuit.

We mention one more related work that approaches generalization in QML via complexity measures. Ref.~\cite{du2021efficient} provides bounds on covering numbers of QMLMs and, using these, deduces generalization bounds.
We have developed our approach independently from Ref.~\cite{du2021efficient} and have obtained both stronger and more general results. In particular, Theorem \ref{thm:prediction-error-bound-qnn} shows that the generalization error bound scales as $\sqrt{T / N}$, where $T$ is the number of trainable gates and $N$ is the number of training data, compared to $T / \sqrt{N}$ in Theorem $2$ in the first version of Ref.~\cite{du2021efficient}. In addition, and in contrast to Ref.~\cite{du2021efficient}, we also consider the practically relevant scenarios of CPTP (not just unitary) QMLMs, of multiple uses/copies of trainable maps, and of variable QMLM structure. Moreover, our optimization-dependent generalization bounds for QMLMs are the first bounds of this kind for QML and showcase a new way of using covering numbers.

Ref.~\cite{banchi2021generalization} has proposed an information-theoretic strategy towards studying the approximation and generalization capabilities of QMLMs. In particular, Ref.~\cite{banchi2021generalization} demonstrates how the approximation and generalization errors of a QMLM can be bounded in terms of (Rényi) mutual informations between the quantum embedding achieved by the QMLM (before the final measurement) and the label or instance marginals of the data, respectively. 

Ref.~\cite{huang2021power} considered a class of QMLM (quantum kernels) that is equivalent to training arbitrarily deep quantum circuits. The work also established generalization error bounds to study when quantum machine learning models would predict more accurately than classical machine learning models. Ref.~\cite{huang2021power} showed that even if we are training an arbitrarily deep quantum circuits, the generalization performance can still be good if a certain geometric criterion is met.
Ref.~\cite{wang2021towards} provided generalization error bounds for quantum kernels in noisy quantum circuits, and Ref.~\cite{kubler2021inductive} studied the generalization performance of quantum kernels for some embeddings.
Our work considers finite size quantum circuits and the resulting generalization error bounds are very different.

Even more recently, Ref.~\cite{gyurik2021structural} has proved bounds on the VC-dimension and the fat-shattering dimension of a QMLM, by viewing the QMLM in terms of a parameterized measurement performed on the quantum data encoding. These complexity bounds lead to generalization bounds for QMLMs that depend on spectral properties (more precisely, rank or Frobenius norm) of the parameterized measurement.

Shortly thereafter, Ref.~\cite{caro2021encodingdependent} studied the generalization capabilities of QMLMs with a focus on the strategy used to encode classical data into the quantum circuit. In particular, they considered data encodings via Hamiltonian evolutions, where data re-uploading is allowed. For corresponding QMLMs, Ref.~\cite{caro2021encodingdependent} established generalization bounds that depend explicitly on properties of the Hamiltonians used for data-encoding. These results are complementary to our work: The generalization guarantees of Ref.~\cite{caro2021encodingdependent} depend only on the encoding strategy used in the QMLM, whereas our results are in formulated in terms of properties of the trainable part of the QMLM only.

Ref.~\cite{chen2021expressibility} investigated the expressibility and the generalization behavior of specific QMLMs. By combining light cone arguments with insights into how a specific data-encoding leads to effective dimensionality limitations (see also \cite{caro2021encodingdependent}), Ref.~\cite{chen2021expressibility} obtained VC-dimension bounds for the hardware efficient ansatz. These bounds depend on the number of qubits and on the number of trainable layers. Ref.~\cite{chen2021expressibility} interpreted the overall limitation on the VC-dimension imposed by the data-encoding as an automatic regularization, which is helpful in avoiding overfitting.

Lastly, Ref.~\cite{cai2022sample} investigated a problem of learning parametrized unitary quantum circuits from training data consisting of pairs of input and corresponding output states. They established generalization bounds, and thus sample complexity bounds, by first identifying a universal family of variational quantum circuit architecture, then considering a finite discretization of this family, and finally applying a standard generalization bound for finite hypothesis classes. We note that the generalization guarantee obtained from Theorem~\ref{thm:prediction-error-bound-qnn} is tighter than that obtained in Ref.~\cite{cai2022sample}: For a variational $n$-qubit QMLM with at most $n^c$ gates,~\cite[Theorem 2]{cai2022sample} implies that a sample complexity of $\tilde{\mathcal{O}}(n^{c+1})$ suffices for good generalization, whereas Theorem~\ref{thm:prediction-error-bound-qnn} tells us that already $\tilde{\mathcal{O}}(n^{c})$ samples suffice. Additionally, our generalization guarantees apply for more general architectures than those considered in~\cite{cai2022sample}.

\subsection{Related Work on Quantum Phase Recognition}

Recognizing quantum phases of matter is an important question in physics. Recently, many works have considered training machine learning models to classify quantum phases.
The works include the use of quantum neural networks \cite{cong2019quantum} and classical machine learning models \cite{neupeurt2017, leiwang2016, evert2017nature, carrasquilla2017nature}.
Most of the existing works do not come with rigorous guarantees. Thus, it is not clear whether the respective machine learning models will predict well after training.
Our work shows that when a quantum neural network, such as a QCNN \cite{cong2019quantum}, can perform well on a training set with a moderate amount of examples, the quantum neural network will also predict well on new data. This is particularly prominent in QCNNs, for which the required training data size scales at most polylogarithmically in the system size.
However, in order for quantum neural networks to achieve a small training error, one still needs to address various challenges, such as barren plateau in the training landscape \cite{mcclean2018barren, cerezo2020cost}.

Recently, \cite{huang2021provably} has proposed provably efficient classical machine learning models that can classify a wide range of quantum phases of matter, including symmetry-broken phases, topological phases, and symmetry-protected topological phases. These classical machine learning models are efficient in both computational time and the required training data \cite{huang2021provably}.
Furthermore, the numerical experiments of \cite{huang2021provably} have shown that no labels of the different phases are needed to train the classical machine learning models. The classical algorithm can automatically uncover the quantum phases of matter in an unsupervised learning procedure.

It remains to be seen if QMLMs, such as QCNNs \cite{cong2019quantum}, can improve upon classical machine learning models in classifying quantum phases of matter.
For example, \cite{huang2021provably} shows that the prediction performance of classical machine learning models sometimes degrades when the correlation length in the ground state wave function is high.
It would be interesting to understand whether QMLMs can still work well when classical machine learning models fail.

\subsection{Related Work on Quantum Compiling}


Compiling of quantum circuits is a broad field with many distinct approaches. For example, temporal planning~\cite{venturelli2018compiling,booth2018comparing}, reinforcement learning~\cite{mckiernan2019automated}, and supervised learning~\cite{cincio2018learning,cincio2021machine} are three alternative approaches that have been applied to quantum compiling. Moreover, while classical methods for quantum compiling are the most common, it has also been proposed to do quantum-assisted quantum compiling where a quantum computer is involved in the compiling process~\cite{khatri2019quantum,sharma2019noise,heya2018variational,jones2022robustquantum}.


While not all methods employ training data, it is worth noting that some state-of-the-art methods are in fact based on training data~\cite{qFactor,cincio2018learning,cincio2021machine}. It is also worth remarking that noise-aware quantum compiling methods can involve training data~\cite{cincio2021machine}. For these methods, it has largely been assumed that one would need an amount of training data that grows exponentially with the number of qubits. Naturally, this exponential scaling places a cutoff on the size of unitaries that one can compile. However, with our results in mind (allowing for only polynomial-sized training sets), this cutoff can be significantly extended to larger unitary sizes.


For quantum compiling, the benefit of our work is two-fold, in that both classical methods and quantum methods can potentially be sped-up. Classical methods for quantum compiling are currently being used in the quantum computing industry to enhance the performance of cloud-based quantum computing (e.g., by companies such as Rigetti and IBM). Therefore, speeding up classical methods for quantum compiling can potentially have a direct impact on cloud-based quantum computing. Both standard compiling and noise-aware compiling are important for industrial near-term quantum computing, and our work impacts both of these approaches.


In addition, quantum-assisted methods for quantum compiling can also reduce their resource costs based on our results. Variational quantum algorithms for quantum compiling have been introduced~\cite{khatri2019quantum, sharma2019noise,heya2018variational,jones2022robustquantum}. Specifically, Refs.~\cite{khatri2019quantum, sharma2019noise,sharma2020reformulation} discussed methods that employ an entangled state on $2n$ qubits to compile an $n$-qubit unitary. Due to our work, this entangled state can apparently be reduced in size, namely only needing a Schmidt rank that is polynomial in $n$ (instead of a Schmidt rank that is exponential in $n$). Ref.~\cite{heya2018variational} proposed a slightly different approach that did not involve an auxiliary system, but simply used multiple training data points. Our work shows that the amount of training data here does not need to grow exponentially in $n$, making the approach in Ref.~\cite{heya2018variational} potentially scalable.


Finally, we note that variable ansatz methods (e.g., Ref.~\cite{bilkis2021semi, VAns}) for quantum compiling is a state-of-the-art approach that is employed, e.g., in Refs.~\cite{cincio2018learning,cincio2021machine}. As noted in the main text, our results are general enough to cover the variable ansatz case (where the structure of the circuit changes during the optimization). Hence we provide guidance for how much training data is needed for the variable ansatz case as well. 

\section{Auxiliary Lemmata}

Before presenting our results, we use this section to recall some well known auxiliary results that enter our proofs.

\subsection{Auxiliary Lemmata from Statistical Learning Theory}

We use two standard concentration inequalities. The first is due to Wassily Hoeffding.

\begin{lemma}[Hoeffding's Concentration Inequality \cite{hoeffding1963probability}]\label{lem:Hoeffding}
Let $X_1,\ldots,X_N$ be independent $\mathbb{R}$-valued random variables. Assume that, for every $1\leq i\leq N$, $X_i\in[a_i,b_i]$ almost surely, where $a_i,b_i\in\mathbb{R}$, $a_i\leq b_i$. Then, for every $\varepsilon>0$, 
\begin{align}
    \mathbb{P}\left[\sum\limits_{i=1}^N (X_i - \mathbb{E}[X_i])\geq\varepsilon\right]
    &\leq \exp\left( -\nicefrac{2\varepsilon^2}{\sum\limits_{i=1}^N (b_i-a_i)^2}\right),\\
    \mathbb{P}\left[\left\lvert\sum\limits_{i=1}^N (X_i - \mathbb{E}[X_i])\right\rvert\geq\varepsilon\right]
    &\leq 2\exp\left( -\nicefrac{2\varepsilon^2}{\sum\limits_{i=1}^N (b_i-a_i)^2}\right).
\end{align}
\end{lemma}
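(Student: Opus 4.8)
The plan is to use the exponential Chernoff method, reducing the tail bound to a moment-generating-function estimate and then optimizing over a free parameter. First I would apply the exponential Markov inequality: for any $s > 0$,
\begin{equation}
\mathbb{P}\left[\sum_{i=1}^N (X_i - \mathbb{E}[X_i]) \geq \varepsilon\right] \leq e^{-s\varepsilon}\, \mathbb{E}\left[\exp\left(s\sum_{i=1}^N (X_i - \mathbb{E}[X_i])\right)\right].
\end{equation}
Because the $X_i$ are independent, the moment generating function factorizes, so the expectation on the right equals $\prod_{i=1}^N \mathbb{E}[\exp(s(X_i - \mathbb{E}[X_i]))]$. This reduces the problem to controlling the moment generating function of a single centered, bounded random variable.

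The key technical ingredient---and the step I expect to be the main obstacle---is the sub-Gaussian estimate known as Hoeffding's lemma: for a mean-zero random variable $Y$ with $Y \in [a,b]$ almost surely,
\begin{equation}
\mathbb{E}[e^{sY}] \leq \exp\left(\frac{s^2 (b-a)^2}{8}\right).
\end{equation}
To prove this I would exploit convexity of $y \mapsto e^{sy}$ to bound it above on $[a,b]$ by the secant line through the endpoints, take expectations, and write the resulting upper bound as $e^{\phi(s)}$ for an explicit function $\phi$. One checks $\phi(0) = \phi'(0) = 0$, and that $\phi''(s)$ equals the variance of an auxiliary $[a,b]$-valued random variable, whence $\phi''(s) \leq (b-a)^2/4$ since a distribution supported on an interval of length $b-a$ has variance at most $(b-a)^2/4$. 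A second-order Taylor expansion then yields the claimed estimate. This careful variance bookkeeping, delivering the sharp constant $\frac{1}{8}$, is the delicate part; everything else is routine.

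With Hoeffding's lemma in hand, I would apply it to each centered variable $X_i - \mathbb{E}[X_i]$, which has mean zero and lies in an interval of the same length $b_i - a_i$, obtaining
\begin{equation}
\mathbb{P}\left[\sum_{i=1}^N (X_i - \mathbb{E}[X_i]) \geq \varepsilon\right] \leq \exp\left(-s\varepsilon + \frac{s^2}{8}\sum_{i=1}^N (b_i - a_i)^2\right).
\end{equation}
Finally I would optimize over the free parameter, choosing $s = 4\varepsilon / \sum_{i=1}^N (b_i - a_i)^2$, which minimizes the exponent and produces the stated one-sided bound $\exp(-2\varepsilon^2 / \sum_{i=1}^N (b_i - a_i)^2)$. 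The two-sided inequality then follows immediately by applying the one-sided bound both to $\sum_i (X_i - \mathbb{E}[X_i])$ and to its negation $\sum_i (\mathbb{E}[X_i] - X_i)$, and combining the two via a union bound, which accounts for the factor of $2$.
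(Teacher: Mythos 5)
Your proof is correct and complete: the Chernoff (exponential Markov) step, the factorization by independence, Hoeffding's lemma with the sharp constant $\nicefrac{1}{8}$ via the secant-line/variance argument, the optimization $s = 4\varepsilon/\sum_{i=1}^N(b_i-a_i)^2$, and the union bound for the two-sided version are exactly the classical argument. The paper does not prove this lemma at all --- it imports it as a known auxiliary result with a citation to Hoeffding's original work --- and your proposal reproduces precisely that standard cited proof, so there is nothing to reconcile.
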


The second is the bounded differences inequality, originally due to Colin McDiarmid.

\begin{lemma}[McDiarmid's Concentration Inequality \cite{mcdiarmid1989onthemethod}]\label{lem:McDiarmid}
Let $X_1,\ldots,X_N$ be independent random variables, each with values in $\mathcal{Z}$. Let $\varphi:\mathcal{Z}^N\to\mathbb{R}$ be a measurable function s.t., whenever $z\in\mathcal{Z}^n$ and $z'\in\mathcal{Z}^n$ differ only in the $i^\mathrm{th}$ entry, then $\lvert\varphi(z)-\varphi(z')\rvert\leq b_i$. Then, for every $\varepsilon>0$, we have
\begin{equation}
    \mathbb{P}\left[\varphi(Z)-\mathbb{E}[\varphi(Z)]\geq\varepsilon\right]
    \leq \exp\left(-\nicefrac{2\varepsilon^2}{\sum\limits_{i=1}^Nb_i^2} \right).
\end{equation}
\end{lemma}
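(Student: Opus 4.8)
The plan is to prove the bounded differences inequality via the Doob martingale construction combined with the exponential moment (Chernoff) method, thereby reducing the problem to a conditional version of the moment generating function estimate that already underlies Hoeffding's inequality (Lemma~\ref{lem:Hoeffding}). First I would write $V=\varphi(X_1,\ldots,X_N)$ and introduce the filtration $\mathcal{F}_i=\sigma(X_1,\ldots,X_i)$, with $\mathcal{F}_0$ trivial. Setting $V_i=\mathbb{E}[V\mid\mathcal{F}_i]$ produces a martingale with $V_0=\mathbb{E}[V]$ and $V_N=V$, so that $V-\mathbb{E}[V]=\sum_{i=1}^N D_i$ where the martingale differences $D_i=V_i-V_{i-1}$ satisfy $\mathbb{E}[D_i\mid\mathcal{F}_{i-1}]=0$.

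The key step is to control the conditional range of each $D_i$. Using independence of the $X_j$, one can integrate out $X_{i+1},\ldots,X_N$ against a product measure that does not depend on the value of $X_i$. Fixing $X_1,\ldots,X_{i-1}$ and defining $L_i=\inf_x \mathbb{E}[V\mid\mathcal{F}_{i-1},X_i=x]-V_{i-1}$ and $U_i=\sup_x \mathbb{E}[V\mid\mathcal{F}_{i-1},X_i=x]-V_{i-1}$, the bounded differences hypothesis $\lvert\varphi(z)-\varphi(z')\rvert\leq b_i$ combined with the fact that the later coordinates are averaged against the same measure regardless of $X_i$ yields $U_i-L_i\leq b_i$. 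Hence, conditioned on $\mathcal{F}_{i-1}$, the zero-mean variable $D_i$ lies in an interval of length at most $b_i$.

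With this bound in hand, I would invoke the Chernoff method. For $s>0$, Markov's inequality gives $\mathbb{P}[V-\mathbb{E}[V]\geq\varepsilon]\leq e^{-s\varepsilon}\,\mathbb{E}[e^{s(V-\mathbb{E}[V])}]$. Peeling off the conditional expectations from the outermost index by the tower property, and applying at each stage the conditional Hoeffding lemma $\mathbb{E}[e^{sD_i}\mid\mathcal{F}_{i-1}]\leq e^{s^2 b_i^2/8}$ (the moment generating function bound for a zero-mean variable supported in an interval of length $b_i$), I obtain $\mathbb{E}[e^{s(V-\mathbb{E}[V])}]\leq \exp\bigl(s^2\sum_{i=1}^N b_i^2/8\bigr)$. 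Combining and optimizing over $s$—the minimum of $-s\varepsilon+s^2\sum_i b_i^2/8$ is attained at $s=4\varepsilon/\sum_i b_i^2$—produces exactly the claimed exponent $-2\varepsilon^2/\sum_i b_i^2$.

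The main obstacle I anticipate is the conditional range bound of the second step: one must argue carefully that integrating out the later coordinates does not amplify the fluctuation caused by changing $X_i$ alone. This is precisely where independence is essential, since it allows the same product measure on $(X_{i+1},\ldots,X_N)$ to be used for every value of $X_i$; concretely, $U_i-L_i=\sup_{x,x'}\bigl(\mathbb{E}[V\mid\mathcal{F}_{i-1},X_i=x]-\mathbb{E}[V\mid\mathcal{F}_{i-1},X_i=x']\bigr)$, and the integrand difference is pointwise bounded by $b_i$ by hypothesis. The remaining ingredient, the conditional moment generating function estimate, is the very computation behind Lemma~\ref{lem:Hoeffding}, so no genuinely new inequality is required beyond a conditional reformulation of it.
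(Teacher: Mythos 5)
Your proposal is correct and complete: the Doob martingale decomposition, the conditional range bound $U_i-L_i\leq b_i$ obtained by integrating out the later coordinates against the same product measure (this is exactly where independence enters), the conditional Hoeffding moment generating function bound $\mathbb{E}[e^{sD_i}\mid\mathcal{F}_{i-1}]\leq e^{s^2b_i^2/8}$, and the Chernoff optimization at $s=4\varepsilon/\sum_i b_i^2$ all go through as described and yield precisely the exponent $-2\varepsilon^2/\sum_i b_i^2$. The paper states this lemma without proof, citing McDiarmid's original work, and your argument is the canonical martingale (Azuma--Hoeffding) proof from that reference, so your approach coincides with the one the paper implicitly relies on.
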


The third well known ingredient that we will employ in our reasoning without giving a proof is the following.

\begin{lemma}[Massart's Lemma \cite{massart2000some}]\label{lem:Massart}
Let $N\in\mathbb{N}$. Let $A\subset\mathbb{R}^N$ be a finite set contained in a Euclidean ball of radius $r>0$. Then
\begin{equation}
    \mathbb{E}_\sigma \left[\sup\limits_{a\in A}\frac{1}{N}\sum\limits_{i=1}^N \sigma_i a_i\right]
    \leq \frac{r\sqrt{2\log\lvert A\rvert}}{N},
\end{equation}
where the expectation is w.r.t.~i.i.d. Rademacher random variables $\sigma_1,\ldots,\sigma_N$.
\end{lemma}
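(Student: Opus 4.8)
The plan is to use the exponential moment method (the Chernoff--Jensen technique), the standard route to controlling the expected supremum of a finite family of sub-Gaussian random variables. Write $X \coloneqq \sup_{a \in A} \sum_{i=1}^N \sigma_i a_i$, so that the quantity to be bounded is exactly $\frac{1}{N}\,\mathbb{E}_\sigma[X]$. I would introduce a free parameter $\lambda > 0$, to be optimized only at the very end, and exploit the convexity of $t \mapsto e^{\lambda t}$ via Jensen's inequality to obtain $\exp(\lambda\,\mathbb{E}_\sigma[X]) \leq \mathbb{E}_\sigma[\exp(\lambda X)]$. This reduces the task to estimating the moment generating function of the supremum $X$.

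To bound $\mathbb{E}_\sigma[\exp(\lambda X)]$, I would first replace the supremum inside the exponential by a sum over the finite set $A$, which is legitimate since every summand is nonnegative: $\exp(\lambda X) = \sup_{a \in A}\exp(\lambda \sum_i \sigma_i a_i) \leq \sum_{a \in A}\exp(\lambda \sum_i \sigma_i a_i)$. By independence of the Rademacher variables, the expectation of each term factorizes as $\prod_{i=1}^N \mathbb{E}_\sigma[\exp(\lambda \sigma_i a_i)]$. The crucial input is then the elementary sub-Gaussian estimate $\mathbb{E}_\sigma[\exp(\lambda \sigma_i a_i)] = \cosh(\lambda a_i) \leq \exp(\lambda^2 a_i^2/2)$, which follows from a term-by-term comparison of the power series of $\cosh$ and of $t\mapsto e^{t^2/2}$. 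Multiplying over $i$ and invoking the radius constraint $\sum_i a_i^2 = \lVert a\rVert^2 \leq r^2$ yields the clean bound $\mathbb{E}_\sigma[\exp(\lambda X)] \leq \lvert A\rvert\,\exp(\lambda^2 r^2/2)$.

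Combining the two preceding displays and taking logarithms gives $\lambda\,\mathbb{E}_\sigma[X] \leq \log\lvert A\rvert + \lambda^2 r^2/2$, that is, $\mathbb{E}_\sigma[X] \leq \frac{\log\lvert A\rvert}{\lambda} + \frac{\lambda r^2}{2}$ for every $\lambda > 0$. The last step is to optimize the right-hand side over $\lambda$: elementary calculus gives the minimizer $\lambda = \sqrt{2\log\lvert A\rvert}/r$, at which the two terms balance and sum to exactly $r\sqrt{2\log\lvert A\rvert}$; dividing by $N$ produces the claimed inequality. I do not anticipate a genuine obstacle, as this is a textbook argument, but the two points deserving care are the sub-Gaussian inequality $\cosh x \leq e^{x^2/2}$, which drives the entire estimate, and the verification that the optimal $\lambda$ reproduces precisely the constant $\sqrt{2}$. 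I note that no centering of the elements of $A$ is needed, and that the degenerate case $\lvert A\rvert = 1$ is handled correctly by letting $\lambda \to 0$, giving the bound $\mathbb{E}_\sigma[X] \leq 0$, consistent with the fact that a single Rademacher sum has mean zero.
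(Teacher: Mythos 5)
Your proof is correct, but there is nothing in the paper to compare it against: the paper explicitly states this lemma \emph{without} proof (``The third well known ingredient that we will employ in our reasoning without giving a proof is the following''), citing Massart's original article and using the result as a black box. Your argument --- Jensen's inequality applied to $e^{\lambda X}$, bounding the supremum of nonnegative terms by their sum, factorizing the moment generating function by independence, the sub-Gaussian estimate $\cosh(x)\leq e^{x^2/2}$ via term-by-term comparison of power series, and optimizing at $\lambda = \sqrt{2\log\lvert A\rvert}/r$ --- is the standard textbook proof, and every step checks out, including your treatment of the degenerate case $\lvert A\rvert = 1$ by letting $\lambda\to 0$. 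One small caveat: the paper's statement says $A$ is ``contained in a Euclidean ball of radius $r$'' with no stipulation that the ball be centered at the origin, whereas your estimate $\sum_i a_i^2 = \lVert a\rVert^2\leq r^2$ (and your closing remark that ``no centering of the elements of $A$ is needed'') silently assumes the origin-centered case. The general case reduces to yours in one line: if $A\subset B_r(x_0)$, apply your bound to $A - x_0$ and note that the shift contributes $\mathbb{E}_\sigma\left[\tfrac{1}{N}\sum_{i=1}^N \sigma_i x_{0,i}\right]=0$ to the expected supremum, since $\sup_a$ and the additive term decouple and each $\sigma_i$ has mean zero. In the paper's actual application (the chaining step in the proof of its fixed-structure generalization bound) the vectors are norm-bounded around the origin, so the distinction is immaterial there, but as written your proof establishes the origin-centered form of the statement rather than the literal one.
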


\subsection{Auxiliary Lemmata from Quantum Information Theory}

From quantum information theory, we crucially make use of the following lemma.

\begin{lemma}[Subadditivity of diamond distance; see \cite{watrous2018thetheory}, Proposition $3.48$] \label{lem:subadditivity-diamond} For any completely positive and trace-preserving maps $\mathcal{A}, \mathcal{B}, \mathcal{C}, \mathcal{D}$, where $\mathcal{B}$ and $\mathcal{D}$ map from $n$-qubit to $m$-qubit systems and $\mathcal{A}$ and $\mathcal{C}$ map from $m$-qubit to $k$-qubit systems, we have the following inequality
\begin{equation}
\norm{\mathcal{A}\mathcal{B} - \mathcal{C}\mathcal{D}}_{\diamond} \leq \norm{\mathcal{A} - \mathcal{C}}_\diamond + \norm{\mathcal{B} - \mathcal{D}}_\diamond.
\end{equation}
\end{lemma}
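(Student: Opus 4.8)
The plan is to reduce the claim to the triangle inequality for the diamond norm combined with the trace-norm contractivity of CPTP maps. First I would insert the cross term $\mathcal{A}\mathcal{D}$ and record the telescoping identity
\begin{equation}
\mathcal{A}\mathcal{B} - \mathcal{C}\mathcal{D} = \mathcal{A}(\mathcal{B} - \mathcal{D}) + (\mathcal{A} - \mathcal{C})\mathcal{D},
\end{equation}
which is just adding and subtracting $\mathcal{A}\mathcal{D}$. Since the diamond norm is genuinely a norm, the triangle inequality yields
\begin{equation}
\norm{\mathcal{A}\mathcal{B} - \mathcal{C}\mathcal{D}}_\diamond \leq \norm{\mathcal{A}(\mathcal{B} - \mathcal{D})}_\diamond + \norm{(\mathcal{A} - \mathcal{C})\mathcal{D}}_\diamond,
\end{equation}
and it remains to bound each summand by the corresponding term on the right-hand side of the claim.

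For the two remaining estimates I would work directly from the definition $\norm{\Phi}_\diamond = \sup_\rho \norm{(\Phi \otimes \operatorname{id})(\rho)}_1$, where the supremum ranges over density operators $\rho$ on the input system tensored with a reference of equal dimension. Fix such a $\rho$. For the first summand I would use that $\mathcal{A} \otimes \operatorname{id}$ is again CPTP (the tensor product of CPTP maps is CPTP) and hence a trace-norm contraction on Hermitian operators; since $\mathcal{B} - \mathcal{D}$ is Hermitian-preserving, the intermediate operator $((\mathcal{B} - \mathcal{D}) \otimes \operatorname{id})(\rho)$ is Hermitian, so
\begin{equation}
\norm{(\mathcal{A} \otimes \operatorname{id})\bigl(((\mathcal{B} - \mathcal{D}) \otimes \operatorname{id})(\rho)\bigr)}_1 \leq \norm{((\mathcal{B} - \mathcal{D}) \otimes \operatorname{id})(\rho)}_1 \leq \norm{\mathcal{B} - \mathcal{D}}_\diamond.
\end{equation}
For the second summand I would set $\sigma := (\mathcal{D} \otimes \operatorname{id})(\rho)$; because $\mathcal{D}$ is CPTP, $\sigma$ is again a valid density operator, and applying the definition of the diamond norm to $\mathcal{A} - \mathcal{C}$ evaluated on $\sigma$ gives $\norm{((\mathcal{A} - \mathcal{C}) \otimes \operatorname{id})(\sigma)}_1 \leq \norm{\mathcal{A} - \mathcal{C}}_\diamond$. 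Taking the supremum over $\rho$ and combining the two bounds then finishes the argument.

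The main obstacle, and the only place demanding genuine care, is the bookkeeping of the reference systems: the factor norms $\norm{\mathcal{B} - \mathcal{D}}_\diamond$ and $\norm{\mathcal{A} - \mathcal{C}}_\diamond$ are defined with references matched to their respective input dimensions ($n$ and $m$ qubits), whereas in the composition the reference is inherited from the original $n$-qubit input. I would handle this using the stability (dimension-independence) of the diamond norm, i.e.~the fact that enlarging the reference beyond the input dimension leaves the value unchanged, so that evaluating on the inherited reference still yields a valid lower bound for each factor. The trace-norm contractivity of CPTP maps that drives both estimates follows from trace preservation together with positivity (decompose the Hermitian argument into positive and negative parts). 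Alternatively, one could package the whole argument more abstractly as submultiplicativity of the diamond norm under composition together with $\norm{\mathcal{A}}_\diamond = \norm{\mathcal{D}}_\diamond = 1$ for CPTP maps; I would present the direct computation above as the cleaner route.
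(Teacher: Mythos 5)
Your proof is correct. The paper does not prove this lemma itself --- it cites Watrous (Proposition 3.48), whose argument is exactly the one you give: insert the cross term $\mathcal{A}\mathcal{D}$, apply the triangle inequality, and bound each summand using that CPTP maps have diamond norm one (you unpack this last step into trace-norm contractivity on Hermitian operators plus the stability of the diamond norm under enlarged references, which is the right bookkeeping and is handled correctly).
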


Also, to translate between the spectral norm of unitaries and the diamond norm of the corresponding channels, we employ the following result.

\begin{lemma}[Spectral norm and diamond norm of unitary channels] \label{lem:diamond-to-operator}
Let $\mathcal{U}(\rho) = U \rho U^\dagger$ and $\mathcal{V}(\rho) =V \rho V^\dagger$ be unitary channels. Then,
$\tfrac{1}{2}\| \mathcal{U}(|\psi \rangle \! \langle \psi|) - \mathcal{V} (|\psi \rangle \! \langle \psi|) \|_1 \leq \| (U-V) | \psi \rangle \|_{\ell_2}$ for any pure state $|\psi \rangle$.
Therefore,
\begin{equation}
    \tfrac{1}{2} \| \mathcal{U} - \mathcal{V} \|_\diamond \leq \|U - V \|.
\end{equation}
\end{lemma}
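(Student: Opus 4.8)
The plan is to establish the pure-state estimate first by reducing the trace distance to the vector overlap, and then to bootstrap to the diamond-norm bound via the standard convexity-plus-tensoring argument, using that $\mathcal{U}\otimes\operatorname{id}$ is again a unitary channel so the pure-state estimate applies verbatim to it.

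For the first inequality, I would set $\ket{\phi}:=U\ket{\psi}$ and $\ket{\chi}:=V\ket{\psi}$, which are unit vectors since $U,V$ are unitary. The central input is the closed form for the trace distance between two pure states, $\tfrac{1}{2}\|\,\ket{\phi}\!\bra{\phi}-\ket{\chi}\!\bra{\chi}\,\|_1=\sqrt{1-|\langle\phi|\chi\rangle|^2}$, which one obtains by diagonalizing the traceless Hermitian operator $\ket{\phi}\!\bra{\phi}-\ket{\chi}\!\bra{\chi}$ of rank at most $2$ on the span of $\ket{\phi},\ket{\chi}$. It then remains to compare this with the Euclidean distance, and here I would use the elementary chain $1-|\langle\phi|\chi\rangle|^2=(1-|\langle\phi|\chi\rangle|)(1+|\langle\phi|\chi\rangle|)\leq 2(1-|\langle\phi|\chi\rangle|)\leq 2-2\,\Re\langle\phi|\chi\rangle=\|\,\ket{\phi}-\ket{\chi}\,\|_{\ell_2}^2$, where the last inequality is just $\Re z\leq|z|$ and the final equality expands $\|\ket{\phi}-\ket{\chi}\|_{\ell_2}^2=2-2\Re\langle\phi|\chi\rangle$. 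Since $\ket{\phi}-\ket{\chi}=(U-V)\ket{\psi}$, taking square roots yields $\tfrac12\|\mathcal{U}(\ket{\psi}\!\bra{\psi})-\mathcal{V}(\ket{\psi}\!\bra{\psi})\|_1\leq\|(U-V)\ket{\psi}\|_{\ell_2}$.

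For the diamond-norm statement, I would recall that $\|\mathcal{U}-\mathcal{V}\|_\diamond=\sup_{\rho}\|((\mathcal{U}-\mathcal{V})\otimes\operatorname{id})(\rho)\|_1$, where the supremum ranges over states $\rho$ of the system together with a reference of equal dimension, and that by convexity of the trace norm this supremum is attained at a pure state $\ket{\Psi}$. Observing that $\mathcal{U}\otimes\operatorname{id}$ and $\mathcal{V}\otimes\operatorname{id}$ are themselves unitary channels, implemented by $U\otimes\id$ and $V\otimes\id$, I would apply the already-proven pure-state estimate to these enlarged unitaries and the pure state $\ket{\Psi}$, giving $\tfrac12\|((\mathcal{U}-\mathcal{V})\otimes\operatorname{id})(\ket{\Psi}\!\bra{\Psi})\|_1\leq\|((U-V)\otimes\id)\ket{\Psi}\|_{\ell_2}$. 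Finally I would bound the right-hand side by the operator norm, $\|((U-V)\otimes\id)\ket{\Psi}\|_{\ell_2}\leq\|(U-V)\otimes\id\|=\|U-V\|$, using that $\ket{\Psi}$ is a unit vector and that tensoring with the identity leaves the spectral norm unchanged. Taking the supremum over $\ket{\Psi}$ then gives $\tfrac12\|\mathcal{U}-\mathcal{V}\|_\diamond\leq\|U-V\|$.

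There is no deep obstacle here; the argument is a chain of standard facts. The one point that requires a little care is the passage from the pure-state estimate to the diamond norm: one must justify restricting the diamond-norm supremum to pure inputs on a doubled system, and recognize that the estimate of the first part is \emph{dimension-agnostic} so that it applies unchanged to $U\otimes\id$ and $V\otimes\id$. The only quantitative subtlety is verifying that the elementary inequalities in the first step are tight enough, i.e.\ that replacing $1+|\langle\phi|\chi\rangle|$ by $2$ and $|\langle\phi|\chi\rangle|$ by $\Re\langle\phi|\chi\rangle$ does not lose the claimed bound; both steps go in the correct direction, so the estimate closes.
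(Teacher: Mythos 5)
Your proof is correct, and its first half is essentially the paper's: the identity $\tfrac{1}{2}\|\,|\phi\rangle\!\langle\phi|-|\chi\rangle\!\langle\chi|\,\|_1=\sqrt{1-|\langle\phi|\chi\rangle|^2}$ is precisely the Fuchs--van de Graaf relation for pure states that the paper invokes, and your chain $(1-|z|)(1+|z|)\leq 2(1-|z|)\leq 2(1-\Re z)$ is the same elementary estimate. The genuine difference is in the diamond-norm step. The paper appeals to the known fact (citing Watrous) that stabilization is unnecessary when computing the diamond distance of two \emph{unitary} channels, so that $\tfrac{1}{2}\|\mathcal{U}-\mathcal{V}\|_\diamond=\max_{|\psi\rangle}\tfrac{1}{2}\|\mathcal{U}(|\psi\rangle\!\langle\psi|)-\mathcal{V}(|\psi\rangle\!\langle\psi|)\|_1$ with no reference system, and then applies the single-system estimate directly. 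You instead keep the stabilized definition, restrict to pure inputs $|\Psi\rangle$ on the doubled system by convexity of the trace norm, observe that $\mathcal{U}\otimes\operatorname{id}$ and $\mathcal{V}\otimes\operatorname{id}$ are themselves unitary channels implemented by $U\otimes\mathds{1}$ and $V\otimes\mathds{1}$, apply the (dimension-agnostic) pure-state bound to these enlarged unitaries, and finish with $\|((U-V)\otimes\mathds{1})|\Psi\rangle\|_{\ell_2}\leq\|(U-V)\otimes\mathds{1}\|=\|U-V\|$. Your route buys self-containedness: it replaces the somewhat special unstabilized-maximization property of unitary channels by two elementary facts (convexity of the trace norm and invariance of the operator norm under tensoring with the identity). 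The paper's route is shorter once the cited fact is granted. Both arguments are sound, and both hinge on the same observation you flag explicitly, namely that the pure-state estimate holds in any dimension.
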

\begin{proof}
The proof is adapted from \cite{chen2021concentration}.
Fix an input $|\psi \rangle$ and denote the output state vectors by $|u \rangle=U| \psi \rangle$ and $|v \rangle = V | \psi \rangle$, respectively. 
Normalization ensures that
these state vectors obey $|\langle u,v \rangle | \leq 1$, as well as $\| |u \rangle - |v \rangle \|_{\ell_2}=\sqrt{2(1-\mathrm{Re}(\langle u,v \rangle))}$.
Apply the Fuchs--van de Graaf relations~\cite{fuchs1999cryptographic} to convert the output trace distance into a (pure) output fidelity:
\begin{align}
\tfrac{1}{2} \||u \rangle \! \langle u| - |v \rangle \! \langle v| \|_1 
&= \sqrt{1-| \langle u,v \rangle|^2} \\
&= \sqrt{(1+| \langle u,v \rangle |) (1-| \langle u,v\rangle|)}\\
&\leq \sqrt{2(1-\mathrm{Re}(\langle u,v \rangle))}\\
&=\| |u \rangle - |v \rangle \|_{\ell_2}.
\end{align}
The diamond distance bound then is a direct consequence of this relation. Using the fact that stabilization is not necessary for computing the diamond distance of two unitary channels \cite{watrous2018thetheory}, we get
\begin{align}
\tfrac{1}{2} \| \mathcal{U} - \mathcal{V} \|_\diamond 
&= \max_{|\psi \rangle \! \langle \psi|} \tfrac{1}{2} \| \mathcal{U}(|\psi \rangle \! \langle \psi|) - \mathcal{V}(|\psi \rangle \! \langle \psi| ) \|_1\\
&\leq \max_{| \psi \rangle}\|(U-V)| \psi \rangle \|_{\ell_2}=\|U-V \| .
\end{align}
Here, we have also used the definition of the operator norm.
\end{proof}

\section{Analytical Results: Details and Proofs}\label{Sct:analytical-results-details}

We first introduce some standard notation.
Let $\mathcal{D}(\mathcal{H})$ denote the set of density operators (positive semi-definite with unit trace) acting on the Hilbert space $\mathcal{H}$. Let $\mathcal{L}(\mathcal{H})$ denote the space of square linear operators acting on $\mathcal{H}$. Let $\mathcal{L}(\mathcal{H}, \mathcal{H}^{'})$ denote the set of linear operators taking $\mathcal{H}$ to a Hilbert space $\mathcal{H}^{'}$. The trace norm of a linear operator $A\in \mathcal{L}(\mathcal{H}, \mathcal{H}^{'})$ is defined as $\Vert A \Vert_1 :=\Tr[|A|]$, where $|A|:=\sqrt{A^{\dagger}A}$. The trace distance between any two operators $A, B \in \mathcal{L}(\mathcal{H}, \mathcal{H}^{'})$ is $\Vert A - B \Vert_1$, and for two quantum  states $\rho, \sigma  \in \mathcal{D}(\mathcal{H})$ it is linearly related to the maximum success probability of distinguishing $\rho$ and $\sigma$ in a quantum hypothesis testing experiment. A linear map $\mathcal{N}_{A \to B}: \mathcal{L}(\mathcal{H}_A) \to \mathcal{L}(\mathcal{H}_B)$ is called a completely positive (CP) map if $(\mathcal{I}_R \otimes \mathcal{N}_{A\to B})(X_{RA})$ is positive semi-definite for all positive semi-definite $X_{RA} \in \mathcal{L}(\mathcal{H}_{RA})$, where $\mathcal{H}_{RA} = \mathcal{H}_R \otimes \mathcal{H}_A$ and the reference system $R$ can be of arbitrary size. Moreover, a linear map $\mathcal{N}_{A \to B}: \mathcal{L}(\mathcal{H}_A) \to \mathcal{L}(\mathcal{H}_B)$ is trace preserving (TP) if $\Tr(\mathcal{N}_{A\to B}(X_A)) = \Tr(X_A)$ for all $X_A \in \mathcal{L}(\mathcal{H}_A)$. A linear map $\mathcal{N}_{A \to B}$ is called a quantum channel if it is completely positive and trace preserving (CPTP). Let $\mathcal{N}_{A \to B}$ and $\mathcal{M}_{A \to B}$ denote quantum channels. Then the diamond distance between $\mathcal{N}_{A \to B}$ and $\mathcal{M}_{A \to B}$ is defined as
\begin{equation}\label{eq:diamond-distance}
\Vert \mathcal{N}_{A \to B} - \mathcal{M}_{A \to B} \Vert_{\diamond} := \sup_{\rho_{RA} \in \mathcal{D}(\mathcal{H}_{RA})} \Vert (\mathcal{I}_R \otimes \mathcal{N}_{A \to B})(\rho_{RA}) - (\mathcal{I}_R \otimes \mathcal{M}_{A \to B})(\rho_{RA})\Vert_1,
\end{equation}
where $\mathcal{I}_R$ is the identity map acting on $\mathcal{H}_R$. 

As a consequence of the convexity of the trace norm and the Schmidt decomposition theorem, it suffices to optimize Eq.~\eqref{eq:diamond-distance} over pure states in $\mathcal{H}_{RA}$ with $\operatorname{dim}(\mathcal{H}_R)=\operatorname{dim}(\mathcal{H}_A)$.

With the notation in place, we now present our analytical results. Generalization performance depends crucially on the metric entropy, which characterizes both classical and quantum machine learning models \cite{huang2021information}. Metric entropy is a measure of complexity or expressiveness for a set of objects endowed with a distance metric.

In Section \ref{SbSct:CoveringNumberBounds}, we take the diamond norm as the distance metric and prove metric entropy bounds for two sets of interest. First, we examine the set $\mathcal{U}_\mathcal{A}$ of all unitaries that can be represented using a (fixed) variational quantum circuit $\mathcal{A}$ with $T$ parameterized $2$-qubit unitary gates. More precisely, we consider the corresponding set of unitary channels. Second, we study the set $\mathcal{CPTP}_\mathcal{A}$ of all CPTP maps that can be represented using a (fixed) variational quantum circuit $\mathcal{A}$ with $T$ parameterized 2-qubit CPTP maps. The latter scenario generalizes the former and corresponds to the difference between perfect and noisy implementations. Note that, in both cases, the variational quantum circuit itself could contain more than $T$ gates. However, these additional gates would have to be fixed and not trainable.

Using these metric entropy bounds and variants thereof, we establish prediction error bounds for variational quantum machine learning models (QMLMs) in terms of the number of trainable elements in Section \ref{SbSctPredictionErrorBounds}. We consider different scenarios of interest, among them that of using multiple copies of a quantum neural network (such that parameters are reused over different copies), as well as both fixed and variable circuit architectures.

\subsection{Covering Number Bounds for Variational Quantum Circuits}\label{SbSct:CoveringNumberBounds}

In this section, we provide bounds on the expressivity of the class of CPTP maps (or unitaries) that a quantum machine learning model (QMLM) can implement in terms of the number of trainable elements used in the architecture. As a measure of expressivity, we choose covering numbers and metric entropies w.r.t.~(the metric induced by) the diamond norm. We first recall the general definition of covering numbers and metric entropies.

\begin{definition}[Covering nets, covering numbers, and metric entropies]\label{def:covering-numbers}
Let $(X,d)$ be a metric space. Let $K\subset X$ be a subset and let $\varepsilon >0$.
\begin{itemize}
    \item $N\subseteq K$ is an $\varepsilon$-covering net of $K$ if $\forall x\in K~\exists y\in N$ such that $d(x,y)\leq \varepsilon$. That is, $N\subseteq K$ is an $\varepsilon$-covering net of $K$ if and only if $K$ can be covered by $\varepsilon$-balls around the points in $N$.
    \item The covering number $\mathcal{N}(K,d,\varepsilon)$ is the smallest possible cardinality of an $\varepsilon$-covering net of $K$.
    \item The metric entropies $\log_2\mathcal{N}(K,d,\varepsilon)$ are  the logarithm of the covering numbers.
\end{itemize}
\end{definition}

In finite-dimensional real spaces, the covering numbers of norm balls, and thereby of norm-bounded sets, can be bounded easily. We make use of this observation to provide basic covering number bounds for the classes of $2$-qubit unitaries and $2$-qubit CPTP maps. We first state the bound for the unitary case.

\begin{lemma}[Covering number bounds for $2$-qubit unitaries] \label{lem:Covering2QuditUnitaries}
Let $\norm{\cdot}$ be a unitarily invariant norm on complex $4\times 4$-matrices. The covering number of the set of $2$-qubit unitaries $\mathcal{U}\left( \mathbb{C}^2 \otimes \mathbb{C}^2 \right)$ w.r.t.~the norm $\norm{\cdot}$ can be bounded as
\begin{equation}
    \mathcal{N}\left(\mathcal{U}\left( \mathbb{C}^2 \otimes \mathbb{C}^2\right), \norm{\cdot},\varepsilon\right)\leq \left(\frac{6\norm{\mathds{1}_{\mathbb{C}^{4}}}}{\varepsilon} \right)^{32},\quad \textrm{for }0<\varepsilon\leq \norm{\mathds{1}_{\mathbb{C}^4}}. 
\end{equation}
\end{lemma}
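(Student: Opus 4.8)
The plan is to embed the unitary group into the ambient real vector space of $4\times 4$ complex matrices and then invoke the classical volumetric covering bound for a ball in a finite-dimensional normed space. First I would identify $\mathcal{U}\left(\mathbb{C}^2\otimes\mathbb{C}^2\right)$ as a subset of the space $\mathbb{C}^{4\times 4}$ of $4\times 4$ complex matrices which, regarded as a real vector space, has dimension $2\cdot 4^2 = 32$. The key structural observation is that a unitarily invariant norm is a function of the singular values alone; since every $U\in\mathcal{U}\left(\mathbb{C}^4\right)$ has all singular values equal to $1$, exactly as the identity does, we obtain $\norm{U} = \norm{\mathds{1}_{\mathbb{C}^4}} =: r$ for every such $U$. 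Hence the entire unitary group lies on the sphere of radius $r$, and in particular inside the closed norm-ball $B(0,r)$ in $\left(\mathbb{C}^{4\times 4},\norm{\cdot}\right)$, which reduces the problem to covering this ball.

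Second, I would apply the standard packing argument to bound the (external) covering number of $B(0,r)$. A maximal $\delta$-separated subset of $B(0,r)$ is automatically a $\delta$-net, and placing disjoint balls of radius $\delta/2$ around its points — all contained in $B(0,r+\delta/2)$ — and comparing Lebesgue volumes yields cardinality at most $\left(1 + 2r/\delta\right)^{32}\leq\left(3r/\delta\right)^{32}$ whenever $\delta\leq r$. The crucial point here is that Lebesgue measure on $\mathbb{R}^{32}$ scales as $\lambda^{32}$ under dilation by $\lambda$ regardless of which norm induces the metric, so the bound is valid for an arbitrary unitarily invariant $\norm{\cdot}$ and not only the Euclidean one.

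Third, since the definition of a covering net requires $N\subseteq K$ (an internal, proper net), I would convert the external net into an internal one at the cost of a factor of two in the radius: take an external $(\varepsilon/2)$-net of $B(0,r)$, and for each net point whose $(\varepsilon/2)$-ball meets $\mathcal{U}\left(\mathbb{C}^4\right)$, replace it by such an intersection point, discarding the rest. The triangle inequality turns this into a proper $\varepsilon$-net of $\mathcal{U}\left(\mathbb{C}^4\right)$ of cardinality at most $\left(3r/(\varepsilon/2)\right)^{32} = \left(6r/\varepsilon\right)^{32}$, which is exactly the claimed bound with $r=\norm{\mathds{1}_{\mathbb{C}^4}}$. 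The hypothesis $\varepsilon\leq\norm{\mathds{1}_{\mathbb{C}^4}}=r$ guarantees $\varepsilon/2\leq r$, so the volume estimate indeed applies at scale $\delta=\varepsilon/2$.

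The computation is routine once the geometric picture is fixed, so I do not expect a serious obstacle. The only two points demanding care are (i) justifying the norm-independence of the volume bound, handled by the dilation-scaling remark above, and (ii) tracking the factor-of-two inflation forced by the properness requirement $N\subseteq K$ — this is precisely what produces the constant $6$ in the exponentiated base rather than the $3$ one would get for external nets.
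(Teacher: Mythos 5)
Your proposal is correct and follows essentially the same route as the paper's proof: both exploit unitary invariance to place $\mathcal{U}\left(\mathbb{C}^2\otimes\mathbb{C}^2\right)$ inside the norm-ball of radius $\norm{\mathds{1}_{\mathbb{C}^4}}$ in the $32$-real-dimensional matrix space, apply the volumetric $\left(3R/\delta\right)^{32}$ bound for balls at scale $\delta=\varepsilon/2$, and convert to an internal net of the unitary group at the cost of that factor of two, yielding the base $6$. The only difference is presentational: the paper cites these two ingredients (the volumetric bound and the approximate monotonicity of interior covering numbers) from the literature, whereas you re-derive them explicitly via the packing/volume-dilation argument and the intersection-point replacement.
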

\begin{proof}
It is well known (see, e.g., Section $4.2$ in \cite{vershynin2018highdimensional}) that the covering numbers of a norm-ball of radius $R>0$ around some point $x\in\mathbb{R}^K$, for $0<\varepsilon\leq R$, can be bounded as 
\begin{equation}
    \mathcal{N}(B_R(x), \norm{\cdot}, \varepsilon)\leq \left( 1 + \frac{2R}{\varepsilon}\right)^K \leq \left(\frac{3R}{\varepsilon}\right)^K,
\end{equation}
where the ball and the coverings are taken w.r.t.~the same norm.

In our scenario, we can apply this as follows: As $\norm{\cdot}$ is assumed to be unitarily invariant, we have $\norm{U} = \norm{\mathds{1}_{\mathbb{C}^{4}}}$ for every unitary $U\in \mathcal{U}\left( \mathbb{C}^2 \otimes \mathbb{C}^2\right)$. In particular, we have, for $R:=\norm{\mathds{1}_{\mathbb{C}^{4}}}$ that $\mathcal{U}\left( \mathbb{C}^2 \otimes \mathbb{C}^2\right)\subset B_R(0)$, where $B_R(0)$ is the ball of matrices with $4\times 4=16$ complex entries around the $0$-matrix is taken w.r.t.~$\norm{\cdot}$. Therefore, we have
\begin{equation}
    \mathcal{N}\left(\mathcal{U}\left( \mathbb{C}^2 \otimes \mathbb{C}^2\right), \norm{\cdot},\varepsilon\right)
    \leq \mathcal{N}\left(B_R(0), \norm{\cdot},\frac{\varepsilon}{2}\right)
    \leq \left(\frac{6\norm{\mathds{1}_{\mathbb{C}^{4}}}}{\varepsilon} \right)^{2\cdot 16},\quad \textrm{for }0<\varepsilon\leq \norm{\mathds{1}_{\mathbb{C}^4}},
\end{equation}
where the first step uses the approximate monotonicity of (interior) covering numbers (see, e.g., Section $4.2$ in \cite{vershynin2018highdimensional}).
\end{proof}

This covering number bound becomes particularly useful for the spectral norm, for which $\norm{\mathds{1}_{\mathbb{C}^4}}=1$.

With an analogous reasoning, we can prove a covering number bound for $2$-qubit CPTP maps.

\begin{lemma}[Covering number bounds for $2$-qubit CPTP maps]\label{lem:Covering2QuditCPTP}
The covering number of the set of $2$-qubit CPTP maps $\mathcal{CPTP}\left( \mathbb{C}^2 \otimes \mathbb{C}^2 \right)$ w.r.t.~the diamond distance can be bounded as
\begin{equation}
    \mathcal{N}\left(\mathcal{CPTP}\left( \mathbb{C}^2 \otimes \mathbb{C}^2\right), \norm{\cdot}_\diamond,\varepsilon\right)\leq \left(\frac{6}{\varepsilon} \right)^{512},\quad \textrm{for }0<\varepsilon\leq 1. 
\end{equation}
\end{lemma}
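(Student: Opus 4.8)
The plan is to mirror the argument of the preceding Lemma~\ref{lem:Covering2QuditUnitaries}, replacing the ambient matrix space of $2$-qubit unitaries by the space of superoperators and the unitarily invariant norm by the diamond norm. The key structural observation is that a $2$-qubit CPTP map is a linear map on $\mathcal{L}(\mathbb{C}^2\otimes\mathbb{C}^2)\cong\mathbb{C}^{16}$, hence is represented by a $16\times 16$ complex matrix. The space of all such superoperators is therefore a complex vector space of dimension $256$, i.e.\ a real vector space of dimension $512$. This $512$ plays the role of the exponent $32=2\cdot 16$ from the unitary case.

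Next I would record two facts about the diamond norm on this $512$-dimensional real space. First, $\norm{\cdot}_\diamond$ is a genuine norm on the space of all superoperators (it is the completely bounded trace norm), so the finite-dimensional ball-covering estimate applies to it verbatim. Second, every CPTP map $\Phi$ satisfies $\norm{\Phi}_\diamond=1$: since $\Phi\otimes\operatorname{id}$ is again CPTP and CPTP maps are contractive in trace norm, one has $\norm{(\Phi\otimes\operatorname{id})(X)}_1\leq\norm{X}_1$ for all $X$, with equality attained on states. Consequently the set $\mathcal{CPTP}(\mathbb{C}^2\otimes\mathbb{C}^2)$ is contained in the unit ball $B_1(0)$ of the diamond norm, exactly as the $2$-qubit unitaries all lay on the sphere of radius $\norm{\mathds{1}}$ in the unitarily invariant norm; the absence of a $\norm{\mathds{1}}$-type factor here reflects that channels have unit diamond norm.

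With these facts the bound follows from the same two ingredients as before. Applying the Euclidean ball-covering estimate $\mathcal{N}(B_R(0),\norm{\cdot},\delta)\leq(\nicefrac{3R}{\delta})^K$ with $R=1$, $K=512$, and the diamond norm in place of the generic norm, and then invoking the approximate monotonicity of interior covering numbers (which converts a $\nicefrac{\varepsilon}{2}$-net of the superset $B_1(0)$ into an $\varepsilon$-net of the subset $\mathcal{CPTP}$), I obtain
\begin{equation}
    \mathcal{N}\!\left(\mathcal{CPTP}\!\left(\mathbb{C}^2\otimes\mathbb{C}^2\right),\norm{\cdot}_\diamond,\varepsilon\right)
    \leq \mathcal{N}\!\left(B_1(0),\norm{\cdot}_\diamond,\nicefrac{\varepsilon}{2}\right)
    \leq \left(\frac{6}{\varepsilon}\right)^{512},
\end{equation}
for $0<\varepsilon\leq 1$, as claimed; the factor of $2$ from the interior-covering step together with the $3R=3$ from the ball estimate produces the constant $6$. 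The main obstacle is conceptual rather than computational: one must be comfortable treating $\norm{\cdot}_\diamond$ as an ordinary norm on the full $512$-real-dimensional superoperator space (not merely on CPTP or Hermiticity-preserving maps) and verify the normalization $\norm{\Phi}_\diamond=1$ carefully so as to land the set inside the unit ball. I note that this bound is deliberately loose --- restricting to the $256$-dimensional real subspace of Hermiticity-preserving superoperators, on which CPTP maps genuinely live, would halve the exponent --- but the stated form suffices and parallels the unitary lemma.
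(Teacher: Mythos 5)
Your proposal is correct and matches the paper's own argument: the paper likewise notes that CPTP maps have unit diamond norm, embeds $\mathcal{CPTP}\left(\mathbb{C}^2\otimes\mathbb{C}^2\right)$ in the diamond-norm unit ball of the $(2^4\times 2^4)$-dimensional complex superoperator space ($512$ real dimensions), and invokes the same ball-covering estimate and interior-covering monotonicity as in Lemma~\ref{lem:Covering2QuditUnitaries}. Your additional remarks (verifying $\norm{\Phi}_\diamond=1$ via contractivity, and noting the exponent could be halved by restricting to Hermiticity-preserving maps) are sound but not needed for the stated bound.
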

\begin{proof}
As CPTP maps have diamond norm equal to $1$, this follows (analogously to the previous Lemma) by upper-bounding the covering number of the diamond-norm unit ball, which lives in a $(2^4\times 2^4)$-dimensional space over the complex numbers. The latter can be achieved as in the previous Lemma.
\end{proof}

We combine these basic upper bounds for single trainable elements with sub-additivity of the diamond norm (Lemma \ref{lem:subadditivity-diamond}) to obtain a covering number bound for the class of maps that can be implemented by a variational QMLM, understood as a parametrized CPTP map as described in the main text. Again, we first state the bound for the unitary case.

\begin{theorem}[Metric entropy bounds for unitary QMLMs]\label{thm-metric-entropy-vqc-unitary}
Let $\mathcal{E}^{\mathrm{QMLM}}_{\vec{\theta}}(\cdot)$ be an $n$-qubit QMLM with $T$ parameterized 2-qubit unitary gates and an arbitrary number of non-trainable, global unitary gates. Let $\mathcal{U}^\mathrm{QMLM}\subset\mathcal{U}(\mathbb{C}^{2^n})$ denote the set of $n$-qubit unitaries that can be implemented by the QMLM $\mathcal{E}^{\mathrm{QMLM}}_{\vec{\theta}}(\cdot)$.

Then, for every $\varepsilon\in (0,1]$, there exists an $\varepsilon$-covering net $\mathcal{N}_\varepsilon$ of (the set of unitary channels corresponding to) $\mathcal{U}^\mathrm{QMLM}$ w.r.t.~the diamond distance such that the logarithm of its size can be upper bounded as
\begin{equation}
    \log(\lvert\mathcal{N}_\epsilon\rvert) \leq 32 T \log\left(\frac{12 T}{\varepsilon} \right).\label{eq:unitary-circuit-covering-number-bound}
\end{equation}
\end{theorem}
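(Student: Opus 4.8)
The plan is to build an $\varepsilon$-covering net for the whole QMLM out of covering nets for each individual trainable $2$-qubit gate, exploiting the fact that the diamond distance is subadditive under composition (Lemma~\ref{lem:subadditivity-diamond}). Write a unitary $U\in\mathcal{U}^{\mathrm{QMLM}}$ as an alternating product of the $T$ trainable $2$-qubit unitaries $U_1,\ldots,U_T$ and the fixed (non-trainable) global unitaries; the latter contribute nothing to the covering since they do not vary. First I would fix a target accuracy $\delta>0$ per gate and invoke Lemma~\ref{lem:Covering2QuditUnitaries} with the spectral norm (for which $\norm{\mathds{1}_{\mathbb{C}^4}}=1$), so that each trainable $2$-qubit unitary admits a $\delta$-covering net $\mathcal{N}^{(t)}$ in spectral norm with $\lvert\mathcal{N}^{(t)}\rvert\leq(6/\delta)^{32}$.

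The core estimate is a telescoping/triangle-inequality argument. Given $U=U_T V_T\cdots U_1 V_1$ (with $V_j$ the fixed pieces), replace each trainable $U_t$ by its nearest net point $\tilde U_t\in\mathcal{N}^{(t)}$ one at a time. Passing to the corresponding unitary channels and using subadditivity of the diamond distance under composition repeatedly, the total diamond distance between the full channel and its discretized version is at most $\sum_{t=1}^T\norm{\mathcal{U}_t-\tilde{\mathcal{U}}_t}_\diamond$. By Lemma~\ref{lem:diamond-to-operator}, each term satisfies $\tfrac12\norm{\mathcal{U}_t-\tilde{\mathcal{U}}_t}_\diamond\leq\norm{U_t-\tilde U_t}\leq\delta$, so the $2$-qubit gates acting on the large $n$-qubit space contribute $2\delta$ each (the tensoring with identity on the spectator qubits does not increase the spectral-norm distance, and the diamond distance is stable under such tensoring). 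Hence the full channel is $2T\delta$-close in diamond norm to its discretization. Choosing $\delta=\varepsilon/(2T)$ therefore yields an $\varepsilon$-covering net $\mathcal{N}_\varepsilon$ whose elements are the compositions of net points; its cardinality is at most the product $\prod_{t=1}^T\lvert\mathcal{N}^{(t)}\rvert\leq(6/\delta)^{32T}=(12T/\varepsilon)^{32T}$. Taking logarithms gives $\log\lvert\mathcal{N}_\varepsilon\rvert\leq 32T\log(12T/\varepsilon)$, which is exactly Eq.~\eqref{eq:unitary-circuit-covering-number-bound}.

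I expect the main subtlety, rather than a genuine obstacle, to be bookkeeping the locality correctly: the net points live in $\mathcal{U}(\mathbb{C}^4)$, but each trainable gate actually acts as $\tilde U_t\otimes\mathds{1}$ on the $n$-qubit register, and one must confirm that (i) the spectral-norm bound $\norm{U_t-\tilde U_t}$ on the $4\times4$ block controls the spectral-norm distance of the $2^n$-dimensional operators, and (ii) the reduction from diamond distance to spectral distance in Lemma~\ref{lem:diamond-to-operator} applies on the full space. Both hold because the spectral norm is stable under tensoring with identity and Lemma~\ref{lem:diamond-to-operator} is stated for unitary channels of arbitrary dimension. A second point to handle cleanly is that the constructed $\mathcal{N}_\varepsilon$ is an \emph{external} net (its points are compositions that may not lie in $\mathcal{U}^{\mathrm{QMLM}}$ only up to the fixed pieces, which do lie in it), so one either notes that each net element is itself a realizable QMLM unitary with the discretized parameters, or absorbs the standard factor-of-two from passing to an interior net. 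Neither affects the stated bound.
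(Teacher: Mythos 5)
Your proposal is correct and follows essentially the same route as the paper's proof: decompose the QMLM unitary into trainable $2$-qubit gates interleaved with fixed global unitaries, cover each trainable gate in spectral norm at resolution $\varepsilon/(2T)$ via Lemma~\ref{lem:Covering2QuditUnitaries}, convert to diamond distance with Lemma~\ref{lem:diamond-to-operator}, telescope using subadditivity (Lemma~\ref{lem:subadditivity-diamond}), and multiply the per-gate net cardinalities. Your closing remarks on locality bookkeeping and on the net being interior (since the discretized gates are themselves valid parameter settings) are also consistent with how the paper handles these points.
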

\begin{proof}
Let $\varepsilon\in (0,1]$, write $\tilde{\varepsilon}:=\frac{\varepsilon}{2T}$. By Lemma \ref{lem:Covering2QuditUnitaries}, there exists an $\tilde{\varepsilon}$-net $\tilde{\mathcal{N}}_{\tilde{\varepsilon}}$ of $\mathcal{U}\left( \mathbb{C}^2\otimes \mathbb{C}^2\right)$ w.r.t.~the spectral norm of size $\lvert \tilde{\mathcal{N}}_{\tilde{\varepsilon}}\rvert\leq \left(\nicefrac{6}{\tilde{\varepsilon}} \right)^{32} = \left(\nicefrac{12T}{\varepsilon} \right)^{32}$.

Note that any $U\in\mathcal{U}^\mathrm{QMLM}$ is of the form $U= V_T U_T V_{T-1} U_{T-1} V_{T-2}\ldots V_1 U_1 V_0$, where $U_t$, $1\leq t\leq T$, are a particular choice of the trainable $2$-qubit unitaries and $V_s$, $0\leq s\leq T+1$, are the non-trainable $n$-qubit unitaries occurring in the QMLM. (For ease of readability, we have not written out the tensor factors of identities accompanying the $U_t$.) We now consider the set of unitaries obtained by plugging the elements of $\tilde{\mathcal{N}}_{\tilde{\varepsilon}}$ as trainable $2$-qubit unitaries into the QMLM. That is, we take
\begin{equation}
    \mathcal{N}_\varepsilon 
    := \left\{V_T U_T V_{T-1} U_{T-1} V_{T-2}\ldots V_1 U_1 V_0~|~U_t\in \tilde{\mathcal{N}}_{\tilde{\varepsilon}}, 1\leq t\leq T\right\}.
\end{equation}
Let $U\in\mathcal{U}^\mathrm{QMLM}$ be an arbitrary $n$-qubit unitary that can be implemented by the QMLM, i.e., $U= V_T U_T V_{T-1} U_{T-1} V_{T-2}\ldots V_1 U_1 V_0$ for some $U_t\in\mathcal{U}(\mathbb{C}^2\otimes\mathbb{C}^2)$, $1\leq t\leq T$. Let $\mathcal{U}$ denote the corresponding unitary channel. As $\tilde{\mathcal{N}}_{\tilde{\varepsilon}}$ is an $\tilde{\varepsilon}$-net for the set of $2$-qubit unitaries, we can find $\tilde{U}_t\in\tilde{\mathcal{N}}_{\tilde{\varepsilon}}$, $1\leq t\leq T$, such that $\norm{U_t-\tilde{U}_t}\leq\tilde{\varepsilon}$ for all $1\leq t\leq T$. Then, the unitary channel $\tilde{\mathcal{U}}$ described by $\tilde{U}:=V_T \tilde{U}_T V_{T-1} \tilde{U}_{T-1} V_{T-2}\ldots V_1 \tilde{U}_1 V_0\in\mathcal{N}_\varepsilon$ satisfies 
\begin{equation}
    \norm{\mathcal{U}-\tilde{\mathcal{U}}}_\diamond
    \leq \sum\limits_{s=0}^{T+1} \norm{\mathcal{V}_{s} - \tilde{\mathcal{V}}_{s}}_\diamond + \sum\limits_{t=1}^T \norm{\mathcal{U}_t - \tilde{\mathcal{U}}_t}_\diamond
    \leq 2\sum\limits_{t=1}^T \norm{U_t - \tilde{U}_t}
    \leq \varepsilon,
\end{equation}
where we iteratively applied sub-additivity of the diamond distance (Lemma \ref{lem:subadditivity-diamond}) in the first step, then used the relation between the diamond distance of unitary channels to the spectral norm distance of the corresponding unitaries (Lemma \ref{lem:diamond-to-operator}), and in the final step plugged in the definition of $\tilde{\varepsilon}$.

Thus, we have shown that the set of unitary channels with unitaries in $\mathcal{N}_\varepsilon$ is an $\varepsilon$-covering net of the set of unitary channels with unitaries in $\mathcal{U}^\mathrm{QMLM}$ w.r.t.~the diamond distance. As $\lvert \mathcal{N}_\varepsilon\rvert = \lvert \tilde{\mathcal{N}}_{\tilde{\varepsilon}}\rvert^T$ (by definition of $\mathcal{N}_\varepsilon$), plugging in the bound on the size of $\tilde{\mathcal{N}}_{\tilde{\varepsilon}}$ then gives the desired bound on the cardinality of $\mathcal{N}_\varepsilon$ and thereby of our covering net.
\end{proof}

For variational quantum circuits consisting of CPTP maps, we obtain an analogous result upon replacing Lemma~\ref{lem:Covering2QuditUnitaries} by Lemma~\ref{lem:Covering2QuditCPTP} in the previous proof:

\begin{theorem}[Metric entropy bounds for QMLMs of CPTP maps]\label{thm-metric-entropy-vqc-cptp} 
Let $\mathcal{E}^{\mathrm{QMLM}}_{\vec{\theta}}(\cdot)$ be an $n$-qubit QMLM with $T$ parameterized $2$-qubit CPTP maps and an arbitrary number of non-trainable, global CPTP maps. Let $\mathcal{CPTP}^{\mathrm{QMLM}}\subset\mathcal{CPTP}\left((\mathbb{C}^{2})^{\otimes n}\right)$ denote the set of $n$-qubit CPTP maps that can be implemented by the circuit QMLM $\mathcal{E}^{\mathrm{QMLM}}_{\vec{\theta}}(\cdot)$.

For any $\varepsilon\in (0,1]$, there exists an $\varepsilon$-covering net $\mathcal{N}_\varepsilon$ of $\mathcal{CPTP}^{\mathrm{QMLM}}$ w.r.t.~the diamond distance such that the logarithm of its size can be upper bounded as
\begin{equation}
    \log(|\mathcal{N}_\epsilon|) \leq 512 T \log\left(\frac{6 T}{\varepsilon} \right).\label{eq:cptp-circuit-covering-number-bound}
\end{equation}
\end{theorem}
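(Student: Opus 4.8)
The plan is to mirror the proof of Theorem~\ref{thm-metric-entropy-vqc-unitary}, substituting the CPTP covering bound (Lemma~\ref{lem:Covering2QuditCPTP}) for its unitary counterpart and working throughout directly in the diamond norm. The crucial simplification relative to the unitary case is that, since the trainable elements are now genuine CPTP maps rather than unitaries, we never have to pass through the spectral norm, so the operator-to-diamond conversion (Lemma~\ref{lem:diamond-to-operator})---and with it the factor of two it introduced---is not needed. This is precisely what changes the required per-gate accuracy from $\varepsilon/2T$ to $\varepsilon/T$ and, after accounting for the larger ambient dimension of the CPTP ball, produces the exponent $512$ in place of $32$.

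First I would fix $\varepsilon\in(0,1]$ and set $\tilde{\varepsilon}:=\varepsilon/T$. By Lemma~\ref{lem:Covering2QuditCPTP}, there exists a $\tilde{\varepsilon}$-net $\tilde{\mathcal{N}}_{\tilde{\varepsilon}}$ of $\mathcal{CPTP}(\mathbb{C}^2\otimes\mathbb{C}^2)$ with respect to the diamond distance, of cardinality $\lvert\tilde{\mathcal{N}}_{\tilde{\varepsilon}}\rvert\leq(6/\tilde{\varepsilon})^{512}=(6T/\varepsilon)^{512}$. I would then exploit the layered structure of the QMLM: every $\mathcal{E}\in\mathcal{CPTP}^{\mathrm{QMLM}}$ decomposes as $\mathcal{E}=\mathcal{V}_T\circ\mathcal{E}_T\circ\mathcal{V}_{T-1}\circ\cdots\circ\mathcal{V}_1\circ\mathcal{E}_1\circ\mathcal{V}_0$, where the $\mathcal{E}_t$ are the trainable $2$-qubit CPTP maps (each tacitly tensored with the identity on the remaining qubits) and the $\mathcal{V}_s$ are the fixed global CPTP maps. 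I would define the candidate net $\mathcal{N}_\varepsilon$ as the collection of all maps obtained by replacing each trainable slot $\mathcal{E}_t$ by an element of $\tilde{\mathcal{N}}_{\tilde{\varepsilon}}$ while leaving every $\mathcal{V}_s$ fixed.

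To verify the covering property, given an arbitrary implementable $\mathcal{E}$ I would choose, for each $t$, a net element $\tilde{\mathcal{E}}_t\in\tilde{\mathcal{N}}_{\tilde{\varepsilon}}$ with $\lVert\mathcal{E}_t-\tilde{\mathcal{E}}_t\rVert_\diamond\leq\tilde{\varepsilon}$, and apply subadditivity of the diamond distance (Lemma~\ref{lem:subadditivity-diamond}) iteratively along the composition. Because the fixed layers coincide exactly, their differences contribute zero, leaving $\lVert\mathcal{E}-\tilde{\mathcal{E}}\rVert_\diamond\leq\sum_{t=1}^T\lVert\mathcal{E}_t-\tilde{\mathcal{E}}_t\rVert_\diamond\leq T\tilde{\varepsilon}=\varepsilon$. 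Since $\lvert\mathcal{N}_\varepsilon\rvert=\lvert\tilde{\mathcal{N}}_{\tilde{\varepsilon}}\rvert^T$ by construction, taking logarithms yields $\log\lvert\mathcal{N}_\varepsilon\rvert\leq 512\,T\log(6T/\varepsilon)$, which is the claimed bound~\eqref{eq:cptp-circuit-covering-number-bound}.

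I expect no serious obstacle: the argument is essentially a routine adaptation once Lemmas~\ref{lem:Covering2QuditCPTP} and~\ref{lem:subadditivity-diamond} are available. The only points demanding care are the $\tilde{\varepsilon}$-bookkeeping (here $\varepsilon/T$ rather than $\varepsilon/2T$, precisely because no norm conversion intervenes) and the implicit use of stability of the diamond norm under tensoring a local map with the identity, which guarantees that embedding a $2$-qubit map into the full $n$-qubit circuit does not inflate its diamond distance---this is what makes the subadditivity step respect the local covering accuracy. The genuinely important feature, namely that the resulting metric entropy is \emph{independent of $n$} rather than scaling with the exponential Euclidean dimension of $\mathcal{CPTP}((\mathbb{C}^2)^{\otimes n})$, is already secured by the layer-by-layer strategy inherited from the unitary case, so the theorem follows without additional conceptual input.
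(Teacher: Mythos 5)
Your proposal is correct and follows exactly the route the paper intends: the paper proves this theorem by instructing the reader to repeat the unitary-case argument with Lemma~\ref{lem:Covering2QuditCPTP} in place of Lemma~\ref{lem:Covering2QuditUnitaries}, which is precisely what you do, including the correct observation that dropping the spectral-to-diamond conversion (Lemma~\ref{lem:diamond-to-operator}) replaces $\varepsilon/2T$ by $\varepsilon/T$ and hence yields $6T/\varepsilon$ rather than $12T/\varepsilon$ inside the logarithm.
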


In both scenarios, the metric entropy grows at worst slightly super-linearly with the number of parameterized (and thus trainable) operations.

We also provide a generalization of these metric entropy bounds that is natural for the scenario in which trainable gates are reused in the quantum machine learning model:

\begin{theorem}[Metric entropy bounds for QMLMs of reused CPTP maps]\label{thm-metric-entropy-cptp-multiple-copies}
Let $\mathcal{E}^{\mathrm{QMLM}}_{\vec{\theta}}(\cdot)$ be an $n$-qubit QMLM with $T$ parameterized $2$-qubit CPTP maps, in which the $t^{\mathrm{th}}$ of these maps is used $M_t$ times, and an arbitrary number of non-trainable, global CPTP maps. Let $\mathcal{CPTP}^{\mathrm{QMLM}}\subset\mathcal{CPTP}\left((\mathbb{C}^2)^{\otimes n}\right)$ denote the set of $n$-qubit CPTP maps that can be implemented by the QMLM $\mathcal{E}^{\mathrm{QMLM}}_{\vec{\theta}}(\cdot)$.

For any $\varepsilon\in (0,1]$, there exists an $\varepsilon$-covering net $\mathcal{N}_\varepsilon$ of $\mathcal{CPTP}^{\mathrm{QMLM}}$ w.r.t.~the diamond distance such that the logarithm of its size can be upper bounded as
\begin{equation}
    \log(|\mathcal{N}_\epsilon|) \leq 512 \left(T \log\left(\frac{6 T}{\varepsilon} \right) + \sum\limits_{t=1}^{T} \log(M_t)\right).\label{eq:cptp-circuit-multiple-copies-covering-number-bound}
\end{equation}
\end{theorem}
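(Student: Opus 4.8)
The plan is to mirror the proof of Theorem~\ref{thm-metric-entropy-vqc-cptp}, modifying only the error budget so as to account for the fact that each trainable map now appears at several positions in the circuit. The sole structural difference between the two settings is that the $t^{\mathrm{th}}$ parameterized $2$-qubit channel $\mathcal{U}_t$ occupies $M_t$ slots rather than a single one, so an approximation error incurred when discretizing $\mathcal{U}_t$ will propagate $M_t$ times through the overall composition. The proof therefore amounts to introducing a gate-dependent covering precision that is finer for gates used more often.

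First I would allocate a non-uniform per-gate precision. Using Lemma~\ref{lem:Covering2QuditCPTP}, for each $t$ I would fix an $\tilde{\varepsilon}_t$-covering net $\tilde{\mathcal{N}}_t$ of the $2$-qubit CPTP maps with respect to the diamond distance, with the choice $\tilde{\varepsilon}_t := \varepsilon/(T M_t)$; Lemma~\ref{lem:Covering2QuditCPTP} then bounds its cardinality by $(6/\tilde{\varepsilon}_t)^{512} = (6TM_t/\varepsilon)^{512}$. I would build the global net $\mathcal{N}_\varepsilon$ exactly as in the single-use case, namely by substituting each admissible choice from $\tilde{\mathcal{N}}_t$ into all $M_t$ slots occupied by the $t^{\mathrm{th}}$ trainable gate — using the same net element in every one of these slots, so as to respect parameter sharing — while leaving the fixed global maps untouched.

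Next I would bound the diamond distance between an arbitrary implementable channel and its discretized surrogate. Writing both channels as the same ordered composition of fixed and trainable maps and replacing the trainable maps one slot at a time, iterated subadditivity of the diamond distance (Lemma~\ref{lem:subadditivity-diamond}) yields a telescoping sum in which the fixed maps cancel and the $t^{\mathrm{th}}$ trainable map contributes its per-copy error $\norm{\mathcal{U}_t - \tilde{\mathcal{U}}_t}_\diamond \leq \tilde{\varepsilon}_t$ exactly $M_t$ times. Hence the total error is at most $\sum_{t=1}^T M_t \tilde{\varepsilon}_t = \sum_{t=1}^T \varepsilon/T = \varepsilon$, confirming that $\mathcal{N}_\varepsilon$ is an $\varepsilon$-covering net. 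Since $\mathcal{N}_\varepsilon$ is a product over the local nets, $\lvert\mathcal{N}_\varepsilon\rvert = \prod_{t=1}^T \lvert\tilde{\mathcal{N}}_t\rvert$, and taking logarithms gives $\log\lvert\mathcal{N}_\varepsilon\rvert \leq 512 \sum_{t=1}^T \log(6TM_t/\varepsilon) = 512\bigl(T\log(6T/\varepsilon) + \sum_{t=1}^T \log M_t\bigr)$, which is the claimed bound.

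The point requiring care — more bookkeeping than genuine obstacle — is the multiplicity factor in the telescoping step: one must check that discretizing a shared gate to precision $\tilde{\varepsilon}_t$ contributes $M_t\tilde{\varepsilon}_t$, rather than $\tilde{\varepsilon}_t$, to the overall diamond distance, and then weight $\tilde{\varepsilon}_t \propto 1/M_t$ so that these contributions balance across gates. This $1/M_t$ weighting is exactly what turns the multiplicity into an additive $\sum_{t} \log M_t$ correction inside the net size while preserving the $T\log(6T/\varepsilon)$ leading term; a uniform precision would instead have produced a spurious factor scaling with the $M_t$.
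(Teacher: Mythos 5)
Your proposal is correct and follows essentially the same route as the paper's proof: you choose the gate-dependent precision $\tilde{\varepsilon}_t = \varepsilon/(T M_t)$, plug the same net element into all $M_t$ slots of the $t^{\mathrm{th}}$ shared gate, telescope via subadditivity of the diamond distance (Lemma~\ref{lem:subadditivity-diamond}) to get total error $\sum_t M_t \tilde{\varepsilon}_t = \varepsilon$, and multiply the net cardinalities from Lemma~\ref{lem:Covering2QuditCPTP}. The paper states this argument more tersely (deferring to the proofs of Theorems~\ref{thm-metric-entropy-vqc-unitary} and \ref{thm-metric-entropy-vqc-cptp}), but your filled-in bookkeeping, including the $1/M_t$ weighting that turns multiplicity into the additive $\sum_t \log M_t$ term, is exactly what the paper intends.
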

\begin{proof}
We can use the same reasoning as in the proof of Theorems \ref{thm-metric-entropy-vqc-unitary} and \ref{thm-metric-entropy-vqc-cptp} to show that we can define an $\varepsilon$-covering net $\mathcal{N}_\varepsilon$ for $\mathcal{CPTP}^{\mathrm{QMLM}}$ (w.r.t.~$\norm{\cdot}_\diamond$) by plugging the elements of an $\tilde{\varepsilon}_t$-net for $\mathcal{CPTP}(\mathbb{C}^2\otimes\mathbb{C}^2)$ into the positions of the QMLM corresponding to the $t^{\mathrm{th}}$ independently trainable $2$-qubit map, where $\tilde{\varepsilon}_t := \frac{\varepsilon}{T\cdot M_t}$. When picking the $\tilde{\varepsilon}_t$-nets with cardinality bounded as in Lemma \ref{lem:Covering2QuditCPTP}, the cardinality of $\mathcal{N}_\varepsilon$ can be bounded as 
\begin{equation}
    \lvert \mathcal{N}_\varepsilon\rvert 
    \leq\prod\limits_{t=1}^T \left( \frac{6TM_t}{\varepsilon}\right)^{512}
    = \left( \left( \frac{6T}{\varepsilon}\right)^{T}\cdot \left(\prod\limits_{t=1}^T M_t\right)\right)^{512}.
\end{equation}
Taking a logarithm gives the claimed metric entropy bound.
\end{proof}

The growth of the metric entropies in terms of $T$, the number of independently trainable maps, is still at most slightly super-linear. But the growth in terms of the numbers of times that the trainable maps are reused is only logarithmic.

Note that we have formulated the metric entropy bounds for the qubit case only, but they can naturally be extended to the qudit case. Then the upper bound will depend polynomially on the dimension $d$.

We provide one more metric entropy bound for QMLMs, which also takes the training procedure into account, in Theorem \ref{thm:metric-entropy-optimization-dependent}. Formulating this bound, however, requires us to fix some (notational) assumptions on the optimization procedure used for training. Therefore, we postpone this final metric entropy bound to Subsection \ref{SbSbSct:optimization-dependent}.

\begin{remark}\label{Rmk:k-local}
    Both in this section and in the following ones, we formulate our results for QMLMs whose parametrized gates act on (at most) $2$ qubits. Our proofs and results straightforwardly extend to the case in which the parametrized gates act on (at most) $\kappa$ qubits. In particular, when going from $2$- to $\kappa$-local, the $T$-dependence remains the same. Only the constant prefactors in the metric entropy bounds (and thus the generalization bounds) change, namely from $2\cdot 2^4$ to $2\cdot 2^{2\kappa}$ in the unitary case, and from $2\cdot 2^8$ to $2\cdot 2^{4\kappa}$ in the CPTP case. Since $\kappa$ is constant, then the latter is just prefactor that does not change the scaling of our theorems. 
\end{remark}

\subsection{Prediction error bounds for quantum machine learning models}\label{SbSctPredictionErrorBounds}

Using well-established tools from statistical learning theory, we can derive prediction error bounds for QMLMs from the covering number bounds established in Section \ref{SbSct:CoveringNumberBounds}. Before doing so, we describe our setting in detail.

During the training process, we optimize the parameters $\vec{\alpha}$ in the (CPTP map implemented by the) quantum machine learning model $\mathcal{E}^{\mathrm{QMLM}}_{\vec{\alpha}}(\cdot)$ according to some criteria and depending on the training data. Here, we write $\vec{\alpha}=(\vec{\theta}, \vec{k})$ if we consider both discrete, structural parameters $\vec{k}$ and continuous parameters $\vec{\theta}$. If the QMLM structure is fixed and only the continuous parameters are optimized, we write only $\vec{\theta}$ (instead of $\vec{\alpha}$). Note that we do not make any further assumptions on how the QMLM $\mathcal{E}^{\mathrm{QMLM}}_{\vec{\alpha}}(\cdot)$ depends on the parameters $\vec{\alpha}=(\vec{\theta}, \vec{k})$ other than that the discrete parameters only encode different choices of quantum circuit architectures. In particular, the dependence of the trainable gates on the continuous parameters $\vec{\theta}$ can be arbitrary.

We use an observable to quantify how good/bad the output state is, this will serve as our loss function.
More concretely, for an input $x_i$ and (classical or quantum) target output $y_i$, we define the loss function of the parameter setting $\vec{\alpha}$ to be
\begin{equation}
    \ell (\vec{\alpha};x_i,y_i)
    =\Tr\left[O^{\mathrm{loss}}_{x_i,y_i}(\mathcal{E}^{\mathrm{QMLM}}_{\vec{\alpha}}\otimes \operatorname{id})(\rho(x_i))\right],\label{eq:loss-function-appendix}
\end{equation}
for some Hermitian observables $O^{\mathrm{loss}}_{x_i, y_i}$. Here, $x\mapsto\rho(x)$ is some encoding of the classical data into quantum states that is fixed in advance.

As is common in classical learning theory, the prediction error bounds will depend on the largest (absolute) value that the loss function can attain. In our case, we therefore assume $C_\mathrm{loss}:=\sup_{x,y}\norm{O^{\mathrm{loss}}_{x, y}}<\infty$. That is, we assume that the spectral norm can be bounded uniformly over all possible loss observables.

For a training dataset $S=\{(x_i,y_i)\}_{i=1}^N$ of size $N\in\mathbb{N}$, the average loss on the training data is given by
\begin{equation}
    \hat{R}_S(\vec{\alpha})
    :=\frac{1}{N} \sum_{i=1}^N \ell (\vec{\alpha}; x_i, y_i)
    = \frac{1}{N} \sum_{i=1}^N \Tr\left[O^{\mathrm{loss}}_{x_i,y_i}(\mathcal{E}^{\mathrm{QMLM}}_{\vec{\alpha}}\otimes \operatorname{id})(\rho(x_i))\right],\label{eq:training-error}
\end{equation}
which is often referred to as the \emph{training error} or \emph{empirical risk}. This quantity can (in principle) be evaluated, given the parameter setting $\vec{\alpha}$ and the training data.

When we obtain a new input $x$, the prediction error of a parameter setting $\vec{\alpha}$ is taken to be the expected loss
\begin{equation}
    R(\vec{\alpha})
    :=\E_{(x, y)\sim P} \left[\ell (\vec{\alpha}; x, y)\right]
    = \E_{(x, y)\sim P} \Tr\left[O^{\mathrm{loss}}_{x,y}(\mathcal{E}^{\mathrm{QMLM}}_{\vec{\alpha}}\otimes \operatorname{id})(\rho(x))\right],\label{eq:prediction-error}
\end{equation}
where the expectation is w.r.t.~the distribution $P$ from which the training examples are generated. This quantity is called the \emph{prediction error} or \emph{expected risk}. The goal of any (classical or quantum) machine learning procedure is to achieve a small prediction error with high success probability.

As the underlying distribution $P$ is usually unknown, we cannot directly evaluate the prediction error, even if we know the parameters $\vec{\alpha}$. In practice, one therefore often takes the training error as a proxy for the prediction error. However, this procedure can only succeed if the difference between the prediction and the training error, the so-called \emph{generalization error}, is small. Our covering number bounds allow us to prove rigorous bounds on the generalization error incurred by a variational quantum machine learning method in the so-called \emph{``Probably Approximately Correct'' (PAC)} sense. That is, we provide bounds on the generalization error in terms of the desired success probability and the training data size. Thereby, our results provide guarantees on the prediction performance of a quantum machine learning model on unseen data, if that model performs well on the training data.

Our main result is the following:
\begin{theorem}[Mother Theorem]\label{thm-mother}
Let $\mathcal{E}^{\mathrm{QMLM}}_{\vec{\alpha}}(\cdot)$ be a QMLM with a variable structure. Suppose that, for every $k\in\mathbb{N}$, there are at most $G_\tau\in\mathbb{N}$ allowed structures with exactly $\tau$ parameterized $2$-qubit CPTP maps, in which the $t^{\mathrm{th}}$ of these maps is taken from a set $\mathcal{M}_t$ and used $M_t$ times, and an arbitrary number of non-trainable, global CPTP maps. Also, for each $t\in\mathbb{N}$, let $\mathcal{E}^0_t\in\mathcal{CPTP}\left((\mathbb{C})^{\otimes 2}\right)$ be a fixed reference CPTP map.
Let $P$ be a probability distribution over input-output pairs. Suppose that, given training data $S=\{(x_i,y_i)\}_{i=1}^N$ of size $N$, our optimization of the QMLM over structures and parameters w.r.t.~the loss function $\ell(\vec{\alpha};x_i,y_i)=\Tr\left[O^{\mathrm{loss}}_{x_i,y_i}(\mathcal{E}^{\mathrm{QMLM}}_{\vec{\alpha}}\otimes \operatorname{id})(\rho(x_i))\right]$ yields a (data-dependent) structure with $T=T(N)$ independently parameterized $2$-qubit CPTP maps, in which the $t^{\mathrm{th}}$ of these maps is taken from $\mathcal{M}_t$ and used $M_t$ times, as well as the parameter setting $\vec{\alpha}^\ast = \vec{\alpha}^\ast(S)$.

Then, with probability at least $1-\delta$ over the choice of i.i.d.~training data $S$ of size $N$ according to $P$,
\small
\begin{align}
    &R(\vec{\alpha}^\ast) - \hat{R}_S(\vec{\alpha}^\ast)\nonumber\\
    &\in \mathcal{O}\left(C_\mathrm{loss}\min\left\{\sqrt{\frac{K\max\limits_{1\leq t\leq T}c_t\log(K)}{N}} + \sqrt{\frac{K\log(T)}{N}} + \sqrt{\frac{K\max\limits_{1\leq t\leq T}c_t\log(M_t)}{N}} + \sum_{\substack{t=1 \\t\neq t_1,\ldots,t_K}}^T M_t \Delta_{t} + \sqrt{\frac{\log(G_T)}{N}} +\sqrt{\frac{\log(\nicefrac{1}{\delta})}{N}}\right\}\right),
\end{align}
\normalsize
where $\Delta^T_1,\ldots,\Delta^T_T$ denote the (data-dependent) distance between the trainable maps in the output QMLM to the respective reference maps $\mathcal{E}^0_1,\ldots,\mathcal{E}^0_T$, $C_\mathrm{loss}=\sup\limits_{x,y}\norm{O^{\mathrm{loss}}_{x, y}}$ is the maximum (absolute) value attainable by the loss function, and the minimum is over all $K\in\{0,\ldots,T\}$ and choices of pairwise distinct $t_1,\ldots,t_K\in\{1,\ldots,T\}$.

Moreover, if the loss is not evaluated exactly, but an unbiased estimator is built from $\sigma_{\mathrm{est}}$ subsampled training data points (as in Section \ref{SbSbSct:estimated-training-error}), we only incur an additional error of $\mathcal{O}\left(\sqrt{\nicefrac{\log(\nicefrac{1}{\delta})}{\sigma_{\mathrm{est}}}}\right)$.
\end{theorem}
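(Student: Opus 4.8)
The plan is to bound the uniform generalization error $\sup_{\vec\alpha}\lvert R(\vec\alpha)-\hat R_S(\vec\alpha)\rvert$ by combining the diamond-norm covering bounds already established with a chaining argument, and then to extract the structural and optimization information through two nested union bounds. The first step is to note that the loss is Lipschitz in the implemented channel: since $\ell(\vec\alpha;x,y)-\ell(\vec\beta;x,y)=\Tr[O^{\mathrm{loss}}_{x,y}((\EC^{\mathrm{QMLM}}_{\vec\alpha}-\EC^{\mathrm{QMLM}}_{\vec\beta})\otimes\operatorname{id})(\rho(x))]$, Hölder's inequality gives $\lvert\ell(\vec\alpha;x,y)-\ell(\vec\beta;x,y)\rvert\leq C_{\mathrm{loss}}\lVert\EC^{\mathrm{QMLM}}_{\vec\alpha}-\EC^{\mathrm{QMLM}}_{\vec\beta}\rVert_\diamond$. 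Hence the induced loss class inherits the diamond metric (rescaled by $C_{\mathrm{loss}}$), so covering nets for the realizable channels transfer directly to covering nets for the loss class.

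Next I would fix a single architecture and build, for each $K\in\{0,\ldots,T\}$, a refined covering net. Relabelling so that $\Delta_1\geq\cdots\geq\Delta_T$, I cover the $K$ gates of largest optimization distance at resolution $\tilde\varepsilon_t=\varepsilon/(T M_t)$ via Lemma~\ref{lem:Covering2QuditCPTP}, while freezing the remaining $T-K$ trainable maps to their reference maps $\EC^0_t$. Iterating subadditivity of the diamond distance (Lemma~\ref{lem:subadditivity-diamond}), exactly as in the proof of Theorem~\ref{thm-metric-entropy-cptp-multiple-copies}, this net covers the realizable class up to diamond distance $\varepsilon+\sum_{k=K+1}^T M_k\Delta_k$, with log-cardinality $O(K\log(MT/\varepsilon))$. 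The key point is that only the $K$ retained gates contribute to the entropy, whereas the frozen tail contributes only the additive approximation error $\sum_{k>K}M_k\Delta_k$.

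With these nets in hand I would derive the fixed-architecture, fixed-$K$ bound by a two-step reduction: McDiarmid's inequality (Lemma~\ref{lem:McDiarmid}) concentrates $\sup_{\vec\alpha}\lvert R-\hat R_S\rvert$ around its mean at rate $O(C_{\mathrm{loss}}\sqrt{\log(1/\delta)/N})$, symmetrization bounds that mean by a Rademacher complexity, and a chaining argument controlled by Massart's lemma (Lemma~\ref{lem:Massart}) at each dyadic scale converts the metric-entropy estimate into the Dudley-type integral $\tfrac{C_{\mathrm{loss}}}{\sqrt N}\int_{\varepsilon_0}^{D}\sqrt{K\log(MT/\varepsilon)}\,d\varepsilon$, where $D=O(1)$ is the diamond diameter and the integral is truncated at $\varepsilon_0=\sum_{k>K}M_k\Delta_k$, below which the net only approximates. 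Evaluating this integral yields $O(C_{\mathrm{loss}}\sqrt{K\log(MT)/N})$, and adding the truncation residual $C_{\mathrm{loss}}\varepsilon_0$ gives the fixed-$K$ bound. A union bound over the $T+1$ choices of $K$ then produces the minimum, with the extra $\log T$ absorbed into the $O$.

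Finally I would lift this to variable structure via a second union bound over all architectures. Invoking the fixed-architecture bound for an architecture with $\tau$ gates at failure probability $\delta_\tau\propto\delta/(G_\tau\tau^2)$ keeps the total failure probability summable, since $\sum_\tau\tau^{-2}<\infty$, and contributes $\sqrt{\log(G_\tau\tau^2/\delta)/N}=O(\sqrt{\log(G_T)/N}+\sqrt{\log(1/\delta)/N})$ for the realized structure; the estimated-loss variant then follows from a further Hoeffding bound (Lemma~\ref{lem:Hoeffding}) over the $\sigma_{\mathrm{est}}$ subsamples, adding $O(\sqrt{\log(1/\delta)/\sigma_{\mathrm{est}}})$. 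The main obstacle I anticipate is the chaining step: standard Dudley chaining presumes a genuine net at every scale, whereas here the net degenerates into a pure approximation below $\varepsilon_0$. I would therefore need a strengthened, truncated chaining bound that stops at $\varepsilon_0$, pays the residual $C_{\mathrm{loss}}\varepsilon_0$ only once, and still certifies that the scale-dependent entropy $K\log(MT/\varepsilon)$ integrates to $\sqrt{K\log(MT)/N}$ without producing spurious $\log N$ or $\log(1/\varepsilon)$ factors.
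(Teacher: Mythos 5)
Your overall route coincides with the paper's: transfer diamond-norm covering nets to the loss class via H\"older's inequality, build for each $K$ a net that covers $K$ trainable maps at fine resolution while freezing the remaining $T-K$ at their reference maps (paying $\sum_t M_t\Delta_t$ over the frozen ones as approximation error), run McDiarmid + symmetrization + truncated chaining to get a fixed-architecture bound, then lift to variable structure by a union bound over architectures at confidence $\propto\delta/(G_\tau\tau^2)$, and handle the estimated loss by Hoeffding. Even the ``strengthened, truncated chaining'' you flag as the main obstacle is exactly what the paper supplies: the Dudley integral is cut off at the scale $\tilde\Delta=\sum_{t\neq t_1,\ldots,t_K}M_t\Delta_t$ and the residual $\mathcal{O}(C_\mathrm{loss}\tilde\Delta)$ is paid once.

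There is, however, one genuine gap. You construct your net by ``relabelling so that $\Delta_1\geq\cdots\geq\Delta_T$'' and freezing the $T-K$ gates with the smallest optimization distances. But the $\Delta_t$ are distances traversed during training on $S$, so which gates get frozen is a \emph{data-dependent} choice; your covering net (equivalently, the function class over which you take the supremum) is a random object correlated with the sample. McDiarmid's inequality (Lemma~\ref{lem:McDiarmid}), the ghost-sample symmetrization, and Massart's lemma (Lemma~\ref{lem:Massart}) all require the class and the nets at every chaining scale to be fixed before the data are drawn, so your fixed-$K$ bound does not apply to the class that actually contains $\vec{\alpha}^\ast(S)$. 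The paper closes precisely this hole in two places: its net in Theorem~\ref{thm:metric-entropy-optimization-dependent} ranges over \emph{all} $\binom{T}{K}$ subsets of gates (which is where the $K\log(T)$ term in Eq.~\eqref{eq:sizegamma} comes from, making the net data-independent), and the proof of Theorem~\ref{thm:optimization-dependent-gen-bound} additionally takes a union bound over the $\binom{T}{K}\leq T^K$ subset choices so that the high-probability statement holds simultaneously for all of them, after which the data-dependent best subset may be selected. Your union bound over only the $T+1$ values of $K$ contributes merely $\sqrt{\log(T)/N}$ and cannot substitute for this. The omission is fixable at no asymptotic cost, since $\log\binom{T}{K}\leq K\log T$ is of the same order as the $\sqrt{K\log(MT)/N}$ term your net entropy already produces, but as written the concentration step is invalid.
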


Some of the important aspects of the upper bound on the generalization error of a QMLM provided by Theorem \ref{thm-mother} are: a dependence on the square root of the inverse of the training data size ($N$); an at worst slightly superlinear dependence on the number of trainable maps ($T$), which can improve if only a smaller number ($K$) of gates experience non-negligible changes during the optimization; a logarithmic dependence on the number of uses ($M_t$); a logarithmic dependence on the number of different architectures ($G_T$); and a logarithmic dependence on the reciprocal of the desired confidence level ($\delta$).

We build up to the proof of Theorem \ref{thm-mother} by first establishing our basic QML generalization error bound and then extending it in different directions. More precisely, we structure our presentation as follows: We start with the pedagogical Subsection \ref{SbSbSct:prelude}, in which we show a simple proof of how metric entropy bounds lead to generalization bounds, albeit not yet to their strongest form. In Subsection \ref{SbSbSct:prediction-error-fixed-architecture-single-copy}, we demonstrate how to improve upon the simple proof strategy using a more involved technique. Then, we extend the basic generalization error bounds in multiple directions, namely to multiple copies and reused trainable maps (Subsection \ref{SbSbSct:prediction-error-multiple-copies}), variable architecture (Subsection \ref{SbSbSct:variable-architecture}), optimization-dependent guarantees (Subsection \ref{SbSbSct:optimization-dependent}), and to a scenario in which we can not evaluate the loss function exactly, but only indirectly through an unbiased estimator (Subsection \ref{SbSbSct:estimated-training-error}). Finally, we bring together all these extensions into the most general form of our result (Subsection \ref{SbSbSct:mother-theorem}). Our line of reasoning is summarized in Fig.~\ref{fig:flowchart}.

\begin{figure}[ht]
    \centering
    \includegraphics[width=\linewidth]{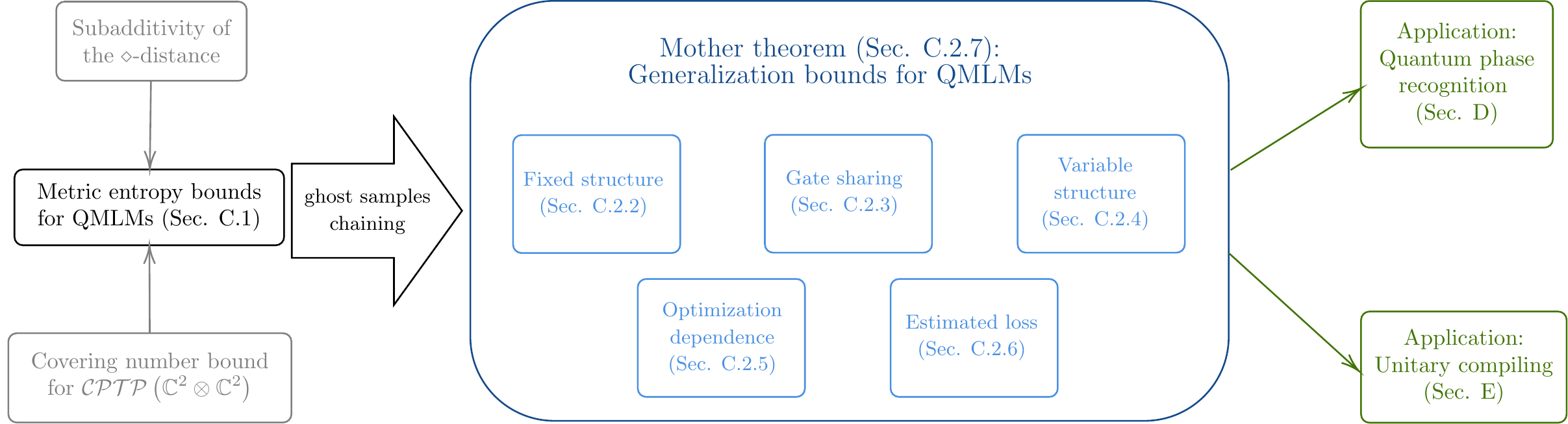}
    \caption{\textbf{Visualization of the proof structure.} We prove metric entropy bounds and use them to derive generalization bounds for different QMLM settings. We then apply our theory to quantum phase recognition and unitary compiling.}
    \label{fig:flowchart}
\end{figure}

\begin{remark}\label{rmk:nonlinear-loss}
    A loss function of the form of Eq.~\eqref{eq:loss-function-appendix} automatically has a certain linear structure, namely, it depends linearly on the output state $(\mathcal{E}^{\mathrm{QMLM}}_{\vec{\alpha}}\otimes \operatorname{id})(\rho(x_i))$.
    Notice, however, that we can introduce also a certain type of nonlinearity through the spectral decomposition of the loss observables $O^{\mathrm{loss}}_{x_i,y_i}$.
    Namely, suppose that we obtain a classical output from the QMLM by measuring an observable $O_{\textrm{out}}$ with spectral decomposition $O_{\textrm{out}} = \sum_j \lambda_j \ketbra{j}{j}$. That is, given an input $x_i$, we output $\lambda_j$ with probability $p_j(\vec{\alpha},x_i) = \Tr[\ketbra{j}{j}(\mathcal{E}^{\mathrm{QMLM}}_{\vec{\alpha}}\otimes \operatorname{id})(\rho(x_i))] = \bra{j}(\mathcal{E}^{\mathrm{QMLM}}_{\vec{\alpha}}\otimes \operatorname{id})(\rho(x_i))\ket{j}$.
    Now, we can, for example, define the loss observables as $O^{\mathrm{loss}}_{x_i,y_i}\coloneqq \sum_j (y_i - \lambda_j)^2\ketbra{j}{j}$, so that $\ell (\vec{\alpha};x_i,y_i) = \mathbb{E}_{}[(y_i - \lambda_j)^2]$ becomes the expected square loss between the true label $y_i$ and our output $\lambda_i$. Here, the expectation is w.r.t.~$(p_j(\vec{\alpha},x_i))_j$.
    Clearly, here we can replace $(y_i - \lambda_j)^2$ by any nonlinear loss function $\tilde{\ell}(y_i,\lambda_j)$ of interest.
\end{remark}

\subsubsection{Prelude: Metric entropy bounds imply generalization error bounds}\label{SbSbSct:prelude}

This section is intended to help readers not yet familiar with the theory of classical machine learning gain an intuition for how we derive our analytical results. We present a technically simple proof of a generalization bound for a fixed-architecture QMLM, which, however, is worse than that of Theorem \ref{thm:prediction-error-bound-qnn} by a factor logarithmic in the training data size. Therefore, readers already well versed in statistical learning theory, or readers who want to focus on the results and not the proofs, can safely skip this pedagogical section.

We demonstrate how the metric entropy bound from Theorem \ref{thm-metric-entropy-vqc-cptp} gives rise to a generalization bound for QMLMs with a fixed architecture, in which each trainable $2$-qubit CPTP map is used only once. The simplified proof given in this section consists in combining Hoeffding's concentration inequality (Lemma \ref{lem:Hoeffding}) with a union bound over a suitable covering net. Informally, we show that it suffices to prove good generalization simultaneously for all elements in a covering net, which we can obtain from a union bound over standard concentration guarantees for each single element of the covering net.

\begin{theorem}[Prediction error bound for quantum machine learning - Fixed structure (Preliminary version)]\label{thm:prediction-error-bound-qnn-simplified} Let $\mathcal{E}^{\mathrm{QMLM}}_{\vec{\theta}}(\cdot)$ be a QMLM with a fixed architecture consisting of $T$ parameterized 2-qubit CPTP maps and an arbitrary number of non-trainable, global CPTP maps. Let $P$ be a probability distribution over input-output pairs. Suppose that, given training data $S=\{(x_i,y_i)\}_{i=1}^N$ of size $N$, our optimization yields the parameter setting $\vec{\theta}^\ast = \vec{\theta}^\ast(S)$.

Then, with probability at least $1-\delta$ over the choice of i.i.d.~training data $S$ of size $N$ according to $P$, 
\begin{equation}
    R(\vec{\theta}^\ast) - \hat{R}_S(\vec{\theta}^\ast)
    \in \mathcal{O}\left(C_\mathrm{loss} \left(\sqrt{\frac{T\log\left(TN\right)}{N}} + \sqrt{\frac{\log(\nicefrac{1}{\delta})}{N}} \right)\right).
\end{equation}
\end{theorem}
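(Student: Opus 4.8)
The plan is to run the classical uniform-convergence argument at a single discretization scale: combine a pointwise tail bound with a union bound over a diamond-norm covering net furnished by Theorem~\ref{thm-metric-entropy-vqc-cptp}, and then absorb the discretization error using the diamond norm. First I would record uniform boundedness of the loss: since $(\mathcal{E}^{\mathrm{QMLM}}_{\vec{\theta}}\otimes\operatorname{id})(\rho(x))$ is a density operator and $\norm{O^{\mathrm{loss}}_{x,y}}\leq C_\mathrm{loss}$, Hölder's inequality gives $\abs{\ell(\vec{\theta};x,y)}\leq C_\mathrm{loss}$, so each summand of $\hat{R}_S$ lies in $[-C_\mathrm{loss},C_\mathrm{loss}]$. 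For any \emph{fixed} parameter setting $\vec{\theta}$, the generalization error $R(\vec{\theta})-\hat{R}_S(\vec{\theta})=\frac1N\sum_{i=1}^N(\mathbb{E}[\ell(\vec{\theta};x_i,y_i)]-\ell(\vec{\theta};x_i,y_i))$ is an average of $N$ i.i.d.\ bounded variables, so Hoeffding's inequality (Lemma~\ref{lem:Hoeffding}) yields $\mathbb{P}[R(\vec{\theta})-\hat{R}_S(\vec{\theta})\geq t]\leq \exp(-Nt^2/(2C_\mathrm{loss}^2))$.

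The obstacle is that the trained $\vec{\theta}^\ast$ is data-dependent, so Hoeffding does not apply to it directly, as $\hat{R}_S(\vec{\theta}^\ast)$ is no longer an average of independent terms. To get around this I would fix a scale $\varepsilon$ and invoke Theorem~\ref{thm-metric-entropy-vqc-cptp} to obtain a diamond-norm $\varepsilon$-covering net $\mathcal{N}_\varepsilon$ of the realizable channels with $\log\abs{\mathcal{N}_\varepsilon}\leq 512\,T\log(6T/\varepsilon)$. A union bound of the tail above over $\mathcal{N}_\varepsilon$ guarantees that, with probability at least $1-\abs{\mathcal{N}_\varepsilon}\exp(-Nt^2/(2C_\mathrm{loss}^2))$, \emph{every} net element $\tilde{\mathcal{E}}$ obeys $R(\tilde{\mathcal{E}})-\hat{R}_S(\tilde{\mathcal{E}})<t$ simultaneously. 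The crucial quantum-specific step is to transfer this from the net to the actual output channel: picking $\tilde{\mathcal{E}}\in\mathcal{N}_\varepsilon$ with $\norm{\mathcal{E}^{\mathrm{QMLM}}_{\vec{\theta}^\ast}-\tilde{\mathcal{E}}}_\diamond\leq\varepsilon$, Hölder's inequality together with the definition of the diamond norm (which already accounts for the auxiliary system entering through $\otimes\operatorname{id}$) gives $\abs{\ell(\vec{\theta}^\ast;x,y)-\tilde{\ell}(x,y)}\leq C_\mathrm{loss}\norm{\mathcal{E}^{\mathrm{QMLM}}_{\vec{\theta}^\ast}-\tilde{\mathcal{E}}}_\diamond\leq C_\mathrm{loss}\varepsilon$ uniformly in $(x,y)$. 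Hence both $R$ and $\hat{R}_S$ shift by at most $C_\mathrm{loss}\varepsilon$, and on the good event $R(\vec{\theta}^\ast)-\hat{R}_S(\vec{\theta}^\ast)\leq t+2C_\mathrm{loss}\varepsilon$.

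Finally I would solve $\abs{\mathcal{N}_\varepsilon}\exp(-Nt^2/(2C_\mathrm{loss}^2))=\delta$ for $t$, giving $t=C_\mathrm{loss}\sqrt{2(\log\abs{\mathcal{N}_\varepsilon}+\log(1/\delta))/N}$, and choose the discretization scale $\varepsilon=1/N$, so that the approximation term $2C_\mathrm{loss}\varepsilon$ is of order $1/N$ and hence negligible, while $\log(6T/\varepsilon)=\log(6TN)$. Substituting the metric-entropy bound and splitting the square root then produces the claimed $\mathcal{O}(C_\mathrm{loss}(\sqrt{T\log(TN)/N}+\sqrt{\log(1/\delta)/N}))$. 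The main conceptual hurdle is the approximation step, whose correctness rests on measuring closeness in the diamond norm rather than a norm that ignores reference systems; and the reason this preliminary bound is looser than the chained version is precisely the single-scale choice $\varepsilon\sim 1/N$, which injects the spurious $\log N$ factor that a multi-scale chaining argument would later remove.
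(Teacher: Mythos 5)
Your proposal is correct and follows essentially the same route as the paper's proof: Hoeffding's inequality for each fixed net element, a union bound over the diamond-norm covering net of Theorem~\ref{thm-metric-entropy-vqc-cptp}, and the H\"older/diamond-norm transfer step to move from the net to the data-dependent $\vec{\theta}^\ast$ at cost $2C_\mathrm{loss}\varepsilon$. The only (immaterial) difference is the discretization scale --- you take $\varepsilon = 1/N$ while the paper takes $\varepsilon = \sqrt{T/N}$ --- and both choices yield the same $\mathcal{O}\bigl(C_\mathrm{loss}\bigl(\sqrt{T\log(TN)/N}+\sqrt{\log(1/\delta)/N}\bigr)\bigr)$ bound, including the spurious $\log N$ factor you correctly attribute to the single-scale argument.
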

\begin{proof}
For any parameter setting $\vec{\theta}$, fixed independently of the choice of training data, we see that $\ell (\vec{\theta}; x_1, y_1),\ldots,\ell(\vec{\theta}; x_N, y_N)$ are independent random variables taking values in $[-C_{\mathrm{loss}} , C_\mathrm{loss}]$. So Hoeffding's Lemma (Lemma \ref{lem:Hoeffding}) tells us that, $\forall\eta>0$, have
\begin{equation}
    \mathbb{P}_S\left[R(\vec{\theta}) - \hat{R}_S(\vec{\theta}) > \eta \right] \leq \exp\left( - \frac{N \eta^2}{2C_{\mathrm{loss}}^2} \right).\label{eq:hoeffding-fixed-parameter}
\end{equation}
Here, $\mathbb{P}_S [\cdot] = \mathbb{P}_{S\sim P^N} [\cdot]$ denotes the probability over training data sets $S=\{(x_i,y_i)\}_{i=1}^N$ of size $N$, with the $(x_i,y_i)$ drawn i.i.d.~from the probability measure $P$.
Next, we let $\varepsilon=\sqrt{\nicefrac{T}{N}}>0$, take $\mathcal{N}_\varepsilon$ to be an $\varepsilon$-covering net of the set of CPTP maps that can be implemented by the QMLM, and we take a union bound over $\mathcal{N}_\varepsilon$, with which we obtain
\begin{equation}
    \mathbb{P}_S\left[\exists \mathcal{E}^{\mathrm{QMLM}}_{\vec{\theta}}\in\mathcal{N}_\varepsilon:~R(\vec{\theta}) - \hat{R}_S(\vec{\theta}) > \eta\right] \leq \lvert\mathcal{N}_\varepsilon\rvert\cdot \exp\left( - \frac{N \eta^2}{2C_{\mathrm{loss}}^2} \right).
\end{equation}
As we took $\mathcal{N}_\varepsilon$ to be an $\varepsilon$-covering net (w.r.t.~the diamond norm) of the class of CPTP maps that the QMLM can implement, and since $\norm{\mathcal{E}-\tilde{\mathcal{E}}}_\diamond\leq\varepsilon$ directly implies, for all $x\in\mathcal{X}$, $y\in\mathcal{Y}$,
\begin{equation}
    \left\lvert\Tr\left[O^{\mathrm{loss}}_{x,y}(\mathcal{E}\otimes\operatorname{id})(\rho(x))\right]-\Tr\left[O^{\mathrm{loss}}_{x,y}(\tilde{\mathcal{E}}\otimes\operatorname{id})(\rho(x))\right]\right\rvert
    \leq \norm{O^{\mathrm{loss}}_{x,y}}\cdot \norm{\mathcal{E}-\tilde{\mathcal{E}}}_\diamond
    \leq C_\mathrm{loss}\varepsilon,
\end{equation}
we conclude, because of the form of the loss function $\ell$, that
\begin{align}
    \mathbb{P}_S\left[ R(\vec{\theta}^\ast) - \hat{R}_S(\vec{\theta}^\ast) > \eta + 2C_\mathrm{loss}\varepsilon\right] 
    &\leq \mathbb{P}_S\left[\exists \mathcal{E}^{\mathrm{QMLM}}_{\vec{\theta}}\in\mathcal{N}_\varepsilon:~\hat{R}_S(\vec{\theta}) > R(\vec{\theta}) + \eta\right] \\
    &\leq \lvert\mathcal{N}_\varepsilon\rvert\cdot \exp\left( - \frac{N \eta^2}{2C_{\mathrm{loss}}^2} \right)
\end{align}
Thus, for any $\delta\in (0,1)$, by choosing $\eta = C_{\mathrm{loss}} \sqrt{\frac{2\log(\nicefrac{\lvert\mathcal{N}_\varepsilon\rvert}{\delta})}{N}}$, we can guarantee that, with probability at least $1 - \delta$ over the choice of training data $S$ of size $N$, we have
\begin{equation}
    R(\vec{\theta}^\ast) - \hat{R}_S(\vec{\theta}^\ast)
    \leq C_{\mathrm{loss}} \sqrt{\frac{2\log(\nicefrac{\lvert\mathcal{N}_\varepsilon\rvert}{\delta})}{N}} + 2C_{\mathrm{loss}}\sqrt{\frac{T}{N}}.
\end{equation}
Now, we recall that, by Theorem \ref{thm-metric-entropy-vqc-cptp}, we can take $\mathcal{N}_\varepsilon$ to satisfy $\log(\lvert\mathcal{N}_\varepsilon\rvert) \leq 512T\log(\nicefrac{6T}{\varepsilon})$. Plugging this into the previous bound, we see that, with probability at least $1 - \delta$ over the choice of training data of size $N$, we have
\begin{align}
    R(\vec{\theta}^\ast) - \hat{R}_S(\vec{\theta}^\ast)
    &\leq C_{\mathrm{loss}} \sqrt{\frac{2\cdot (512T\log(\nicefrac{6T}{\varepsilon}) + \log(\nicefrac{1}{\delta}))}{N}} + 2C_{\mathrm{loss}}\sqrt{\frac{T}{N}}\\
    &\leq C_{\mathrm{loss}} \sqrt{\frac{2\cdot (512T\log(6\sqrt{TN}) + \log(\nicefrac{1}{\delta}))}{N}} + 2C_{\mathrm{loss}}\sqrt{\frac{T}{N}}\\
    &\in \mathcal{O}\left(C_{\mathrm{loss}} \left(\sqrt{\frac{T\log(TN)}{N}} + \sqrt{\frac{\log(\nicefrac{1}{\delta})}{N}}\right) \right),
\end{align}
which is the claimed generalization error bound.
\end{proof}

\begin{remark}
At first glance, it might seem that simply plugging the parameter setting $\vec{\theta}^\ast$ into Eq.~\eqref{eq:hoeffding-fixed-parameter} would already give us a good concentration bound for the parameter setting $\vec{\theta}^\ast$ obtained through training and that the union bound over the covering net is not actually necessary in the above proof. However, as the parameter setting $\vec{\theta}^\ast=\vec{\theta}^\ast(S)$ depends on the whole training data set $S$, the random variables $\ell (\vec{\theta}^\ast; x_i, y_i)$, $i=1,\ldots,N$, are not statistically independent. Thus, Hoeffding's inequality alone cannot be used to obtain a version of Eq.~\eqref{eq:hoeffding-fixed-parameter} with $\vec{\theta}$ replaced by the data-dependent $\vec{\theta}^\ast$. 
\end{remark}

The generalization bound established in Theorem \ref{thm:prediction-error-bound-qnn-simplified} already shows the right behavior in terms of the dependence on $T$, the number of trainable maps. However, the dependence on $N$, the sample size, still contains an undesirable logarithmic term. In classical statistical learning theory, it is well known that a proof strategy as above, based on combining Hoeffding's concentration inequality with a union bound over a covering net, incurs such a $\log(N)$-term. Fortunately, a technique for removing this term is also known and we will use it to tighten the prediction error bound in the next subsection.

\subsubsection{Basic prediction error bound for fixed architecture}\label{SbSbSct:prediction-error-fixed-architecture-single-copy}

Our first prediction error bound is for the case of a variational QMLM with a fixed architecture. In particular, while the parameters in the trainable $2$-qubit CPTP maps can be optimized over, the structure of the QMLM, i.e., the arrangement of the different elements, and in particular the overall depth and size, remain fixed. (We provide a generalization to variable circuit architectures in Section \ref{SbSbSct:variable-architecture}.) In this scenario, we have the following generalization error bound:

\begin{theorem}[Prediction error bound for quantum machine learning - Fixed structure]\label{thm:prediction-error-bound-qnn}
Let $\mathcal{E}^{\mathrm{QMLM}}_{\vec{\theta}}(\cdot)$ be a QMLM with a fixed architecture consisting of $T$ parameterized 2-qubit CPTP maps and an arbitrary number of non-trainable, global CPTP maps. Let $P$ be a probability distribution over input-output pairs. Suppose that, given training data $S=\{(x_i,y_i)\}_{i=1}^N$ of size $N$, our optimization yields the parameter setting $\vec{\theta}^\ast = \vec{\theta}^\ast(S)$.

Then, with probability at least $1-\delta$ over the choice of i.i.d.~training data $S$ of size $N$ according to $P$, 
\begin{equation}
    R(\vec{\theta}^\ast) - \hat{R}_S(\vec{\theta}^\ast)
    \in \mathcal{O}\left(C_\mathrm{loss} \left(\sqrt{\frac{T\log\left(T\right)}{N}} + \sqrt{\frac{\log(\nicefrac{1}{\delta})}{N}} \right)\right).\label{eq:gen-bound-fixed-architecture}
\end{equation}
\end{theorem}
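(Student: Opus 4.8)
The plan is to prove the bound by \emph{uniform convergence}: since the data-dependent optimum $\vec{\theta}^\ast$ is a particular element of the hypothesis class, we have $R(\vec{\theta}^\ast)-\hat{R}_S(\vec{\theta}^\ast)\leq \varphi(S)$, where $\varphi(S):=\sup_{\vec{\theta}}\bigl(R(\vec{\theta})-\hat{R}_S(\vec{\theta})\bigr)$ and the supremum ranges over all channels in $\mathcal{CPTP}^{\mathrm{QMLM}}$. This sidesteps the statistical-dependence obstruction flagged in the remark after Theorem~\ref{thm:prediction-error-bound-qnn-simplified}. The gain over the preliminary bound, i.e.\ removing the spurious $\log N$, will come from replacing the Hoeffding-plus-union-bound step by a symmetrization-and-chaining argument.

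First I would establish concentration of $\varphi(S)$ about its mean. Swapping a single data point changes each empirical average $\hat{R}_S$ by at most $\nicefrac{2C_\mathrm{loss}}{N}$ (using $\abs{\ell}\leq C_\mathrm{loss}$), so $\varphi$ obeys the bounded-differences condition with $b_i=\nicefrac{2C_\mathrm{loss}}{N}$. McDiarmid's inequality (Lemma~\ref{lem:McDiarmid}) then gives $\varphi(S)\leq \E_S[\varphi(S)]+\mathcal{O}\bigl(C_\mathrm{loss}\sqrt{\nicefrac{\log(\nicefrac{1}{\delta})}{N}}\bigr)$ with probability at least $1-\delta$, which supplies the second term of the claimed bound.

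Next I would bound $\E_S[\varphi(S)]$ by the expected Rademacher complexity of the loss class through the standard symmetrization (ghost-sample) argument, yielding $\E_S[\varphi(S)]\leq 2\,\E_{S,\sigma}\bigl[\sup_{\vec{\theta}}\tfrac1N\sum_i\sigma_i\ell(\vec{\theta};x_i,y_i)\bigr]$ with i.i.d.\ Rademacher $\sigma_i$. The step that imports the quantum content is the Lipschitz relation $\abs{\ell(\vec{\theta};x,y)-\ell(\vec{\theta}';x,y)}\leq C_\mathrm{loss}\norm{\mathcal{E}^{\mathrm{QMLM}}_{\vec{\theta}}-\mathcal{E}^{\mathrm{QMLM}}_{\vec{\theta}'}}_\diamond$ (already used in the preliminary proof), which ensures the empirical $\ell_2$-metric on the loss vectors is dominated by $C_\mathrm{loss}$ times the diamond distance. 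Hence the empirical covering numbers entering the chaining integral are controlled by the diamond-distance metric entropy bound of Theorem~\ref{thm-metric-entropy-vqc-cptp}, namely $\log\mathcal{N}(\mathcal{CPTP}^{\mathrm{QMLM}},\norm{\cdot}_\diamond,\varepsilon)\leq 512\,T\log(\nicefrac{6T}{\varepsilon})$. I would then invoke Dudley's chaining (whose per-scale building block is the finite-set estimate of Massart's lemma, Lemma~\ref{lem:Massart}) to bound the Rademacher complexity by the entropy integral $\tfrac{C}{\sqrt N}\int_0^{2C_\mathrm{loss}}\sqrt{\log\mathcal{N}(\cdots,\nicefrac{\varepsilon}{C_\mathrm{loss}})}\,d\varepsilon$, which is precisely what eliminates the single-scale $\log N$ factor.

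Finally I would evaluate this integral. After rescaling, the integrand is $\sqrt{512\,T\log(\nicefrac{6T}{\varepsilon'})}$ (with the covering number equal to $1$, hence contributing nothing, once $\varepsilon'>1$), and since $\int_0^1\sqrt{\log(\nicefrac{6T}{\varepsilon'})}\,d\varepsilon'\in\mathcal{O}(\sqrt{\log T})$ (the $\log(\nicefrac{1}{\varepsilon'})$ part integrates to a constant, the $\log(6T)$ part to $\sqrt{\log T}$), the integral contributes $\mathcal{O}\bigl(C_\mathrm{loss}\sqrt{\nicefrac{T\log T}{N}}\bigr)$, crucially with no $N$ inside the logarithm. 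Combining with the McDiarmid term yields Eq.~\eqref{eq:gen-bound-fixed-architecture}. I expect the main obstacle to be the chaining step itself: one must verify that the Rademacher process $\vec{\theta}\mapsto\tfrac1N\sum_i\sigma_i\ell(\vec{\theta};x_i,y_i)$ has sub-Gaussian increments with respect to the diamond-distance metric so the generic chaining bound applies, handle measurability and the telescoping over a sequence of covering nets at geometrically decreasing scales for the supremum over the uncountable parameter set, and confirm the integral is correctly truncated at the diameter $\mathcal{O}(C_\mathrm{loss})$ so that it converges. This is presumably the ``non-standard version'' of the chaining bound the authors mention deriving in full in the Methods section.
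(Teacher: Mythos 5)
Your proposal is correct and follows essentially the same route as the paper's proof: bounding the generalization error by the supremum $\varphi(S)$, applying McDiarmid's inequality for the $\sqrt{\log(\nicefrac{1}{\delta})/N}$ term, symmetrizing with a ghost sample and Rademacher variables, and then running a dyadic chaining argument (with Massart's lemma at each scale and the Lipschitz transfer from diamond distance to loss differences) against the metric entropy bound of Theorem~\ref{thm-metric-entropy-vqc-cptp}, with the Dudley-type integral evaluating to $\mathcal{O}(\sqrt{T\log T})$. The only cosmetic difference is that the paper inserts a second McDiarmid step to pass from the expected to the empirical Rademacher complexity before chaining, and it carries out the chaining explicitly with telescoping covering nets rather than invoking a generic sub-Gaussian-increments theorem, which resolves the very concern you flag at the end.
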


In case the training data contains quantum labels, we assume the training data states to be reproducible so that we can use the data both for the optimization procedure and for evaluating the training error.

\begin{proof}
The proof proceeds in two steps: The first step is to upper-bound the generalization error in terms of the expected supremum of a random process. (This well known technique is described, e.g., in Theorem $3.3$ in \cite{mohri2018foundations}.) In the second step, we invoke the chaining technique to further upper-bound this expected supremum in terms of covering numbers. (This method goes back to \cite{dudley1999uniform}. See, e.g., Section $8$ of \cite{vershynin2018highdimensional} for a pedagogical presentation.) At this point, we apply our covering numbers bounds to finish the proof.

For ease of notation in the first step, we define $\varphi:(\mathcal{X}\times\mathcal{Y})^N\to\mathbb{R}$ as $\varphi(S):= \sup_{\vec{\theta}} \left\{R(\vec{\theta}) - \hat{R}_S(\vec{\theta})\right\}$, where the supremum goes over all possible parameter settings in the QMLM.
We first observe that, if $S=\{(x_i,y_i)\}_{i=1}^N$ and $\tilde{S}=\{(\tilde{x}_i,\tilde{y}_i)\}_{i=1}^N$ differ only in a single labelled example, then $\lvert\varphi(S)-\varphi(\tilde{S})\rvert\leq \nicefrac{2C_\mathrm{loss}}{N}$ (because the loss function has values in $[-C_\mathrm{loss}, C_\mathrm{loss}]$). Therefore, we can apply McDiarmid's inequality (Lemma \ref{lem:McDiarmid}) and obtain that, for every $\varepsilon>0$, $\mathbb{P}_{S}[\varphi(S)-\mathbb{E}_{\tilde{S}}[\varphi(\tilde{S})]\geq\varepsilon]\leq \exp\left(-\nicefrac{N\varepsilon^2}{2C_\mathrm{loss}^2}\right)$. Hence, for every $\delta\in (0,1)$, with probability $\geq 1-\nicefrac{\delta}{2}$ over the choice of training data, we have , with $\vec{\theta}^\ast = \vec{\theta}^\ast(S)$ as in the statement of the Theorem,
\begin{equation}
    R(\vec{\theta}^\ast) - \hat{R}_S(\vec{\theta}^\ast)
    \leq \varphi(S)
    \leq \mathbb{E}_{\tilde{S}}[\varphi(\tilde{S})] +C_\mathrm{loss}\sqrt{\frac{2\log(\nicefrac{2}{\delta})}{N}}. \label{eq:risk-concentration}
\end{equation}
We now upper-bound $\mathbb{E}_{\tilde{S}}[\varphi(\tilde{S})]$. To this end, we introduce a so-called ghost sample. Namely, we take $S'=\{(x'_i,y'_i)\}_{i=1}^N$ to be an i.i.d.~copy of $\tilde{S}$. Then, we can bound
\begin{align}
    \mathbb{E}_{\tilde{S}}[\varphi(\tilde{S})]
    &= \mathbb{E}_{\tilde{S}}\left[\sup_{\vec{\theta}} \left\{\frac{1}{N}\sum\limits_{i=1}^N \left(\mathbb{E}_{(x'_i, y'_i)\sim P}[\ell (\vec{\theta}; x'_i, y'_i)] - \ell (\vec{\theta}; \tilde{x}_i, \tilde{y}_i) \right)\right\}\right]\\
    &\leq \mathbb{E}_{\tilde{S}, S'}\left[\sup\limits_{\vec{\theta}}\left\{\frac{1}{N}\sum\limits_{i=1}^N \left(\ell (\vec{\theta}; x'_i, y'_i) - \ell (\vec{\theta}; \tilde{x}_i, \tilde{y}_i)  \right)\right\}\right].
\end{align}
Now, we use a standard symmetrization argument with i.i.d.~Rademacher random variables to further upper-bound the right hand side. That is, we let $\sigma_1,\ldots,\sigma_N$ be i.i.d.~Rademacher random variables, each distributed uniformly on $\{-1,1\}$. As multiplying $\left( \ell (\vec{\theta}; x'_i, y'_i) - \ell (\vec{\theta}; \tilde{x}_i, \tilde{y}_i)\right)$ by $- 1$ is equivalent to interchanging the i.i.d.~copies $(\tilde{x}_i, \tilde{y}_i)$ and $(x'_i,y'_i)$, which leaves the expectation invariant, we can introduce an additional expectation value over Rademacher variables as follows:
\begin{align}
    \mathbb{E}_{\tilde{S}, S'}\left[\sup\limits_{\vec{\theta}}\left\{\frac{1}{N}\sum\limits_{i=1}^N \left(\ell (\vec{\theta}; x'_i, y'_i) - \ell (\vec{\theta}; \tilde{x}_i, \tilde{y}_i)  \right)\right\}\right]
    &= \mathbb{E}_{\tilde{S}, S'}\mathbb{E}_\sigma\left[\sup\limits_{\vec{\theta}}\left\{\frac{1}{N}\sum\limits_{i=1}^N \sigma_i\left(\ell (\vec{\theta}; x'_i, y'_i) - \ell (\vec{\theta}; \tilde{x}_i, \tilde{y}_i)  \right)\right\}\right]\\
    &\leq 2\mathbb{E}_{\tilde{S}}\mathbb{E}_\sigma\left[\sup\limits_{\vec{\theta}} \frac{1}{N}\sum\limits_{i=1}^N \sigma_i \ell (\vec{\theta}; \tilde{x}_i, \tilde{y}_i)\right].
\end{align}
The quantity on the right hand side is not an empirical quantity, i.e., it cannot be directly evaluated only from the training data without knowledge of the underlying distribution $P$. However, another application of McDiarmid's inequality shows that, for every $\varepsilon>0$, 
\begin{equation}
    \mathbb{P}_S\left[\mathbb{E}_{\tilde{S}}\mathbb{E}_\sigma\left[\sup\limits_{\vec{\theta}} \frac{1}{N}\sum\limits_{i=1}^N \sigma_i \ell (\vec{\theta}; \tilde{x}_i, \tilde{y}_i)\right] - \mathbb{E}_\sigma\left[\sup\limits_{\vec{\theta}} \frac{1}{N}\sum\limits_{i=1}^N \sigma_i \ell (\vec{\theta}; x_i, y_i)\right] \geq \varepsilon\right]
    \leq \exp\left(- \frac{N\varepsilon^2}{2C_\mathrm{loss}^2}\right),
\end{equation}
where we again used that the loss function has values in $[-C_\mathrm{loss}, C_\mathrm{loss}]$. In other words, for every $\delta\in (0,1)$, with probability $\geq 1-\nicefrac{\delta}{2}$ over the choice of training data, we have
\begin{equation}
    \mathbb{E}_{\tilde{S}}\mathbb{E}_\sigma\left[\sup\limits_{\vec{\theta}} \frac{1}{N}\sum\limits_{i=1}^N \sigma_i \ell (\vec{\theta}; \tilde{x}_i, \tilde{y}_i)\right]
    \leq 
    \mathbb{E}_\sigma\left[\sup\limits_{\vec{\theta}} \frac{1}{N}\sum\limits_{i=1}^N \sigma_i \ell (\vec{\theta}; x_i, y_i)\right]
    +C_\mathrm{loss}\sqrt{\frac{2\log(\nicefrac{2}{\delta})}{N}}.\label{eq:expectation-Rademacher-bound}
\end{equation}
When applying a union bound, we can combine Eq.~\eqref{eq:risk-concentration} and \eqref{eq:expectation-Rademacher-bound} to conclude: For every $\delta\in (0,1)$, with probability $\geq 1-\delta$ over the choice of training data of size $N$, we have
\begin{equation}
    R(\vec{\theta}^\ast) - \hat{R}(\vec{\theta}^\ast)
    \leq 2\mathbb{E}_\sigma\left[\sup\limits_{\vec{\theta}} \frac{1}{N}\sum\limits_{i=1}^N \sigma_i \ell (\vec{\theta}; x_i, y_i)\right] + 3C_\mathrm{loss}\sqrt{\frac{2\log(\nicefrac{2}{\delta})}{N}}.\label{eq:Rademacher-generalization-bound}
\end{equation}
This concludes the first step of the proof.

As a second step, we use chaining to upper-bound $\mathbb{E}_\sigma\left[\sup_{\vec{\theta}} \frac{1}{N}\sum_{i=1}^N \sigma_i \ell (\vec{\theta}; x_i, y_i)\right]$ in terms of covering numbers. For $j\in\mathbb{N}_0$, define $\alpha_j:=2^{-j}C_\mathrm{loss}$. By Theorem \ref{thm-metric-entropy-vqc-cptp}, for every $j\in\mathbb{N}_0$, there exists an $2^{-j}$-covering net $\mathcal{N}_j$ (w.r.t.~the diamond norm) of the set of CPTP maps that can be implemented by the QMLM, satisfying $\left\lvert\mathcal{N}_j\right\rvert= \left(\nicefrac{6T}{2^{-j}}\right)^{512 T} = \left(6T\cdot 2^j\right)^{512 T}$. In particular, for every $j\in\mathbb{N}$ and for every parameter setting $\vec{\theta}$, there exists a CPTP map $\mathcal{E}_{\vec{\theta},j}\in \mathcal{N}_j$ and $\norm{\mathcal{E}^{\mathrm{QMLM}}_{\vec{\theta}}-\mathcal{E}_{\vec{\theta},j}}\diamond\leq 2^{-j}$. For $j=0$, we can take the $1$-covering net $\mathcal{N}_0=\{0\}$.

With this observation at hand, we can bound, for any $m\in\mathbb{N}$,
\begin{align}
    &\hphantom{=}\mathbb{E}_\sigma\left[\sup\limits_{\vec{\theta}} \frac{1}{N}\sum\limits_{i=1}^N \sigma_i \ell (\vec{\theta}; x_i, y_i)\right]\\
    &= \mathbb{E}_\sigma\left[\sup\limits_{\vec{\theta}} \frac{1}{N}\sum\limits_{i=1}^N \sigma_i \Tr\left[O^\mathrm{loss}_{x_i,y_i}(\mathcal{E}^{\mathrm{QMLM}}_{\vec{\theta}}\otimes \operatorname{id})(\rho(x_i))\right]\right]\\
    &= \frac{1}{N}\mathbb{E}_\sigma\left[\sup\limits_{\vec{\theta}}\left\{ \sum\limits_{i=1}^N \sigma_i \left(\Tr\left[O^\mathrm{loss}_{x_i,y_i}((\mathcal{E}^{\mathrm{QMLM}}_{\vec{\theta}} -\mathcal{E}_{\vec{\theta},m})\otimes \operatorname{id})(\rho(x_i))\right] + \sum_{j=1}^m \Tr\left[O^\mathrm{loss}_{x_i,y_i}((\mathcal{E}_{\vec{\theta},j} - \mathcal{E}_{\vec{\theta},j-1})\otimes \operatorname{id})(\rho(x_i))\right]\right)\right\}\right]\\
    &\leq \frac{1}{N}\mathbb{E}_\sigma\left[\sup\limits_{\vec{\theta}} \sum\limits_{i=1}^N \sigma_i \Tr\left[O^\mathrm{loss}_{x_i,y_i}((\mathcal{E}^{\mathrm{QMLM}}_{\vec{\theta}} -\mathcal{E}_{\vec{\theta},m})\otimes \operatorname{id})(\rho(x_i))\right]\right]\\
    &\hphantom{\leq~}+ \frac{1}{N}\sum_{j=1}^m\mathbb{E}_\sigma\left[\sup\limits_{\vec{\theta}} \sum\limits_{i=1}^N \sigma_i\Tr\left[O^\mathrm{loss}_{x_i,y_i}((\mathcal{E}_{\vec{\theta},j} - \mathcal{E}_{\vec{\theta},j-1})\otimes \operatorname{id})(\rho(x_i))\right] \right]\label{eq:chaining-setup}
\end{align}
where we used the telescopic sum representation $\mathcal{E}^{\mathrm{QMLM}}_{\vec{\theta}} = \mathcal{E}^{\mathrm{QMLM}}_{\vec{\theta}} - \mathcal{E}_{\vec{\theta},m} + \sum_{j=1}^m (\mathcal{E}_{\vec{\theta},j} - \mathcal{E}_{\vec{\theta},j-1})$.
We bound the two summands appearing in Eq.~\eqref{eq:chaining-setup} separately. For the first term, we can apply Hölder's inequality to obtain
\begin{align}
    &\frac{1}{N}\mathbb{E}_\sigma\left[\sup\limits_{\vec{\theta}} \sum\limits_{i=1}^N \sigma_i \Tr\left[O^\mathrm{loss}_{x_i,y_i}((\mathcal{E}^{\mathrm{QMLM}}_{\vec{\theta}} -\mathcal{E}_{\vec{\theta},m})\otimes \operatorname{id})(\rho(x_i))\right]\right]\\
    &\leq \frac{1}{N}\mathbb{E}_\sigma\left[\sup\limits_{\vec{\theta}} \sum\limits_{i=1}^N \left\lvert\Tr\left[O^\mathrm{loss}_{x_i,y_i}((\mathcal{E}^{\mathrm{QMLM}}_{\vec{\theta}} -\mathcal{E}_{\vec{\theta},m})\otimes \operatorname{id})(\rho(x_i))\right]\right\rvert\right]\\
    &\leq \frac{1}{N}\mathbb{E}_\sigma\left[\sup\limits_{\vec{\theta}} \sum\limits_{i=1}^N C_\mathrm{loss}\norm{((\mathcal{E}^{\mathrm{QMLM}}_{\vec{\theta}} -\mathcal{E}_{\vec{\theta},m})\otimes \operatorname{id})(\rho(x_i))}_1 \right]\\
    &\leq \frac{C_\mathrm{loss}}{N}\mathbb{E}_\sigma\left[\sup\limits_{\vec{\theta}} \sum\limits_{i=1}^N \norm{\mathcal{E}^{\mathrm{QMLM}}_{\vec{\theta}} -\mathcal{E}_{\vec{\theta},m}}_\diamond \right]\\
    &\leq C_\mathrm{loss}\cdot 2^{-m}\\
    &= \alpha_m.\label{eq:chaining-first-term}
\end{align}
For the second term, we observe that, thanks to Minkowski's inequality, for every parameter setting $\vec{\theta}$, 
\begin{align}
    &\hphantom{=}\sqrt{\sum\limits_{i=1}^N \left\lvert \Tr\left[O^\mathrm{loss}_{x_i,y_i}((\mathcal{E}_{\vec{\theta},j} - \mathcal{E}_{\vec{\theta},j-1})\otimes \operatorname{id})(\rho(x_i))\right]\right\rvert^2}\\
    &\leq \sqrt{\sum\limits_{i=1}^N \left\lvert \Tr\left[O^\mathrm{loss}_{x_i,y_i}((\mathcal{E}_{\vec{\theta},j} - \mathcal{E}^\mathrm{QMLM}_{\vec{\theta}})\otimes \operatorname{id})(\rho(x_i))\right]\right\rvert^2}  + \sqrt{\sum\limits_{i=1}^N \left\lvert \Tr\left[O^\mathrm{loss}_{x_i,y_i}((\mathcal{E}^\mathrm{QMLM}_{\vec{\theta}} - \mathcal{E}_{\vec{\theta},j-1})\otimes \operatorname{id})(\rho(x_i))\right]\right\rvert^2}\\
    &\leq \sqrt{N}C_\mathrm{loss}\left(\norm{\mathcal{E}_{\vec{\theta},j}-\mathcal{E}^{\mathrm{QMLM}}_{\vec{\theta}}}_\diamond + \norm{\mathcal{E}^{\mathrm{QMLM}}_{\vec{\theta}} -\mathcal{E}_{\vec{\theta},j-1}}_\diamond \right)\\
    &\leq \sqrt{N}(\alpha_j + \alpha_{j-1})\\
    &\leq 3\alpha_j\sqrt{N}.
\end{align}
Therefore, for each $1\leq j\leq m$, we can apply Massart's Lemma (Lemma \ref{lem:Massart}) to the set
\begin{equation}
    A
    \coloneqq \left\{\left((\Tr\left[O^\mathrm{loss}_{x_i,y_i}(\mathcal{E}_{\vec{\theta},j} - \mathcal{E}_{\vec{\theta},j-1})(\rho(x_i))\right]\right)_{i=1}^N\right\}_{\mathcal{E}_{\vec{\theta},j}\in\mathcal{N}_j,\mathcal{E}_{\vec{\theta},j-1}\in\mathcal{N}_{j-1}}\subset \mathbb{R}^N
\end{equation}
with radius $3\alpha_j\sqrt{N}$ and cardinality $\leq \lvert\mathcal{N}_j\rvert\cdot \lvert\mathcal{N}_{j-1}\rvert\leq \lvert\mathcal{N}_j\rvert^2$ to obtain
\begin{align}
    \frac{1}{N}\sum_{j=1}^m\mathbb{E}_\sigma\left[\sup\limits_{\vec{\theta}} \sum\limits_{i=1}^N \sigma_i\Tr\left[O^\mathrm{loss}_{x_i,y_i}(\mathcal{E}_{\vec{\theta},j} - \mathcal{E}_{\vec{\theta},j-1})(\rho(x_i))\right] \right]
    &\leq \frac{3}{\sqrt{N}}\sum_{j=1}^m\alpha_j\sqrt{2\log\left( \lvert\mathcal{N}_j\rvert^2\right)}\\
    &\leq \frac{6}{\sqrt{N}}\sum_{j=1}^m\alpha_j\sqrt{512T\log\left(6T\cdot 2^{j}\right)},
\end{align}
where we used the bound on the sizes of the covering nets in the last step.

If we now use $2^{-j}=2\int_{2^{-j-1}}^{2^{-j}}\mathrm{d}\alpha$, we can rewrite the upper bound as
\begin{align}
    \frac{6}{\sqrt{N}}\sum_{j=1}^m\alpha_j\sqrt{512T\log\left(6T\cdot 2^{j}\right)}
    &= \frac{12}{\sqrt{N}}\sum_{j=1}^m\int\limits_{2^{-j-1}}^{2^{-j}}C_\mathrm{loss}\sqrt{512T\log\left(6T\cdot 2^{j}\right)}~\mathrm{d}\alpha\\
    &\leq \frac{12}{\sqrt{N}}\sum_{j=1}^m\int\limits_{2^{-j-1}}^{2^{-j}}C_\mathrm{loss}\sqrt{512T\log\left(\frac{6T}{\alpha}\right)}~\mathrm{d}\alpha\\
    &=  \frac{12C_\mathrm{loss}}{\sqrt{N}}\int\limits_{2^{-(m+1)}}^{2^{-1}}\sqrt{512T\log\left(\frac{6T}{\alpha}\right)}~\mathrm{d}\alpha,\label{eq:chaining-second-term}
\end{align}
where, in the first inequality, we used that $2^j\leq\nicefrac{1}{\alpha}$ holds inside the limits of the integral.

Combining Eq.~\eqref{eq:chaining-first-term} and \eqref{eq:chaining-second-term}, we have proved that, for every $m\in\mathbb{N}$,
\begin{equation}
    \mathbb{E}_\sigma\left[\sup\limits_{\vec{\theta}} \frac{1}{N}\sum\limits_{i=1}^N \sigma_i \ell (\vec{\theta}; x_i, y_i)\right]
    \leq\alpha_m + \frac{12 C_\mathrm{loss}}{\sqrt{N}}\int\limits_{2^{-(m+1)}}^{2^{-1}}\sqrt{512T\log\left(\frac{6T}{\alpha}\right)}~\mathrm{d}\alpha.\label{eq:chaining-before-limit}
\end{equation}
If we take the limit $m\to\infty$, this becomes
\begin{align}
    \mathbb{E}_\sigma\left[\sup\limits_{\vec{\theta}} \frac{1}{N}\sum\limits_{i=1}^N \sigma_i \ell (\vec{\theta}; x_i, y_i)\right]
    &\leq
    \frac{12 C_\mathrm{loss}}{\sqrt{N}} \sqrt{512T} \int\limits_{0}^{\nicefrac{1}{2}}\sqrt{\log\left(\frac{6T}{\alpha}\right)}~\mathrm{d}\alpha\\
    &\leq \frac{12 C_\mathrm{loss}}{\sqrt{N}} \sqrt{512T} \cdot\int\limits_{0}^{\nicefrac{1}{2}}\sqrt{\log\left( 6T\right) + \log\left(\frac{1}{\alpha}\right)}~\mathrm{d}\alpha\\
    &\leq \frac{12 C_\mathrm{loss}}{\sqrt{N}} \sqrt{512T}\cdot\int\limits_{0}^{\nicefrac{1}{2}}\sqrt{\log\left( 6T\right)} + \sqrt{\log\left(\frac{1}{\alpha}\right)}~\mathrm{d}\alpha\\
    &= \frac{12 C_\mathrm{loss}}{\sqrt{N}} \sqrt{512T}\cdot \left(\frac{1}{2}\sqrt{\log\left( 6T\right)} + \int\limits_{0}^{\nicefrac{1}{2}}\sqrt{\log\left(\frac{1}{\alpha}\right)}~\mathrm{d}\alpha\right)\\
    &= \frac{12 C_\mathrm{loss}}{\sqrt{N}} \sqrt{512T}\cdot \left(\frac{1}{2}\sqrt{\log\left( 6T\right)} + \frac{1}{2}\sqrt{\log 2} - \frac{\sqrt{\pi}}{2}\operatorname{erf}(\sqrt{\log 2}) - \frac{\sqrt{\pi}}{2}\right),\label{eq:dudley-bound}
\end{align}
where we used the integral $\int\sqrt{\log\nicefrac{1}{x}}~\mathrm{d}x = x\sqrt{\log\nicefrac{1}{x}} - (\nicefrac{\sqrt{\pi}}{2}) \cdot \operatorname{erf}(\sqrt{\log\nicefrac{1}{x}})$, with the error function defined as $\operatorname{erf}(x)=\frac{2}{\sqrt{\pi}}\int_{0}^{x}\exp(-t^2)~\mathrm{d}t$.

We can now combine Eq.~\eqref{eq:Rademacher-generalization-bound} with \eqref{eq:dudley-bound} and obtain: With probability $\geq 1-\delta$ over the choice of training data of size $N$, we have
\begin{align}
    R(\vec{\theta}^\ast) - \hat{R}(\vec{\theta}^\ast)
    &\leq \frac{24 C_\mathrm{loss}}{\sqrt{N}} \sqrt{512T}\cdot \left(\frac{1}{2}\sqrt{\log\left( 6T\right)} + \frac{1}{2}\sqrt{\log 2} - \frac{\sqrt{\pi}}{2}\operatorname{erf}(\sqrt{\log 2}) - \frac{\sqrt{\pi}}{2}\right) + 3C_\mathrm{loss}\sqrt{\frac{2\log(\nicefrac{2}{\delta})}{N}} \\
    &\in \mathcal{O}\left(C_\mathrm{loss}\left(\sqrt{\frac{T\log \left(T\right)}{N}} + \sqrt{\frac{\log(\nicefrac{1}{\delta})}{N}} \right) \right),
\end{align}
which is the claimed prediction error bound.
\end{proof}

\begin{remark}
For simplicity, throughout the proof of Theorem \ref{thm:prediction-error-bound-qnn} we have treated $\vec{\theta}^\ast(S)$ as a deterministic function of $S$. However, the proof extends to the case in which the parameter setting $\vec{\theta}^\ast(S)$ is a random variable depending on $S$. Then, the generalization error bound would hold with high probability over the choice of the training data and over the internal randomness of the optimization procedure. This is the case for all our prediction error bounds and is important because quantum subroutines in QML procedures make them inherently probabilistic.
\end{remark}

\begin{remark}\label{rmk:improvement-chaining}
    A conceptual difference between the proof of Theorem~\ref{thm:prediction-error-bound-qnn-simplified} and that of Theorem~\ref{thm:prediction-error-bound-qnn}, which can also be seen as an underlying reason for why the latter leads to a tighter bound than the former, is the following: 
    To prove Theorem~\ref{thm:prediction-error-bound-qnn-simplified}, we used a $\sqrt{\nicefrac{T}{N}}$-covering net for the set of CPTP maps that the QMLM can implement. This can be seen as measuring the complexity of the QMLM at a single specific resolution, namely the resolution $\varepsilon=\sqrt{\nicefrac{T}{N}}$. 
    In contrast, the proof of Theorem~\ref{thm:prediction-error-bound-qnn} considers a complexity measure for the QMLM obtained by averaging over complexities (here, covering numbers) at multiple different resolutions. 
    Thus, from a high-level view, the chaining-based proof strategy for Theorem~\ref{thm:prediction-error-bound-qnn} improves upon the reasoning behind Theorem~\ref{thm:prediction-error-bound-qnn-simplified} by taking multiple resolutions into account.
\end{remark}

Theorem \ref{thm:prediction-error-bound-qnn} can be interpreted as follows: By taking the training data size $N$ to effectively scale linearly in the number of trainable elements $T$, we can ensure that a small training error also implies a small prediction error (with high probability).

In the following, we describe extensions of Theorem \ref{thm:prediction-error-bound-qnn} to different scenarios of interest, and then summarize these in a general ``mother theorem.''

\subsubsection{Extension to multiple copies and gate-sharing}\label{SbSbSct:prediction-error-multiple-copies}

In practice, one often employs quantum machine learning models that reuse the same parameterized gates multiple times, such as quantum convolutional neural networks (QCNNs) \cite{cong2019quantum}. In such a scenario, we speak of ``gate-sharing''. While the number of trainable elements in such models can still be large, only few of them can be trained \emph{independently}. As a first extension of Theorem \ref{thm:prediction-error-bound-qnn}, we show that the generalization performance of such a models is determined by the effective number of independently trainable elements.

\begin{corollary}\label{cor:weightieQNN}
Let $\mathcal{E}^{\mathrm{QMLM}}_{\vec{\theta}}(\cdot)$ be a QMLM with a fixed architecture consisting of $T$ independently parameterized 2-qubit CPTP maps, in which the $t^{\mathrm{th}}$ of these maps is used $M_t$ times, and an arbitrary number of non-trainable, global CPTP maps. Let $P$ be a probability distribution over input-output pairs. Suppose that, given training data $S=\{(x_i,y_i)\}_{i=1}^N$ of size $N$, our optimization yields the parameter setting $\vec{\theta}^\ast = \vec{\theta}^\ast(S)$.

Then, with probability at least $1-\delta$ over the choice of i.i.d.~training data $S$ of size $N$ according to $P$, 
\begin{equation}
    R(\vec{\theta}^\ast) - \hat{R}_S(\vec{\theta}^\ast)
    \in \mathcal{O}\left( C_\mathrm{loss}\left(\sqrt{\frac{T\log(T)}{N}} + \sqrt{\frac{\sum_{t=1}^{T} \log(M_t)}{N}} + \sqrt{\frac{\log(\nicefrac{1}{\delta})}{N}}\right)\right).
\end{equation}
\end{corollary}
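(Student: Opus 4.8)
The plan is to replay the chaining proof of Theorem~\ref{thm:prediction-error-bound-qnn} essentially verbatim, substituting the sharper metric entropy bound of Theorem~\ref{thm-metric-entropy-cptp-multiple-copies} wherever the single-use bound of Theorem~\ref{thm-metric-entropy-vqc-cptp} was previously invoked. The first half of that argument makes no reference to the circuit's covering numbers: the reduction of the generalization error to a Rademacher complexity via McDiarmid's inequality (Lemma~\ref{lem:McDiarmid}), the ghost-sample symmetrization, and the second McDiarmid step replacing population quantities by empirical ones all carry over unchanged. In particular, they still produce the concentration term $\mathcal{O}(C_\mathrm{loss}\sqrt{\nicefrac{\log(\nicefrac{1}{\delta})}{N}})$ and leave us needing to bound $\mathbb{E}_\sigma[\sup_{\vec{\theta}}\tfrac{1}{N}\sum_i\sigma_i\ell(\vec{\theta};x_i,y_i)]$ by chaining.

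Next, I would set $\alpha_j := 2^{-j}C_\mathrm{loss}$ and, for each $j$, take $\mathcal{N}_j$ to be a $2^{-j}$-net (w.r.t.~the diamond norm) of the gate-sharing QMLM's image, now furnished by Theorem~\ref{thm-metric-entropy-cptp-multiple-copies} with $\log\lvert\mathcal{N}_j\rvert\leq 512(T\log(6T\cdot 2^j)+\sum_{t=1}^T\log M_t)$. The telescoping decomposition and the Massart/Minkowski estimates on the increments are structurally identical to the earlier proof, so each chaining level contributes a term proportional to $\alpha_j\sqrt{\log\lvert\mathcal{N}_j\rvert}$. The only genuinely new manipulation is to split the metric entropy via subadditivity of the square root,
\begin{equation}
    \sqrt{\log\lvert\mathcal{N}_j\rvert}
    \leq \sqrt{512\,T\log(6T\cdot 2^j)} + \sqrt{512\,\textstyle\sum_{t=1}^T\log M_t}.
\end{equation}
The first summand reproduces exactly the Dudley integral of Theorem~\ref{thm:prediction-error-bound-qnn}, and hence the $\mathcal{O}(C_\mathrm{loss}\sqrt{\nicefrac{T\log(T)}{N}})$ contribution. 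The second summand is independent of $j$, so it factors out of the chaining sum and multiplies the convergent geometric series $\sum_{j\geq 1}\alpha_j = C_\mathrm{loss}$, yielding a contribution of order $\tfrac{C_\mathrm{loss}}{\sqrt{N}}\sqrt{\sum_{t}\log M_t}$. Summing the three pieces gives the claimed bound.

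The one point requiring care---and what I would flag as the main, albeit mild, obstacle---is precisely that the new $\sum_t\log M_t$ term enters the chaining sum as a $j$-independent offset rather than inside the logarithm; one must check that accumulating this constant across all resolution scales does not diverge. It does not, because the prefactors $\alpha_j=2^{-j}C_\mathrm{loss}$ are summable, so the extra term is controlled by the total mass $\sum_{j\geq 1}\alpha_j=C_\mathrm{loss}$ rather than by the number of scales. This summability is the structural reason why the dependence on each gate's reuse count $M_t$ is only logarithmic, exactly mirroring the improvement already visible at the level of the metric entropy bound.
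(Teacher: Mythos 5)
Your proposal is correct and follows essentially the same route as the paper's proof: the paper likewise reuses the McDiarmid/symmetrization reduction and the chaining argument from Theorem~\ref{thm:prediction-error-bound-qnn}, substitutes the gate-sharing metric entropy bound of Theorem~\ref{thm-metric-entropy-cptp-multiple-copies}, and separates the $T\log T$ and $\sum_{t}\log M_t$ contributions by subadditivity of the square root. The only cosmetic difference is that you perform this split at the level of the chaining sum over scales $j$ (controlling the $j$-independent term by the geometric series $\sum_j \alpha_j = C_\mathrm{loss}$), whereas the paper first converts the sum to a Dudley-type entropy integral over $\alpha\in[0,\nicefrac{1}{2}]$ and splits inside the integral, where the constant entropy term contributes only through the finite length of the integration interval — the two bookkeepings are equivalent.
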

\begin{proof}
The proof strategy is the same as for Theorem \ref{thm:prediction-error-bound-qnn}, we only change the covering number bound to be applied. Namely, instead of Theorem \ref{thm-metric-entropy-vqc-cptp}, we use Theorem \ref{thm-metric-entropy-cptp-multiple-copies}.

More precisely, we recall Eq.~\eqref{eq:Rademacher-generalization-bound}, which tells us: For every $\delta\in (0,1)$, with probability $\geq 1-\delta$ over the choice of training data of size $N$, we have
\begin{equation}
    R(\vec{\theta}^\ast) - \hat{R}(\vec{\theta}^\ast)
    \leq 2\mathbb{E}_\sigma\left[\sup\limits_{\vec{\theta}} \frac{1}{N}\sum\limits_{i=1}^N \sigma_i \ell (\vec{\theta}; x_i, y_i)\right] + 3C_\mathrm{loss}\sqrt{\frac{2\log(\nicefrac{2}{\delta})}{N}}.\label{eq:rademacher-generalization-bound-repetition}
\end{equation}
And, with the same chaining technique as detailed in the proof of Theorem \ref{thm:prediction-error-bound-qnn}, we can bound the above expectation over Rademacher random variables as
\begin{equation}
    \mathbb{E}_\sigma\left[\sup\limits_{\vec{\theta}} \frac{1}{N}\sum\limits_{i=1}^N \sigma_i \ell (\vec{\theta}; x_i, y_i)\right]
    \leq
    \frac{24 C_\mathrm{loss}}{\sqrt{N}} \int\limits_{0}^{\nicefrac{1}{2}}\sqrt{\log\left(\mathcal{N(\mathcal{CPTP}^{\mathrm{QMLM}},\norm{\cdot}_\diamond,\alpha)}\right)}~\mathrm{d}\alpha,\label{eq:dudley-bound-multiple-copies}
\end{equation}
where we used the notation from Theorem \ref{thm-metric-entropy-cptp-multiple-copies} for the set $\mathcal{CPTP}^{\mathrm{QMLM}}$ of $n$-qubit CPTP maps that can be implemented by the QMLM. Now, we use the metric entropy bound proved in Theorem \ref{thm-metric-entropy-cptp-multiple-copies} to further upper bound the integral as
\begin{align}
    \int\limits_{0}^{\nicefrac{1}{2}}\sqrt{\log\left(\mathcal{N(\mathcal{CPTP}_\mathcal{A},\norm{\cdot}_\diamond,\alpha)}\right)}~\mathrm{d}\alpha
    &\leq \int\limits_{0}^{\nicefrac{1}{2}}\sqrt{512 \left(T \log\left(\frac{6 T}{\alpha} \right) + \sum\limits_{t=1}^{T} \log(M_t)\right)}~\mathrm{d}\alpha\\
    &\leq \sqrt{512T\log(6T)}\int\limits_{0}^{\nicefrac{1}{2}}\sqrt{\log\left(\frac{1}{\alpha}\right)}~\mathrm{d}\alpha + \frac{\sqrt{512}}{2}\sqrt{\sum\limits_{t=1}^{T} \log(M_t)}.\label{eq:metric-entropy-integral-multiple-copies}
\end{align}
As $x\mapsto\log(\nicefrac{1}{x})$ has an integrable singularity at $x=0$, the integral in this expression is simply a multiplicative constant. Therefore, after plugging in the bound of Eq.~\eqref{eq:metric-entropy-integral-multiple-copies} into Eq.~\eqref{eq:dudley-bound-multiple-copies}, and then plugging the resulting bound on the Rademacher expectation into Eq.~\eqref{eq:rademacher-generalization-bound-repetition}, we obtain: For every $\delta\in (0,1)$, with probability $\geq 1-\delta$ over the choice of training data of size $N$, we have
\begin{equation}
    R(\vec{\theta}^\ast) - \hat{R}(\vec{\theta}^\ast)
    \in \mathcal{O}\left( C_\mathrm{loss}\left(\sqrt{\frac{T\log(T)}{N}} + \sqrt{\frac{\sum_{t=1}^{T} \log(M_t)}{N}} + \sqrt{\frac{\log(\nicefrac{1}{\delta})}{N}}\right)\right),
\end{equation}
the claimed generalization bound.
\end{proof}

A naive approach to the scenario of Corollary \ref{cor:weightieQNN} would be to upper-bound the metric entropy, and thus the prediction error, in terms of the total number of trainable elements in the QMLM. That, however, would lead to a significantly worse dependence of the prediction error bound on $M_t$, the numbers of uses, namely, of the form 
\begin{equation}
    C_\mathrm{loss}\sqrt{\frac{T\left(\sum_{t=1}^T M_t\right) \log\left(T\sum_{t=1}^T M_t \right)}{N}}.
\end{equation}
Our more careful analysis shows the tighter bound in which the numbers of uses $M_t$ only appear logarithmically, which is crucial for our application of the bound to quantum phase recognition with QCNNs (see Section \ref{Sct:application-quantum-phase-recognition}). This is possible because, even though there are in principle $T\sum_{t=1}^T M_t$ trainable elements in the quantum neural network, they are not trained independently. Rather, the parameter setting for the $t^{\textrm{th}}$ parameterized $2$-qubit CPTP map is reused $M_t$ times. This clearly shows that reusing parameters is, from a generalization perspective, preferable to having more independent parameters.

As a special case of Corollary \ref{cor:weightieQNN}, we obtain a prediction error bound for the scenario in which multiple copies of a QMLM (with the same parameter settings) are run in parallel:

\begin{corollary}\label{cor:MultipleCopiesQNN}
Let $\mathcal{E}^{\mathrm{QMLM}}_{\vec{\theta}}(\cdot)$ be a QMLM with a fixed architecture consisting of $T$ independently parameterized 2-qubit CPTP maps and an arbitrary number of non-trainable, global CPTP maps. By using $M$ copies of this model in parallel, we can consider loss functions of the form
\begin{equation}
    \ell (\vec{\theta}; x, y)
    = \Tr\left[O^{\mathrm{loss}}_{x,y} \left( (\mathcal{E}^{\mathrm{QMLM}}_{\vec{\theta}}\otimes\operatorname{id}) (\rho(x)) \right)^{\otimes M}\right],
\end{equation}
where $O^{\mathrm{loss}}_{x,y}$ are observables acting on the $M$-fold tensor product of an $n$-qubit system.
Let $P$ be a probability distribution over input-output pairs. Suppose that, given training data $S=\{(x_i,y_i)\}_{i=1}^N$ of size $N$, our optimization yields the parameter setting $\vec{\theta}^\ast = \vec{\theta}^\ast(S)$.

Then, with probability at least $1-\delta$ over the choice of i.i.d.~training data $S$ of size $N$ according to $P$, 
\begin{equation}
    R(\vec{\theta}^\ast) - \hat{R}_S(\vec{\theta}^\ast)
    \in \mathcal{O}\left(C_\mathrm{loss}\left( \sqrt{\frac{T\log(T)}{N}} + \sqrt{\frac{T\log(M) }{N}}+ \sqrt{\frac{\log(\nicefrac{1}{\delta})}{N}}\right)\right).
\end{equation}
\end{corollary}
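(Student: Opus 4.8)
The plan is to recognize Corollary~\ref{cor:MultipleCopiesQNN} as an immediate special case of the gate-sharing bound of Corollary~\ref{cor:weightieQNN}, obtained by taking all reuse counts equal. The key observation is that running $M$ copies of the QMLM in parallel, all sharing the \emph{same} parameter setting $\vec{\theta}$, is nothing but a single larger QMLM whose overall channel is $(\mathcal{E}^{\mathrm{QMLM}}_{\vec{\theta}})^{\otimes M}$, and in which each of the $T$ independently parameterized $2$-qubit CPTP maps occurs exactly once per copy, hence $M_t = M$ times in total. The point I would emphasize is that although there are nominally $TM$ trainable elements across the $M$ copies, they are \emph{not} trained independently: there remain only $T$ independent maps, each reused $M$ times. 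This is exactly the gate-sharing scenario whose covering-number bound is established in Theorem~\ref{thm-metric-entropy-cptp-multiple-copies}.

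First I would cast the parallel-copy setup into the framework of Section~\ref{SbSctPredictionErrorBounds}. Writing $\tilde{\mathcal{E}}_{\vec{\theta}} := (\mathcal{E}^{\mathrm{QMLM}}_{\vec{\theta}})^{\otimes M}$ and $\tilde{\rho}(x) := \rho(x)^{\otimes M}$, and using that the tensor power commutes with the channel, we have
\begin{equation}
    \left((\mathcal{E}^{\mathrm{QMLM}}_{\vec{\theta}}\otimes \operatorname{id})(\rho(x))\right)^{\otimes M}
    = \left(\tilde{\mathcal{E}}_{\vec{\theta}}\otimes \operatorname{id}\right)(\tilde{\rho}(x)),
\end{equation}
so that the loss takes the standard form $\ell(\vec{\theta};x,y) = \Tr[O^{\mathrm{loss}}_{x,y}(\tilde{\mathcal{E}}_{\vec{\theta}}\otimes\operatorname{id})(\tilde{\rho}(x))]$. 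I would then verify that all hypotheses of Corollary~\ref{cor:weightieQNN} hold for $\tilde{\mathcal{E}}_{\vec{\theta}}$: the map $\tilde{\mathcal{E}}_{\vec{\theta}}$ is CPTP on the $Mn$-qubit system with $T$ independently parameterized $2$-qubit maps each used $M$ times (the copy-and-tensor structure being absorbed into the allowed non-trainable global maps); the effective encoding $x\mapsto\tilde{\rho}(x)$ is still fixed in advance; and $C_\mathrm{loss}=\sup_{x,y}\norm{O^{\mathrm{loss}}_{x,y}}<\infty$ remains the relevant uniform spectral-norm bound, now for observables on the $M$-fold tensor system.

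With this identification, the conclusion follows by plugging $M_t = M$ for all $t$ into Corollary~\ref{cor:weightieQNN} and simplifying $\sum_{t=1}^T \log(M_t) = T\log(M)$, which immediately yields the claimed bound
\begin{equation}
    R(\vec{\theta}^\ast) - \hat{R}_S(\vec{\theta}^\ast)
    \in \mathcal{O}\left(C_\mathrm{loss}\left( \sqrt{\frac{T\log(T)}{N}} + \sqrt{\frac{T\log(M)}{N}}+ \sqrt{\frac{\log(\nicefrac{1}{\delta})}{N}}\right)\right).
\end{equation}
I do not anticipate a genuine obstacle here, since the argument is a clean reduction rather than a new estimate. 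The one point requiring care—and the thing I would state explicitly to avoid a factor-of-$M$ loss—is that the parallel copies must be counted as parameter \emph{reuse} (entering only logarithmically through $\log M$) and not as additional independent trainable maps; conflating the two would spuriously replace $T\log M$ by the far weaker $TM\log(TM)$ scaling flagged in the discussion after Corollary~\ref{cor:weightieQNN}.
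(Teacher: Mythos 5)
Your proposal is correct and matches the paper's own treatment: the paper presents this corollary precisely as a special case of Corollary~\ref{cor:weightieQNN}, obtained by viewing the $M$ parallel copies as a gate-sharing QMLM in which each of the $T$ independently parameterized maps is reused $M_t = M$ times, so that $\sum_{t=1}^T \log(M_t) = T\log(M)$. Your explicit verification that the tensor-power loss fits the standard framework (absorbing the copy structure into non-trainable global maps) is a careful spelling-out of what the paper leaves implicit, and your closing caution about reuse versus independent parameters is exactly the point the paper makes in the surrounding discussion.
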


Once we observe that $\sqrt{a+b}\leq \sqrt{a}+\sqrt{b}\leq \sqrt{2(a+b)}$ holds for all $a,b\geq 0$, we see that the upper bound in Corollary \ref{cor:MultipleCopiesQNN} can be rewritten as
\begin{equation}
    R(\vec{\theta}^\ast) - \hat{R}_S(\vec{\theta}^\ast)
    \in \mathcal{O}\left(C_\mathrm{loss}\left( \sqrt{\frac{T\log(TM)}{N}} + \sqrt{\frac{\log(\nicefrac{1}{\delta})}{N}}\right)\right).
\end{equation}
If $\delta$ is taken to be a fixed desired accuracy level and $C_\mathrm{loss}$ is also considered to be a fixed constant, this becomes the bound stated in Theorem \ref{thm:gate-sharing-QMLM} in the main text.

Corollary~\ref{cor:MultipleCopiesQNN} tells us that, even when using many copies of the QMLM, as the expressiveness of the corresponding function class grows at most logarithmically with the number of copies, we can still obtain a good prediction error. Note that, as in Corollary~\ref{cor:weightieQNN}, it is crucial that the same parameter setting is used for each copy.

We can also phrase the result as follows: We can upper-bound the prediction error incurred when using multiple copies of a quantum machine learning model for evaluating the loss by an expression that depends crucially on the number of trainable elements per copy, but only mildly on the number of copies.

\begin{remark}\label{rmk:multiple-copy-postprocessing}
    Cases of interest that Corollary~\ref{cor:MultipleCopiesQNN} describes are, e.g., the loss functions obtained by first performing (independent) product measurements on each of the $M$ copies, then taking an average (for a continuous target space) or a majority vote (for a discrete target space) of the obtained measurement outcomes, and finally post-processing this value by a classical loss function (such as the squared error loss).
    Such procedures arise naturally when taking into account that multiple shots are needed to accurately estimate the expectation value of an observable measured on the QMLM output.
    Note, however, that we cannot apply arbitrary procedures for post-processing single-copy measurement outcomes and still hope for a good prediction error. If $C_\mathrm{loss}$, which here is the supremum over the spectral norms of the $M$-copy observables $O^{\mathrm{loss}}_{x,y}$, scales badly (e.g., linearly) with $M$, the prediction error bound does so as well.
\end{remark}

\subsubsection{Extension to variable circuit architecture}\label{SbSbSct:variable-architecture}

For practical purposes, it might not be advantageous to fix the number of trainable elements in the QMLM, or even its structure more generally, in advance. Rather, one might also want to optimize over a discrete set of possible architectures, e.g., by growing or truncating the QMLM during the training phase. Therefore, in our second extension of Theorem \ref{thm:prediction-error-bound-qnn}, we provide a prediction error bound for such a variable structure scenario.

\begin{corollary}\label{cor:VariableGateNumberImproved}
Let $\mathcal{E}^{\mathrm{QMLM}}_{\vec{\alpha}}(\cdot)$ be a QMLM with a variable structure. Suppose that, for every $\tau\in\mathbb{N}$, there are at most $G_\tau\in\mathbb{N}$ allowed structures with exactly $\tau$ parameterized $2$-qubit CPTP maps and an arbitrary number of non-trainable, global CPTP maps.
Let $P$ be a probability distribution over input-output pairs. Suppose that, given training data $S=\{(x_i,y_i)\}_{i=1}^N$ of size $N$, our optimization yields a (data-dependent) structure with $T=T(S)$ parameterized $2$-qubit CPTP maps and the parameter setting $\vec{\alpha}^\ast = \vec{\alpha}^\ast(S)$.

Then, with probability at least $1-\delta$ over the choice of i.i.d.~training data $S$ of size $N$ according to $P$,
\begin{equation}
    R(\vec{\alpha}^\ast) - \hat{R}_S(\vec{\alpha}^\ast)
    \in \mathcal{O}\left(C_\mathrm{loss} \left(\sqrt{\frac{T\log\left(T\right)}{N}} + \sqrt{\frac{\log G_T}{N}} + \sqrt{\frac{\log(\nicefrac{1}{\delta})}{N}} \right)\right).
\end{equation}
\end{corollary}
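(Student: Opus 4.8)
The plan is to derive Corollary~\ref{cor:VariableGateNumberImproved} from the fixed-architecture bound of Theorem~\ref{thm:prediction-error-bound-qnn} by means of a union bound over all admissible QMLM architectures, taking care to allocate the failure probability so that the contributions remain summable across the (countably many) architectures. First, I would index each admissible architecture by a pair $(\tau,j)$, where $\tau\in\mathbb{N}$ is its number of independently parameterized $2$-qubit CPTP maps and $j\in\{1,\ldots,G_\tau\}$ enumerates the architectures with exactly $\tau$ trainable maps. For each such fixed architecture $(\tau,j)$, Theorem~\ref{thm:prediction-error-bound-qnn} applies; crucially, its proof bounds the supremum $\sup_{\vec\theta}\{R(\vec\theta)-\hat R_S(\vec\theta)\}$ over all parameter settings of that fixed architecture, and hence the resulting estimate holds simultaneously for every parameter choice, including the data-dependent one selected by the optimizer.

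Second, I would invoke Theorem~\ref{thm:prediction-error-bound-qnn} for each architecture $(\tau,j)$ with its own failure probability
\begin{equation}
    \delta_{\tau,j}
    := \frac{6\,\delta}{\pi^2\,\tau^2\,G_\tau},
\end{equation}
so that the total failure probability sums to $\delta$:
\begin{equation}
    \sum_{\tau=1}^\infty\sum_{j=1}^{G_\tau}\delta_{\tau,j}
    = \frac{6\delta}{\pi^2}\sum_{\tau=1}^\infty\frac{1}{\tau^2}
    = \delta.
\end{equation}
By the union bound, with probability at least $1-\delta$ the generalization estimate of Theorem~\ref{thm:prediction-error-bound-qnn} holds for \emph{all} architectures at once (each with its respective $\delta_{\tau,j}$). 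On this good event, I would specialize to the architecture $(T,j^\ast)$ actually returned by training, where $T=T(S)$, and use the supremum form of the bound to control $R(\vec\alpha^\ast)-\hat R_S(\vec\alpha^\ast)$.

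Third, I would expand the confidence term for the selected architecture, noting
\begin{equation}
    \log\frac{1}{\delta_{T,j^\ast}}
    = 2\log T + \log G_T + \log\frac{1}{\delta} + \log\frac{\pi^2}{6},
\end{equation}
and then apply subadditivity of the square root, $\sqrt{a+b+c}\leq\sqrt a+\sqrt b+\sqrt c$, to split $\sqrt{\log(1/\delta_{T,j^\ast})/N}$ into the three advertised contributions; the extra $\sqrt{2\log T/N}$ term is dominated by $\sqrt{T\log T/N}$ since $T\geq 1$, and the $\mathcal{O}(1/\sqrt N)$ constant is harmless. This yields exactly the claimed bound. The main obstacle is precisely the union bound over infinitely many architectures: a uniform allocation of failure probability would not be summable, so the allocation must decay both in $G_\tau$ (to cancel the number of architectures at each level) and in $\tau$ (to make the sum over gate counts converge); the choice proportional to $1/(G_\tau\tau^2)$ achieves this while contributing only the benign additive $\log G_T$ and $2\log T$ terms. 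A secondary point requiring care is that the per-architecture guarantee must be uniform over parameter settings, which is why it is essential to invoke the supremum form established in the proof of Theorem~\ref{thm:prediction-error-bound-qnn} rather than a bound tied to a single fixed $\vec\theta$, since the optimizer's output $\vec\alpha^\ast(S)$ depends on the whole training set $S$.
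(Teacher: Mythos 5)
Your proposal is correct and follows essentially the same route as the paper's own proof: a union bound over all architectures $(\tau,j)$ with per-architecture failure probability proportional to $1/(G_\tau\tau^2)$ (the paper uses $\delta/(2G_\tau\tau^2)$, you use $6\delta/(\pi^2\tau^2 G_\tau)$, an immaterial difference), followed by specializing to the data-dependent output architecture and absorbing the $2\log T$ and constant terms into the stated bound. Your explicit remark that the fixed-architecture guarantee must be uniform over parameter settings (via the supremum form in the proof of Theorem~\ref{thm:prediction-error-bound-qnn}) is exactly the mechanism the paper relies on as well.
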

\begin{proof}
By Theorem \ref{thm:prediction-error-bound-qnn}, for every $\tau\in \mathbb{N}$, for every one of the $G_\tau$ allowed structures with exactly $\tau$ parameterized $2$-qubit CPTP maps, with probability $\geq 1-\nicefrac{\delta}{2 G_\tau \tau^2}$ over the choice of i.i.d.~training data $S$ of size $N$ according to $P$, if $\vec{\theta}^\ast_\tau$ is a (continuous) parameter setting (for the $\tau$ parameterized maps) obtained through optimization upon input of training data $S$, we have the generalization error bound
\begin{equation}
    \E_{(x, y)\sim P} \left[\ell (\vec{\theta}^\ast_\tau; x, y)\right] -\frac{1}{N} \sum_{i=1}^{N} \ell (\vec{\theta}^\ast_\tau; x_i, y_i) 
    \in \mathcal{O}\left(C_\mathrm{loss} \left(\sqrt{\frac{\tau\log\left(\tau\right)}{N}}+ \sqrt{\frac{\log(\nicefrac{2G_\tau \tau^2}{\delta})}{N}} \right)\right).
\end{equation}
Thus, first taking a union bound over the $G_\tau$ structures with exactly $\tau$ parameterized $2$-qubit CPTP maps, and then a union bound over $\tau\in\mathbb{N}$, we see: With probability $\geq 1 - \sum_\tau \nicefrac{\delta}{2\tau^2}\geq 1-\delta$ over the choice of i.i.d.~training data $S$ of size $N$ according to $P$, if the optimization upon input of data $S$ outputs a QMLM architecture with $T=T(N)$ parameterized $2$-qubit CPTP maps and the (continuous and discrete) parameter setting $\vec{\alpha}^\ast = \vec{\alpha}^\ast(S)$
\begin{align}
    R(\vec{\alpha}^\ast) - \hat{R}_S(\vec{\alpha}^\ast)
    &\in \mathcal{O}\left(C_\mathrm{loss} \left(\sqrt{\frac{T\log\left(T\right) }{N}}+ \sqrt{\frac{\log(\nicefrac{2 G_T T^2}{\delta})}{N}} \right)\right)\\
    &\in \mathcal{O}\left(C_\mathrm{loss} \left(\sqrt{\frac{T\log\left(T\right) }{N}} + \sqrt{\frac{\log G_T}{N}} +  \sqrt{\frac{\log(\nicefrac{1}{\delta})}{N}} \right)\right),
\end{align}
as claimed.
\end{proof}

We can understand Corollary \ref{cor:VariableGateNumberImproved} as saying that the prediction error of a QMLM with a variable structure depends strongly (namely linearly) on $T$, the number of trainable elements that is used in the output structure of the optimization procedure, but only mildly (namely logarithmically) on $G_T$, the number of different possible structures with the same number of gates as the output structure. Note that the bound does not depend on all structures potentially considered during the optimization, but only on a subset of those.
In particular, if the number $T$ of trainable $2$-qubit maps is fixed in advance, optimizing not only over the parameter settings of the model but also over exponentially-in-$T$ many structures with $T$ trainable elements does not worsen the asymptotic behavior of the generalization error.

\subsubsection{Extension taking the optimization into account}\label{SbSbSct:optimization-dependent}

In our previous results, we have provided bounds on the generalization error that depended on the QMLM, e.g., via the number of trainable elements or the number of copies, or even on how many different architectures are admissible. So far, however, the bounds are agnostic w.r.t.~the procedure used to train the QMLM. In this section, we refine our approach to prove optimization-dependent generalization bounds, that explicitly take properties of the training process into account.

Take $\mathcal{M}_1, \ldots, \mathcal{M}_T$ to be sets of 2-qubit CPTP maps. Each set $\mathcal{M}_t$ denotes the space of 2-qubit CPTP maps that one permits for the $t^{\textrm{th}}$ trainable map during the training of the QMLM $\mathcal{E}^{\mathrm{QMLM}}_{\vec{\theta}}$. Hence, each $\mathcal{M}_t$ should be seen as the trainable space for a particular gate in the QMLM.
For example, $\mathcal{M}_t$ could be the space of all 2-qubit unitary channels, or the space of all tensor products of single-qubit CPTP maps.

As discussed in the proofs of Lemmas \ref{lem:Covering2QuditUnitaries} and \ref{lem:Covering2QuditCPTP}, as $\mathcal{CPTP}(\mathbb{C}^2\otimes\mathbb{C}^2)$ is compact, for every $1\leq t\leq T$, there exists a constant $c_t \geq 1$, depending, e.g., on the diameter and on the effective ambient dimension of $\mathcal{M}_t$, such that
\begin{equation}
    \log(\mathcal{N}(\mathcal{M}_t, \norm{\cdot}_\diamond, \epsilon)) \leq c_t \log\left(1 + \frac{1}{\epsilon}\right). \label{eq:covnet}
\end{equation}
Note that, as a worst-case estimate, we have $c_t\leq 1024$. This can be seen by arguing as in the proofs of Lemmas \ref{lem:Covering2QuditUnitaries} and \ref{lem:Covering2QuditCPTP}, with ambient dimension $512$ and diameter $2$, and then applying Bernoulli's inequality.

Given a fixed choice of parameters $\vec{\theta}$, and thereby a fixed choice $\mathcal{E}_1\in\mathcal{M}_1,\ldots,\mathcal{E}_T\in\mathcal{M}_T$ of the trainable $2$-qubit CPTP maps, the (fixed-architecture) QMLM implements the $n$-qubit CPTP map
\begin{equation}
    \mathcal{E}^\mathrm{QMLM}_{\vec{\theta}} = \mathcal{E}^{\mathrm{QMLM}} (\mathcal{E}_1, \ldots, \mathcal{E}_T) := \left(\prod_{t=1}^T \mathcal{F}_t \mathcal{E}_t\right) \mathcal{F}_0,
\end{equation}
where the $\mathcal{F}_t$, for $0\leq t\leq T$, are fixed (potentially global) CPTP maps.

Suppose that we begin the optimization of the QMLM from an initial point independent of the training data $S=\{(x_i, y_i)\}_{i=1}^N$, described by a parameter setting $\vec{\theta}_0$. We denote the choices for the $T$ trainable maps appearing in the initial CPTP map by
\begin{equation}
    \mathcal{E}^0_1 \in \mathcal{M}_1, \ldots, \mathcal{E}^0_T \in \mathcal{M}_T.
\end{equation}
After utilizing the training data for multiple rounds of optimization, the training of the QMLM finishes at a (data-dependent) point in $\mathcal{CPTP}\left((\mathbb{C})^{\otimes n} \right)$, described by a parameter vector $\vec{\theta}^\ast$, which we denote by the choice
\begin{equation}
    \mathcal{E}^\ast_1 \in \mathcal{M}_1, \ldots, \mathcal{E}^\ast_T \in \mathcal{M}_T,
\end{equation}
of trainable maps. Note that $\mathcal{E}^\ast_1, \ldots, \mathcal{E}^\ast_T$ depend on the training data $S$.
For each of the $T$ trainable local CPTP maps $M_t$, we denote the distance (measured w.r.t.~$\norm{\cdot}_\diamond$ between the initial and the final point of the training procedure by
\begin{equation}
    \Delta_t = \norm{\mathcal{E}^\ast_t - \mathcal{E}^0_t}_{\diamond}\leq 2, \textrm{ for } t = 1, \ldots, T.
\end{equation}
In the following Theorem, we provide a generalization guarantee for the resulting QMLM defined by the choice of trainable local maps $\mathcal{E}^\ast_1, \ldots, \mathcal{E}^\ast_T$ in terms of the optimization distances $\Delta_t$, the number $T$ of trainable maps, and the number $N$ of training data points. 

\begin{theorem}[Optimization-dependent prediction error bound for quantum machine learning]\label{thm:optimization-dependent-gen-bound}
Let $\mathcal{E}^{\mathrm{QMLM}}_{\vec{\theta}}(\cdot)$ be a QMLM with a fixed architecture consisting of $T$ parameterized $2$-qubit CPTP maps, in which the $t^{\mathrm{th}}$ of these maps is taken from $\mathcal{M}_t$, and an arbitrary number of non-trainable, global CPTP maps. Let $P$ be a probability distribution over input-output pairs. Suppose that, given training data $S=\{(x_i,y_i)\}_{i=1}^N$ of size $N$, the optimization procedure yields the parameter setting $\vec{\theta}^\ast=\vec{\theta}^\ast(S)$. As described above, denote by $\Delta_t=\Delta_t(S)$ the optimization distance (measured in diamond norm) of the $t^{\mathrm{th}}$ trainable map.

Then, with probability $\geq 1-\delta$ over the choice of i.i.d.~training data $S$ of size $N\geq 4$ according to $P$,
\begin{equation}
    R(\vec{\theta}^\ast)-\hat{R}_S(\vec{\theta}^\ast)
    \in \mathcal{O}\left(C_\mathrm{loss}\min\left\{\sqrt{\frac{K\max\limits_{1\leq t\leq T}c_t\log(K)}{N}} + \sqrt{\frac{K\log(T)}{N}} + \sum_{\substack{t=1 \\t\neq t_1,\ldots,t_K}}^T \Delta_{t} + \sqrt{\frac{\log(\nicefrac{1}{\delta})}{N}}\right\}\right),
\end{equation}
where the minimum is over all $K\in\{0,\ldots,T\}$ and choices of pairwise distinct $t_1,\ldots,t_K\in\{1,\ldots,T\}$.
\end{theorem}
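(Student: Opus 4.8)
The plan is to reduce the optimization-dependent bound to the fixed-architecture chaining bound of Theorem~\ref{thm:prediction-error-bound-qnn}, applied not to the full QMLM class but to a \emph{data-independent} sub-class in which all but $K$ of the trainable maps are frozen at their initial values $\mathcal{E}^0_t$. Concretely, fix $K\in\{0,\ldots,T\}$ and distinct indices $\{t_1,\ldots,t_K\}$, and consider the restricted class $\mathcal{C}_{K,\{t_j\}}$ of $n$-qubit channels obtained by letting the maps at positions $t_1,\ldots,t_K$ range freely over $\mathcal{M}_{t_1},\ldots,\mathcal{M}_{t_K}$ while fixing every other trainable slot to its initialization $\mathcal{E}^0_t$. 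The optimized channel $\mathcal{E}^{\mathrm{QMLM}}(\mathcal{E}^\ast_1,\ldots,\mathcal{E}^\ast_T)$ need not lie in $\mathcal{C}_{K,\{t_j\}}$, but its projection obtained by resetting the inactive slots to $\mathcal{E}^0_t$ does; by iterated subadditivity of the diamond distance (Lemma~\ref{lem:subadditivity-diamond}) these two channels differ in diamond norm by at most $\sum_{t\neq t_1,\ldots,t_K}\Delta_t$. Since a diamond-distance perturbation of size $\eta$ changes each loss value, and hence both $R$ and $\hat R_S$, by at most $C_\mathrm{loss}\eta$, the generalization error of the true optimizer exceeds that of its projection by at most $2C_\mathrm{loss}\sum_{t\neq t_1,\ldots,t_K}\Delta_t$. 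This is exactly the sum term in the bound, and it decouples the data-dependent drift of the frozen gates from the uniform complexity control of the active gates.

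Next I would bound the uniform generalization error over $\mathcal{C}_{K,\{t_j\}}$ by rerunning the two-step argument of Theorem~\ref{thm:prediction-error-bound-qnn}: symmetrization plus McDiarmid reduce the supremum over the class to a Rademacher average, and chaining bounds that average by a Dudley-type metric-entropy integral. The only new input is the covering-number estimate for $\mathcal{C}_{K,\{t_j\}}$: choosing $(\nicefrac{\varepsilon}{K})$-nets for each active slot and combining them by subadditivity yields $\log\mathcal{N}(\mathcal{C}_{K,\{t_j\}},\norm{\cdot}_\diamond,\varepsilon)\leq\big(\sum_{j=1}^K c_{t_j}\big)\log(1+\nicefrac{K}{\varepsilon})\leq K\max_t c_t\,\log(1+\nicefrac{K}{\varepsilon})$ via Eq.~\eqref{eq:covnet}. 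Feeding this into the entropy integral $\int_0^{1/2}\sqrt{\log\mathcal{N}(\alpha)}\,\mathrm{d}\alpha$, and using $\log(1+\nicefrac{K}{\alpha})\leq\log(1+K)+\log(\nicefrac{1}{\alpha})$ together with the integrability of $\sqrt{\log(\nicefrac{1}{\alpha})}$, produces the term $\mathcal{O}\big(C_\mathrm{loss}\sqrt{\nicefrac{K\max_t c_t\,\log K}{N}}\big)$, which is the first summand in the target bound.

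The delicate step, and the one I expect to be the main obstacle, is the union-bound bookkeeping that must simultaneously (i) range over all $\binom{T}{K}$ choices of active index set for each $K$, (ii) range over all $K\in\{0,\ldots,T\}$, and (iii) still produce only the mild term $\sqrt{\nicefrac{K\log T}{N}}$ rather than a catastrophic $\sqrt{\nicefrac{T}{N}}$. A uniform allocation of the failure budget would give $\log(\nicefrac{2^T}{\delta})\sim T$ and ruin the bound, so instead I would invoke the fixed-class bound at confidence level $\delta'_{K,\{t_j\}}=\delta/(\binom{T}{K}\,c_K)$, where $c_K$ is any summable normalization (e.g.\ $c_K=2(K+1)^2$ with $\sum_K c_K^{-1}\leq1$). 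Then the total failure probability is $\sum_K\binom{T}{K}\,\delta'_{K,\{t_j\}}=\delta\sum_K c_K^{-1}\leq\delta$, while for each event $\log(\nicefrac{1}{\delta'})\leq\log\binom{T}{K}+\log c_K+\log(\nicefrac{1}{\delta})\leq K\log T+\mathcal{O}(\log K)+\log(\nicefrac{1}{\delta})$, using $\binom{T}{K}\leq T^K$. Hence the confidence penalty $\sqrt{\nicefrac{\log(\nicefrac{1}{\delta'})}{N}}$ splits exactly into $\mathcal{O}(\sqrt{\nicefrac{K\log T}{N}}+\sqrt{\nicefrac{\log(\nicefrac{1}{\delta})}{N}})$, supplying the remaining two terms.

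Finally, combining the three contributions for each admissible $(K,\{t_j\})$ and observing that the resulting inequality holds simultaneously for all of them with probability at least $1-\delta$, I would take the minimum over $K$ and over the choice of active indices. The degenerate case $K=0$ (handled by a single Hoeffding bound, Lemma~\ref{lem:Hoeffding}, on the data-independent initial channel, plus the full drift $\sum_{t=1}^T\Delta_t$) and the optimal choice of active set as the $K$ gates with the largest $\Delta_t$ recover the cleaner ordered form stated in the main text; keeping the general minimum yields precisely the claimed bound.
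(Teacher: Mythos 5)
Your proposal is correct, and its skeleton is the same as the paper's: per-slot $(\nicefrac{\varepsilon}{K})$-nets for the $K$ active gates with all other trainable gates frozen at the reference maps (this is exactly the covering construction of Theorem~\ref{thm:metric-entropy-optimization-dependent}, Eq.~\eqref{eq:sizegamma}, minus its $\binom{T}{K}$ factor), the McDiarmid-plus-symmetrization-plus-chaining machinery of Theorem~\ref{thm:prediction-error-bound-qnn}, and a union bound over $K$ and over the at most $\binom{T}{K}\leq T^K$ index choices, which is where the $\sqrt{\nicefrac{K\log T}{N}}$ term comes from in both arguments. The one genuine difference is how the drift term enters. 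You project the trained channel onto the \emph{data-independent} frozen class $\mathcal{C}_{K,\{t_j\}}$, pay $2C_\mathrm{loss}\sum_{t\neq t_1,\ldots,t_K}\Delta_t$ via diamond-norm Lipschitzness of the loss, and then apply a standard uniform deviation bound over that fixed class. The paper instead covers the $\Delta$-constrained class $\mathcal{CPTP}^{\mathrm{QMLM}}_{(\Delta_t)_t}$ at the enlarged radius $\varepsilon+\sum_{t\neq t_1,\ldots,t_K}\Delta_t$ and truncates the chaining sum at scale $\approx\sum_{t\neq t_1,\ldots,t_K}\Delta_t$, which produces the same additive drift term without the projection step. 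Your variant buys a cleaner treatment of a point the paper glosses over: since $\Delta_t=\Delta_t(S)$, the class $\mathcal{CPTP}^{\mathrm{QMLM}}_{(\Delta_t)_t}$ is sample-dependent, so applying McDiarmid to a supremum over it is delicate; in your formulation every supremum is over a class fixed in advance, and all data-dependence is confined to which pair $(K,\{t_j\})$ one reads off at the end, after the bound already holds simultaneously for all choices. Your summable confidence allocation $\delta/\bigl(\binom{T}{K}c_K\bigr)$ is likewise just a tidier packaging of the paper's two successive union bounds (first over index choices at level $\nicefrac{\delta}{T^K}$, then over the $T+1$ values of $K$), and your handling of $K=0$ by a single Hoeffding bound matches the degenerate case of the paper's argument.
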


The proof of Theorem \ref{thm:optimization-dependent-gen-bound} again hinges on a metric entropy bound, this time for the class of CPTP maps that can be reached by the QMLM under the optimization procedure. Hence, let us first prove the following theorem. 

\begin{theorem}\label{thm:metric-entropy-optimization-dependent}
Let $\mathcal{E}^{\mathrm{QMLM}}_{\vec{\theta}}(\cdot)$ be a QMLM with a fixed architecture consisting of $T$ parameterized $2$-qubit CPTP maps, in which the $t^{\mathrm{th}}$ of these maps is taken from $\mathcal{M}_t$, and an arbitrary number of non-trainable, global CPTP maps. Let $0\leq \Delta_1,\ldots,\Delta_T\leq 2$ be a sequence of distances for the trainable $2$-qubit CPTP maps. Let $\mathcal{CPTP}^{\mathrm{QMLM}}_{(\Delta_t)_t}\subset\mathcal{CPTP}\left((\mathbb{C}^2)^{\otimes n}\right)$ denote the set of $n$-qubit CPTP maps that can be implemented by the QMLM, under the additional restriction that the $t^{\mathrm{th}}$ trainable gate is at most diamond-distance $\Delta_t$ away from the fixed initial point $\mathcal{E}_t^{0}$.

Let $K\in\{0,\ldots,T\}$. Let $t_1,\ldots,t_K\in\{1,\ldots,T\}$ be pairwise distinct.
Then, for any $\varepsilon\in (0,1]$, if we write 
\begin{equation}
    \varepsilon_K:=\varepsilon + \sum_{\substack{t=1 \\t\neq t_1,\ldots,t_K}}^T \Delta_{t}
\end{equation}
there exists an $\varepsilon_K$-covering net $\mathcal{N}_{\varepsilon_K}$ of $\mathcal{CPTP}^{\mathrm{QMLM}}_{(\Delta_t)_t}$ w.r.t.~the diamond distance such that the logarithm of its size can be upper bounded as
\begin{equation}
    \log(\lvert\mathcal{N}_{\varepsilon_K}\rvert)
    \leq K\log(T) + K\max\limits_{1\leq t\leq T}c_t\log\left(1 + \frac{K}{\varepsilon} \right). \label{eq:sizegamma}
\end{equation}
\end{theorem}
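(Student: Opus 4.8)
The plan is to mimic the layer-by-layer covering-net construction from the proof of Theorem~\ref{thm-metric-entropy-vqc-unitary}, but to spend covering resolution \emph{only} on the $K$ designated trainable maps at positions $t_1,\ldots,t_K$, while freezing every other trainable map to its data-independent initial value $\mathcal{E}_t^0$. The frozen maps are cheap precisely because the restriction defining $\mathcal{CPTP}^{\mathrm{QMLM}}_{(\Delta_t)_t}$ guarantees that any admissible choice of the $t$-th gate lies within diamond-distance $\Delta_t$ of $\mathcal{E}_t^0$. Concretely, first set $\tilde{\varepsilon}:=\varepsilon/K$ and, for each $j\in\{1,\ldots,K\}$, invoke Eq.~\eqref{eq:covnet} to obtain a $\tilde{\varepsilon}$-covering net $\mathcal{N}^{(j)}$ of $\mathcal{M}_{t_j}$ with respect to the diamond norm, of size $\log\lvert\mathcal{N}^{(j)}\rvert\leq c_{t_j}\log(1+K/\varepsilon)$. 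Then define $\mathcal{N}_{\varepsilon_K}$ as the set of all $n$-qubit CPTP maps obtained by inserting an element of $\mathcal{N}^{(j)}$ into the $t_j$-th trainable slot for each $j$, and inserting the fixed map $\mathcal{E}_t^0$ into every remaining trainable slot $t\notin\{t_1,\ldots,t_K\}$, leaving the non-trainable global maps $\mathcal{F}_t$ untouched.

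Next I would verify the covering property. Given any $\mathcal{E}=\mathcal{E}^{\mathrm{QMLM}}(\mathcal{E}_1,\ldots,\mathcal{E}_T)\in\mathcal{CPTP}^{\mathrm{QMLM}}_{(\Delta_t)_t}$, I choose for each fine position $t_j$ a net element $\tilde{\mathcal{E}}_{t_j}\in\mathcal{N}^{(j)}$ with $\lVert\mathcal{E}_{t_j}-\tilde{\mathcal{E}}_{t_j}\rVert_\diamond\leq\tilde{\varepsilon}$, and for each frozen position use $\mathcal{E}_t^0$, which satisfies $\lVert\mathcal{E}_t-\mathcal{E}_t^0\rVert_\diamond\leq\Delta_t$ by the defining restriction. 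Applying subadditivity of the diamond distance (Lemma~\ref{lem:subadditivity-diamond}) iteratively along the circuit, exactly as in the proof of Theorem~\ref{thm-metric-entropy-vqc-unitary} where the identical fixed maps $\mathcal{F}_t$ cancel out of every term, the total error telescopes into a sum of single-gate diamond distances, $\lVert\mathcal{E}-\tilde{\mathcal{E}}\rVert_\diamond\leq K\tilde{\varepsilon}+\sum_{t\neq t_1,\ldots,t_K}\Delta_t=\varepsilon+\sum_{t\neq t_1,\ldots,t_K}\Delta_t=\varepsilon_K$. Hence $\mathcal{N}_{\varepsilon_K}$ is indeed an $\varepsilon_K$-covering net of $\mathcal{CPTP}^{\mathrm{QMLM}}_{(\Delta_t)_t}$.

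Finally, since only the $K$ fine slots carry a choice, $\lvert\mathcal{N}_{\varepsilon_K}\rvert=\prod_{j=1}^K\lvert\mathcal{N}^{(j)}\rvert$, so that $\log\lvert\mathcal{N}_{\varepsilon_K}\rvert\leq\sum_{j=1}^K c_{t_j}\log(1+K/\varepsilon)\leq K\max_{1\leq t\leq T}c_t\,\log(1+K/\varepsilon)$; adding the nonnegative quantity $K\log T$ then yields the stated bound~\eqref{eq:sizegamma}. The only genuinely delicate points are the telescoping cancellation of the fixed global maps $\mathcal{F}_t$ inside the diamond-distance estimate, and the bookkeeping that separates the $K$ finely resolved slots (resolution $\varepsilon/K$ each, contributing total error $\varepsilon$) from the $T-K$ frozen slots (contributing total error $\sum_{t\neq t_1,\ldots,t_K}\Delta_t$). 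I would emphasize that the extra $K\log T$ term is deliberate slack rather than an artifact of a fixed position choice: once this metric entropy bound is fed into the chaining integral (as in Theorem~\ref{thm:prediction-error-bound-qnn}), the $K\log T$ summand produces the $\sqrt{K\log T/N}$ contribution needed downstream in Theorem~\ref{thm:optimization-dependent-gen-bound} to absorb the cost of not knowing in advance which $K$ gates undergo the largest change during optimization.
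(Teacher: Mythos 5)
Your proof is correct, and its core mechanism is the same as the paper's: freeze the unselected trainable slots at the reference maps $\mathcal{E}^0_t$, cover each selected slot $\mathcal{M}_{t_j}$ at resolution $\varepsilon/K$ via Eq.~\eqref{eq:covnet}, and telescope the error with subadditivity of the diamond distance (Lemma~\ref{lem:subadditivity-diamond}). The one genuine difference is the provenance of the $K\log(T)$ term. The paper's net is \emph{position-universal}: it consists of all circuits in which some set of exactly $K$ trainable slots deviates from the reference maps (each deviating slot drawn from its net $\mathcal{N}_t$), so its cardinality carries a factor $\binom{T}{K}\leq T^K$, and $K\log(T)$ is an honest part of the count. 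Your net is tied to the fixed positions $t_1,\ldots,t_K$, so its log-cardinality is only $K\max_{1\leq t\leq T}c_t\log(1+K/\varepsilon)$, and the $K\log(T)$ is, as you say, pure slack; both versions establish the stated bound, and yours is marginally tighter. One caveat on your closing remark: in the paper's proof of Theorem~\ref{thm:optimization-dependent-gen-bound}, the $\sqrt{K\log(T)/N}$ contribution does \emph{not} arise by feeding the $K\log(T)$ summand into the chaining integral --- there the paper integrates only the $K\max_{1\leq t\leq T}c_t\log\left(1+K/\alpha\right)$ part --- but rather from a union bound over the at most $T^K$ choices of pairwise distinct $t_1,\ldots,t_K$ at the level of failure probabilities. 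So the cost of not knowing in advance which $K$ gates change most is indeed paid through a $K\log(T)$ term, but mechanically via that union bound (which is the natural companion of your position-specific net), not via the metric entropy integral.
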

\begin{proof}
By the assumptions on the structure of $\mathcal{CPTP}^{\mathrm{QMLM}}_{(\Delta_t)_t}$, there exist fixed, global CPTP maps $\mathcal{F}_0,\ldots,\mathcal{F}_T\in\mathcal{CPTP}\left((\mathbb{C}^2)^{\otimes n}\right)$ such that any $\mathcal{E}\in\mathcal{CPTP}^{\mathrm{QMLM}}_{(\Delta_t)_t}$ can be written as $\mathcal{E} = \mathcal{F}_T\mathcal{E}_T\mathcal{F}_{T-1}\ldots\mathcal{F}_1\mathcal{E}_1\mathcal{F}_0$ for some $2$-qubit CPTP maps $\mathcal{E}_t\in\mathcal{M}_t$, $1\leq t\leq T$ such that $\norm{\mathcal{E}_t - \mathcal{E}^0_t}_\diamond\leq \Delta_t$.

As discussed above, for each $1\leq t\leq T$, we can take $\mathcal{N}_t$ to be an $(\nicefrac{\varepsilon}{K})$-covering net for $\mathcal{M}_t$ w.r.t.~$\norm{\cdot}_\diamond$ whose cardinality satisfies $\log(\lvert\mathcal{N}_t\rvert)\leq c_t \log\left(1 + \nicefrac{K}{\epsilon}\right)$. We define $\mathcal{N}_{\varepsilon_K}$ to be the set of CPTP maps that can be implemented by $\mathcal{A}$ if exactly $K$ of the trainable $2$-qubit CPTP maps are taken from $\mathcal{N}_{t_1},\ldots,\mathcal{N}_{t_K}$, respectively, and the last $T-K$ trainable maps are left at the initial point of the optimization. That is, we define
\begin{equation}
    \mathcal{N}_{\varepsilon_K} 
    := \left\{\mathcal{E}(\tilde{\mathcal{E}}_1, \ldots, \tilde{\mathcal{E}}_T) = \left(\prod_{t=1}^T \mathcal{F}_{t} \tilde{\mathcal{E}}_{t}\right) \mathcal{F}_0 ~\Bigg\vert~ \lvert\{1\leq t\leq T~|~\tilde{\mathcal{E}}_t \neq M^0_t\}\vert = K\textrm{ and  }\tilde{\mathcal{E}}_{t}\in\mathcal{N}_t\textrm{ whenever } \tilde{\mathcal{E}}_t \neq \mathcal{E}^0_t\right\}.
\end{equation}
Using the subadditivity of the distance induced by the diamond norm (Lemma \ref{lem:subadditivity-diamond}), it is easy to see that, for every $\mathcal{E}=\mathcal{E}(\mathcal{E}_1,\ldots,\mathcal{E}_T)\in\mathcal{CPTP}^{\mathrm{QMLM}}_{(\Delta_t)_t}$, there exists an $\tilde{\mathcal{E}}=\mathcal{E}(\tilde{\mathcal{E}}_1,\ldots,\tilde{\mathcal{E}}_T)\in\mathcal{N}_{\varepsilon_K}$ s.t.
\begin{equation}
    \norm{\mathcal{E} - \tilde{\mathcal{E}}}_\diamond
    \leq K\cdot\frac{\varepsilon}{K} + \sum_{\substack{t=1 \\t\neq t_1,\ldots,t_K}}^T \Delta_{t}
    =\varepsilon_K.
\end{equation}
Thus, $\mathcal{N}_{\varepsilon_K}$ is indeed an $\varepsilon_K$-covering net for $\mathcal{CPTP}^{\mathrm{QMLM}}_{(\Delta_t)_t}$, as claimed.

It remains to observe that, by definition of $\mathcal{N}_{\varepsilon_K}$, we have
\begin{align}
    \log(\lvert\mathcal{N}_{\varepsilon_K}\rvert) 
    &= \log\left(\binom{T}{K}\cdot \prod_{k=1}^K |\mathcal{N}_t|\right)\\
    &\leq K\log(T) + \left(\sum_{k=1}^K c_t\right)\log\left(1 + \frac{K}{\epsilon}\right)\\
    &\leq K\log(T) + K\max\limits_{1\leq t\leq T}c_t\log\left(1 + \frac{K}{\varepsilon} \right),
\end{align}
as claimed.
\end{proof}

Armed with this metric entropy bound, we can now prove Theorem \ref{thm:optimization-dependent-gen-bound}.

\begin{proof}[Proof of Theorem \ref{thm:optimization-dependent-gen-bound}]
Starting from the metric entropy bound of Theorem \ref{thm:metric-entropy-optimization-dependent}, we again argue as in the proof of Theorem \ref{thm:prediction-error-bound-qnn}. Recall that the first step of said proof was to establish Eq.~\eqref{eq:Rademacher-generalization-bound}. This was then followed in a second step by upper-bounding the obtained expression using a covering number integral.
The first step, leading to Eq.~\eqref{eq:Rademacher-generalization-bound}, is also valid in the scenario of this Theorem. That is, we again have that, with probability $\geq 1-\delta$ over the choice of training data of size $N$,
\begin{equation}
    R(\vec{\theta}^\ast) - \hat{R}(\vec{\theta}^\ast)
    \leq 2\mathbb{E}_\sigma\left[\sup\limits_{\vec{\theta}} \frac{1}{N}\sum\limits_{i=1}^N \sigma_i \ell (\vec{\theta}; x_i, y_i)\right] + 3C_\mathrm{loss}\sqrt{\frac{2\log(\nicefrac{2}{\delta})}{N}}.
\end{equation}
However, we have to change the second step. To this end, we first observe that, by a reasoning completely analogous the one leading up to Eq.~\eqref{eq:chaining-before-limit}, for every $m\in\mathbb{N}_0$,
\begin{equation}\label{eq:chaining-intermediate-optimization}
    \mathbb{E}_\sigma\left[\sup\limits_{\vec{\theta}} \frac{1}{N}\sum\limits_{i=1}^N \sigma_i \ell (\vec{\theta}; x_i, y_i)\right]
    \leq C_\mathrm{loss}\cdot 2^{-m} + \frac{12 C_\mathrm{loss}}{\sqrt{N}}\int\limits_{2^{-(m+1)}}^{2^{-1}}\sqrt{\log\left(\mathcal{N}(\mathcal{CPTP}^{\mathrm{QMLM}}_{(\Delta_t)_t},\norm{\cdot}_\diamond,\alpha\right)}~\mathrm{d}\alpha,
\end{equation}
where we used the notation from Theorem \ref{thm:metric-entropy-optimization-dependent}.
Fix a $K\in\{0,\ldots,T\}$ and pairwise distinct $t_1,\ldots,t_K\in\{1,\ldots,T\}$ such that
\begin{equation}
    \tilde{\Delta}
    \coloneqq\sum_{\substack{t=1 \\t\neq t_1,\ldots,t_K}}^T \Delta_{t}
    <\frac{1}{2}
\end{equation}
and take $m\in\mathbb{N}_0$ such that $\tilde{\Delta}< 2^{-(m+1)} < 2\tilde{\Delta}$. Then in particular $2^{-m}\leq 4\tilde{\Delta}$ and we can further upper bound the expression in Eq.~\eqref{eq:chaining-intermediate-optimization} as 
\begin{align}
    &\mathbb{E}_\sigma\left[\sup\limits_{\vec{\theta}} \frac{1}{N}\sum\limits_{i=1}^N \sigma_i \ell (\vec{\theta}; x_i, y_i)\right]\\
    &\leq  4C_\mathrm{loss}\tilde{\Delta} + \frac{12 C_\mathrm{loss}}{\sqrt{N}}\int\limits_{\tilde{\Delta}}^{2^{-1}}\sqrt{\log\left(\mathcal{N}(\mathcal{CPTP}^{\mathrm{QMLM}}_{(\Delta_t)_t},\norm{\cdot}_\diamond,\alpha\right)}~\mathrm{d}\alpha\\
    &=  4C_\mathrm{loss}\tilde{\Delta}+ \frac{12 C_\mathrm{loss}}{\sqrt{N}}\int\limits_{0}^{2^{-1} - \tilde{\Delta}}\sqrt{\log\left(\mathcal{N}(\mathcal{CPTP}^{\mathrm{QMLM}}_{(\Delta_t)_t},\norm{\cdot}_\diamond,\alpha + \tilde{\Delta}\right)}~\mathrm{d}\alpha.
\end{align}
At this point, we can apply the metric entropy bound from Theorem \ref{thm:metric-entropy-optimization-dependent} to obtain
\begin{align}
    \int\limits_{0}^{2^{-1} - \tilde{\Delta}}\sqrt{\log\left(\mathcal{N}(\mathcal{CPTP}^{\mathrm{QMLM}}_{(\Delta_t)_t},\norm{\cdot}_\diamond,\alpha + \tilde{\Delta}\right)}~\mathrm{d}\alpha
    &\leq \int\limits_{0}^{2^{-1} - \tilde{\Delta}}\sqrt{K\max\limits_{1\leq t\leq T}c_t\log\left(1 + \frac{K}{\alpha} \right)}~\mathrm{d}\alpha\\
    &\leq \sqrt{K\max\limits_{1\leq t\leq T}c_t}\int\limits_{0}^{2^{-1} - \tilde{\Delta}}\sqrt{\log\left(\frac{2K}{\alpha} \right)}~\mathrm{d}\alpha\\
    &\leq \mathcal{O}\left(\sqrt{K\max\limits_{1\leq t\leq T}c_t\log(K)}\right).
\end{align}
Altogether, so far we have shown that, for any fixed choice of $K\in\{0,\ldots,T\}$ and of pairwise distinct $t_1,\ldots,t_K\in\{1,\ldots,T\}$ such that $\sum\limits_{\substack{t=1 \\t\neq t_1,\ldots,t_K}}^T \Delta_{t}<\frac{1}{2}$, with probability $\geq 1-\delta$ over the choice of training data of size $N$, we have
\begin{equation}
    R(\vec{\theta}^\ast) - \hat{R}(\vec{\theta}^\ast)
    \in \mathcal{O}\left(C_\mathrm{loss}\left(\sqrt{\frac{K\max\limits_{1\leq t\leq T}c_t\log(K)}{N}} + \sum_{\substack{t=1 \\t\neq t_1,\ldots,t_K}}^T \Delta_{t} + \sqrt{\frac{2\log(\nicefrac{1}{\delta})}{N}}\right)\right).
\end{equation}
After a union bound over the at most $\binom{T}{K}\leq T^K$ different choices of pairwise distinct $t_1,\ldots,t_K\in\{1,\ldots,T\}$ (with $\sum\limits_{\substack{t=1 \\t\neq t_1,\ldots,t_K}}^T \Delta_{t}<\frac{1}{2}$), we see that, with probability $\geq 1-\delta$ over the choice of training data of size $N$, we have
\begin{align}
    R(\vec{\theta}^\ast) - \hat{R}(\vec{\theta}^\ast)
    &\in \mathcal{O}\left(C_\mathrm{loss}\left(\sqrt{\frac{K\max\limits_{1\leq t\leq T}c_t\log(K)}{N}} + \sum_{\substack{t=1 \\t\neq t_1,\ldots,t_K}}^T \Delta_{t} + \sqrt{\frac{2\log(\nicefrac{T^K}{\delta})}{N}}\right)\right)\\
    &\in \mathcal{O}\left(C_\mathrm{loss}\min\left\{\sqrt{\frac{K\max\limits_{1\leq t\leq T}c_t\log(K)}{N}} + \sqrt{\frac{K\log(T)}{N}} + \sum_{\substack{t=1 \\t\neq t_1,\ldots,t_K}}^T \Delta_{t} + \sqrt{\frac{2\log(\nicefrac{1}{\delta})}{N}}\right\}\right),
\end{align}
where $K\in\{0,\ldots,T\}$ is still fixed and the minimum is over all choices of pairwise distinct $t_1,\ldots,t_K\in\{1,\ldots,T\}$.

Finally, we can take a union bound over at most $T+1$ different values of $K$ and obtain that, with probability $\geq 1-\delta$ over the choice of training data of size $N$, we have
\begin{align}
    R(\vec{\theta}^\ast) - \hat{R}(\vec{\theta}^\ast)
    &\in \mathcal{O}\left(C_\mathrm{loss}\min\left\{\sqrt{\frac{K\max\limits_{1\leq t\leq T}c_t\log(K)}{N}} + \sqrt{\frac{K\log(T)}{N}} + \sum_{\substack{t=1 \\t\neq t_1,\ldots,t_K}}^T \Delta_{t} + \sqrt{\frac{2\log(\nicefrac{T}{\delta})}{N}}\right\}\right)\\
    &\in \mathcal{O}\left(C_\mathrm{loss}\min\left\{\sqrt{\frac{K\max\limits_{1\leq t\leq T}c_t\log(K)}{N}}+ \sqrt{\frac{K\log(T)}{N}} + \sum_{\substack{t=1 \\t\neq t_1,\ldots,t_K}}^T \Delta_{t} + \sqrt{\frac{2\log(\nicefrac{1}{\delta})}{N}}\right\}\right),
\end{align}
where the minimum is over all values of $K$ and over all choices of pairwise distinct $t_1,\ldots,t_K\in\{1,\ldots,T\}$.
\end{proof}

If, in the generalization error bound of Theorem \ref{thm:optimization-dependent-gen-bound}, we disregard the potential improvements gained from the $\mathcal{M}_t$-dependent constants $c_t$ and instead replace all of them by their worst-case value $1024$, we can simplify the bound to
\begin{equation}
    R(\vec{\theta}^\ast)-\hat{R}_S(\vec{\theta}^\ast)
    \in \mathcal{O}\left(C_\mathrm{loss}\min\left\{\sqrt{\frac{K\log(T)}{N}} + \sum_{\substack{t=1 \\t\neq t_1,\ldots,t_K}}^T \Delta_{t} + \sqrt{\frac{\log(\nicefrac{1}{\delta})}{N}}\right\}\right),
\end{equation}
because $K\log(K)\leq K\log(T)$ for all $K=1,\ldots,T$. Moreover, instead of taking a minimum over all such $K$ and over all choices of pairwise distinct $t_1,\ldots,t_K\in\{1,\ldots,T\}$, we can take the minimum only over $K$, and fix the choice $t_k=k$ to obtain
\begin{equation}
    R(\vec{\theta}^\ast)-\hat{R}_S(\vec{\theta}^\ast)
    \in \mathcal{O}\left(C_\mathrm{loss}\min\limits_{K=1,\ldots,T}\left\{\sqrt{\frac{K\log(T)}{N}} + \sum_{t=K+1}^T \Delta_{t} + \sqrt{\frac{\log(\nicefrac{1}{\delta})}{N}}\right\}\right).
\end{equation}
If we again fix a confidence level $\delta$ and consider $C_\mathrm{loss}$ as a fixed constant of the problem, this becomes the bound given in Theorem \ref{thm:gate-sharing-QMLM-optimization} for the case $M=1$. (The case for general $M$ follows from our ``mother theorem,'' see Subsection \ref{SbSbSct:mother-theorem}.)

We can clearly see that, if the optimization has only made substantial changes to few trainable maps, then the generalization error bound in Theorem \ref{thm:optimization-dependent-gen-bound} is dominated by the maps that have undergone more significant changes during optimization. The number of such parameterized maps could be much smaller than the overall number of the trainable CPTP maps $T$. Therefore, this optimization-dependent generalization error bound can significantly outperform the previous bounds, which did not take the optimization procedure into account, w.r.t.~the dependence on $T$.

One consequence of this theorem is that a good choice of initialization for the optimization of a QMLM can not only serve to improve the cost of the optimization itself, but it can also help the generalization behavior. Namely, a particularly good choice of initialization, potentially found through pretraining on an independent data set, can lead to an optimization procedure that does not have to deviate too far from the initialization w.r.t.~some of the trainable maps, which, according to our bound, will be advantageous for generalization.

A second implication of this result for what to take into account in designing an optimization procedure for training a QMLM is the following: Making large steps only on few trainable gates and only negligibly small steps on the remaining ones is, from a generalization perspective, preferable to making steps of comparable, non-negligible sizes on many (or even all) of the trainable gates.

\begin{remark}
We note that in the proof of Theorem \ref{thm:optimization-dependent-gen-bound}, it was not necessary that the fixed CPTP maps $\mathcal{E}^0_1 \in \mathcal{M}_1, \ldots, \mathcal{E}^0_T \in \mathcal{M}_T$ were given as the initialization of the optimization procedure. In fact, we can take these maps to be any fixed ``reference points'' w.r.t.~which we measure distances. Th proof then works without changes, as long as the reference maps are indeed fixed in advance, independently of the training data.
\end{remark}

\subsubsection{Extension to unbiased estimates of measurement statistics}\label{SbSbSct:estimated-training-error}

In practice, we cannot obtain the exact value of $\Tr\left[O^{\mathrm{loss}}_{x_i, y_i} (\mathcal{E}^{\mathrm{QMLM}}_{\vec{\theta}}\otimes \operatorname{id})(\rho(x_i)) \right]$ for a training example $(x_i,y_i)$ if we only perform finitely many measurements. Instead, as a proxy for the training error, we consider an unbiased estimator: For $1\leq \sigma\leq \sigma_{\mathrm{est}}$, with $\sigma_{\mathrm{est}}\in\mathbb{N}$ fixed, we independently pick $i_\sigma$ uniformly at random from $\{1,\ldots,N\}$ and measure the observable $O^{\mathrm{loss}}_{x_{i_\sigma}, y_{i_\sigma}}$ on the output state $\mathcal{E}^{\mathrm{QMLM}}_{\vec{\theta}}(\rho(x_{i_\sigma}))$ to yield a single measurement outcome $o^{\mathrm{loss}}_{\vec{\theta},\sigma} \in \left[-\norm{O^{\mathrm{loss}}_{x_{i_{\sigma}}, y_{i_{\sigma}}}}, \norm{O^{\mathrm{loss}}_{x_{i_{\sigma}}, y_{i_{\sigma}}}}\right]$. As $\mathbb{E}\left[o^{\mathrm{loss}}_{\vec{\theta},\sigma}\right] = \frac{1}{N} \sum_{i=1}^N \Tr\left[O^{\mathrm{loss}}_{x_i, y_i} (\mathcal{E}^{\mathrm{QMLM}}_{\vec{\theta}}\otimes \operatorname{id})(\rho(x_i)) \right]$, where the expectation is taken w.r.t.~the sampling of $i_{\sigma}$ and the randomness in obtaining the measurement outcome. This yields a finite sequence of observations
\begin{equation}
    o^{\mathrm{loss}}_{\vec{\theta}, 1}, \ldots, o^{\mathrm{loss}}_{\vec{\theta}, \sigma_{\mathrm{est}}}, \,\,\, \text{with} \,\,\,  \frac{1}{\sigma_{\mathrm{est}}} \sum_{\sigma=1}^{\sigma_{\mathrm{est}}}  o^{\mathrm{loss}}_{\vec{\theta}, \sigma} 
    \approx \frac{1}{N} \sum_{i=1}^N \Tr\left[O^{\mathrm{loss}}_{x_i, y_i} (\mathcal{E}^{\mathrm{QMLM}}_{\vec{\theta}}\otimes \operatorname{id})(\rho(x_i)) \right].
\end{equation}
In this scenario, where we only obtain a noisy estimate of the training error from $\sigma_{\mathrm{est}}$ measurements, the prediction performance guarantee takes the following form:

\begin{corollary}\label{cor:EstimatedTrainingError}
Let $\mathcal{E}^{\mathrm{QMLM}}_{\vec{\theta}}(\cdot)$ be a quantum machine learning model with a fixed architecture consisting of $T$ parameterized 2-qubit CPTP maps. Let $P$ be a probability distribution over input-output pairs.
Suppose that, given training data $S=\{(x_i,y_i)\}_{i=1}^N$ of size $N$, our optimization yields the parameter setting $\vec{\theta}^\ast = \vec{\theta}^\ast(S)$.

Then, with probability at least $1-\delta$ over the choice of i.i.d.~training data $S$ of size $N$ according to $P$, over the sampling of $i_1,\ldots,i_{\sigma_{\mathrm{est}}}$, and over the $\sigma_{\mathrm{est}}$ obtained measurement outcomes, 
\begin{equation}
     \E_{x, y} \ell (\vec{\theta}^*; x, y) - \frac{1}{\sigma_{\mathrm{est}}} \sum_{\sigma=1}^{\sigma_{\mathrm{est}}}  o^{\mathrm{loss}}_{\vec{\theta}^\ast, \sigma} 
     \in \mathcal{O}\left(C_\mathrm{loss}\left( \sqrt{\frac{T\log(T)}{N}}+ \sqrt{\frac{\log(\nicefrac{1}{\delta})}{N}} + \sqrt{\frac{\log(\nicefrac{1}{\delta})}{\sigma_{\mathrm{est}}}} \right)\right).
\end{equation}
\end{corollary}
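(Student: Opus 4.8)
The plan is to split the quantity of interest through the exact (but not directly observable) training error $\hat{R}_S(\vec{\theta}^\ast)$ and to control each resulting piece with a separate high-probability estimate. Since $R(\vec{\theta}^\ast) = \E_{x,y}\ell(\vec{\theta}^\ast;x,y)$ by definition, I would start from the exact decomposition
\begin{equation}
    R(\vec{\theta}^\ast) - \frac{1}{\sigma_{\mathrm{est}}}\sum_{\sigma=1}^{\sigma_{\mathrm{est}}} o^{\mathrm{loss}}_{\vec{\theta}^\ast,\sigma}
    = \underbrace{\left(R(\vec{\theta}^\ast) - \hat{R}_S(\vec{\theta}^\ast)\right)}_{\text{generalization error}} + \underbrace{\left(\hat{R}_S(\vec{\theta}^\ast) - \frac{1}{\sigma_{\mathrm{est}}}\sum_{\sigma=1}^{\sigma_{\mathrm{est}}} o^{\mathrm{loss}}_{\vec{\theta}^\ast,\sigma}\right)}_{\text{estimation error}}.
\end{equation}

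The generalization error is handled directly: as $\vec{\theta}^\ast = \vec{\theta}^\ast(S)$ arises from optimizing a fixed architecture with $T$ parameterized $2$-qubit CPTP maps, Theorem~\ref{thm:prediction-error-bound-qnn} applies verbatim and gives, with probability at least $1-\nicefrac{\delta}{2}$ over the draw of $S$, the bound $R(\vec{\theta}^\ast) - \hat{R}_S(\vec{\theta}^\ast) \in \mathcal{O}\left(C_\mathrm{loss}\left(\sqrt{\nicefrac{T\log T}{N}} + \sqrt{\nicefrac{\log(\nicefrac{1}{\delta})}{N}}\right)\right)$.

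For the estimation error the key observation is that, \emph{conditioned on} $S$ and hence on the frozen value $\vec{\theta}^\ast = \vec{\theta}^\ast(S)$, the outcomes $o^{\mathrm{loss}}_{\vec{\theta}^\ast,1},\ldots,o^{\mathrm{loss}}_{\vec{\theta}^\ast,\sigma_{\mathrm{est}}}$ are independent, each supported on $[-C_\mathrm{loss},C_\mathrm{loss}]$ because $\norm{O^{\mathrm{loss}}_{x_{i_\sigma},y_{i_\sigma}}}\leq C_\mathrm{loss}$, with common conditional mean $\hat{R}_S(\vec{\theta}^\ast)$ by unbiasedness of the subsampling and the measurement. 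Hoeffding's inequality (Lemma~\ref{lem:Hoeffding}) then yields, conditionally on $S$,
\begin{equation}
    \mathbb{P}\left[\left\lvert \hat{R}_S(\vec{\theta}^\ast) - \frac{1}{\sigma_{\mathrm{est}}}\sum_{\sigma=1}^{\sigma_{\mathrm{est}}} o^{\mathrm{loss}}_{\vec{\theta}^\ast,\sigma}\right\rvert \geq \eta \right] \leq 2\exp\left(-\frac{\sigma_{\mathrm{est}}\eta^2}{2C_\mathrm{loss}^2}\right),
\end{equation}
so that the choice $\eta = C_\mathrm{loss}\sqrt{\nicefrac{2\log(\nicefrac{4}{\delta})}{\sigma_{\mathrm{est}}}}$ controls the estimation error by $\mathcal{O}\left(C_\mathrm{loss}\sqrt{\nicefrac{\log(\nicefrac{1}{\delta})}{\sigma_{\mathrm{est}}}}\right)$ with conditional probability at least $1-\nicefrac{\delta}{2}$. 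Crucially, since this conditional bound holds for \emph{every} $S$, taking the expectation of the conditional failure probability over $S$ shows it holds unconditionally with probability at least $1-\nicefrac{\delta}{2}$ over the joint randomness of $S$, the subsampling, and the measurements.

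Finally I would combine the two estimates by a union bound, so both hold simultaneously with probability at least $1-\delta$, and sum them; because the decomposition is an exact equality, no further inequality is needed beyond bounding the estimation term by its absolute value. The only genuine conceptual obstacle — and the reason one routes the argument through $\hat{R}_S(\vec{\theta}^\ast)$ rather than attempting a single direct bound — is the statistical dependence of $\vec{\theta}^\ast$ on $S$: the measurement outcomes are \emph{not} independent of the generalization-error event, and Hoeffding's inequality cannot be applied to them directly. Conditioning on $S$ neutralizes this dependence by freezing $\vec{\theta}^\ast$ and restoring the independence of the outcomes, while the uniformity of the conditional estimation bound over all $S$ is precisely what permits merging the two events by a plain union bound.
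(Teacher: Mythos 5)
Your proposal is correct and follows essentially the same route as the paper's proof: the identical decomposition through the exact empirical risk $\hat{R}_S(\vec{\theta}^\ast)$, the same application of Theorem~\ref{thm:prediction-error-bound-qnn} to the generalization term, the same conditional Hoeffding bound for the estimation term followed by averaging the conditional failure probability over $S$, and a final union bound. The only differences are immaterial constants (you use a two-sided Hoeffding bound with $\log(\nicefrac{4}{\delta})$ where the paper uses a one-sided bound with $\log(\nicefrac{2}{\delta})$), and your closing discussion of why conditioning on $S$ is needed matches the paper's own remark on the data-dependence of $\vec{\theta}^\ast$.
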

\begin{proof}
We first insert a zero in terms of the empirical risk as follows:
\begin{equation}
    \E_{x, y} \ell (\vec{\theta}^*; x, y) - \frac{1}{\sigma_{\mathrm{est}}} \sum_{\sigma=1}^{\sigma_{\mathrm{est}}}  o^{\mathrm{loss}}_{\vec{\theta}^\ast, \sigma}
    = \left(\E_{x, y} \ell (\vec{\theta}^*; x, y) - \frac{1}{N} \sum_{i=1}^{N} \ell (\vec{\theta}^\ast; x_i, y_i) \right) + \left( \frac{1}{N} \sum_{i=1}^{N} \ell (\vec{\theta}^\ast; x_i, y_i)- \frac{1}{\sigma_{\mathrm{est}}} \sum_{\sigma=1}^{\sigma_{\mathrm{est}}}  o^{\mathrm{loss}}_{\vec{\theta}^\ast, \sigma} \right).
\end{equation}
By Theorem \ref{thm:prediction-error-bound-qnn}, with probability at least $1-\frac{\delta}{2}$ over the choice of the training data, the first term on the right-hand side (which is independent of the subsampling and of the obtained measurement outcomes) is bounded as $\in \mathcal{O}\left(C_\mathrm{loss}\left( \sqrt{\frac{T\log(T)}{N}} + \sqrt{\frac{\log(\nicefrac{1}{\delta})}{N}} \right)\right)$. By Hoeffding's inequality, for any fixed $S$ and $\vec{\theta}^\ast$, with probability at least $1-\frac{\delta}{2}$ over the sampling of $i_1,\ldots,i_{\sigma_{\mathrm{est}}}$ and over the $\sigma_{\mathrm{est}}$ obtained measurement outcomes, the second term on the right-hand side is $\leq C_\mathrm{loss} \sqrt{\frac{2\log(\nicefrac{2}{\delta})}{\sigma_{\mathrm{est}}}}$. Therefore, we also have
\begin{align}
    &\hphantom{=}\mathbb{P}\left[\frac{1}{N} \sum_{i=1}^{N} \ell (\vec{\theta}^\ast; x_i, y_i)- \frac{1}{\sigma_{\mathrm{est}}} \sum_{\sigma=1}^{\sigma_{\mathrm{est}}}  o^{\mathrm{loss}}_{\vec{\theta}^\ast, \sigma} > C_\mathrm{loss} \sqrt{\frac{2\log(\nicefrac{2}{\delta})}{\sigma_{\mathrm{est}}}}\right]\\
    &= \mathbb{E}_{S,\vec{\theta}^\ast}\left[\mathbb{P}\left[\frac{1}{N} \sum_{i=1}^{N} \ell (\vec{\theta}^\ast; x_i, y_i)- \frac{1}{\sigma_{\mathrm{est}}} \sum_{\sigma=1}^{\sigma_{\mathrm{est}}}  o^{\mathrm{loss}}_{\vec{\theta}^\ast, \sigma} > C_\mathrm{loss} \sqrt{\frac{2\log(\nicefrac{2}{\delta})}{\sigma_{\mathrm{est}}}}~\Big\rvert~ S,\vec{\theta}^\ast\right]\right]\\
    &\leq \mathbb{E}_{S,\vec{\theta}^\ast}\left[\frac{\delta}{2}\right]\\
    &= \frac{\delta}{2}.
\end{align}
Now, the statement of the Corollary follows via a union bound.
\end{proof}

This shows that we do not need to perform a disproportionately large number of measurements to guarantee that the estimated training error is indeed a good proxy for the prediction error. 
It suffices to choose $\sigma_{\mathrm{est}}$ to be roughly $\nicefrac{N}{T\log(T)}$, along with $N$ being sufficiently larger than $T\log(T)$, to guarantee that the prediction error will not be much higher than the approximate (observed) training error.

\subsubsection{Mother theorem}\label{SbSbSct:mother-theorem}

We can summarize all the previously discussed extensions of Theorem \ref{thm:prediction-error-bound-qnn} in Theorem \ref{thm-mother}, which we restate here for convenience:

\begin{thm-mother}[Mother Theorem]
Let $\mathcal{E}^{\mathrm{QMLM}}_{\vec{\alpha}}(\cdot)$ be a QMLM with a variable structure. Suppose that, for every $k\in\mathbb{N}$, there are at most $G_\tau\in\mathbb{N}$ allowed structures with exactly $\tau$ parameterized $2$-qubit CPTP maps, in which the $t^{\mathrm{th}}$ of these maps is taken from $\mathcal{M}_t$ and used $M_t$ times, and an arbitrary number of non-trainable, global CPTP maps. Also, for each $t\in\mathbb{N}$, let $\mathcal{E}^0_t\in\mathcal{CPTP}\left((\mathbb{C})^{\otimes 2}\right)$ be a fixed reference CPTP map.
Let $P$ be a probability distribution over input-output pairs. Suppose that, given training data $S=\{(x_i,y_i)\}_{i=1}^N$ of size $N$, our optimization of the QMLM over structures and parameters w.r.t.~the loss function $\ell(\vec{\alpha};x_i,y_i)=\Tr\left[O^{\mathrm{loss}}_{x_i,y_i}(\mathcal{E}^{\mathrm{QMLM}}_{\vec{\alpha}}\otimes \operatorname{id})(\rho(x_i))\right]$ yields a (data-dependent) structure with $T=T(N)$ independently parameterized $2$-qubit CPTP maps, in which the $t^{\mathrm{th}}$ of these maps is taken from $\mathcal{M}_t$ and used $M_t$ times, as well as the parameter setting $\vec{\alpha}^\ast = \vec{\alpha}^\ast(S)$.

Then, with probability at least $1-\delta$ over the choice of i.i.d.~training data $S$ of size $N$ according to $P$,
\small
\begin{align}
    &R(\vec{\alpha}^\ast) - \hat{R}_S(\vec{\alpha}^\ast)\nonumber\\
    &\in\mathcal{O}\left(C_\mathrm{loss}\min\left\{\sqrt{\frac{K\max\limits_{1\leq t\leq T}c_t\log(K)}{N}} + \sqrt{\frac{K\log(T)}{N}} + \sqrt{\frac{K\max\limits_{1\leq t\leq T}c_t\log(M_t)}{N}} + \sum_{\substack{t=1 \\t\neq t_1,\ldots,t_K}}^T M_t \Delta_{t} + \sqrt{\frac{\log(G_T)}{N}} +\sqrt{\frac{\log(\nicefrac{1}{\delta})}{N}}\right\}\right),
\end{align}
\normalsize
where $\Delta^T_1,\ldots,\Delta^T_T$ denote the (data-dependent) distance between the trainable maps in the output QMLM to the respective reference maps $\mathcal{E}^0_1,\ldots,\mathcal{E}^0_T$, $C_\mathrm{loss}=\sup_{x,y}\norm{O^{\mathrm{loss}}_{x, y}}$ is the maximum (absolute) value attainable by the loss function, and the minimum is over all $K\in\{0,\ldots,T\}$ and choices of pairwise distinct $t_1,\ldots,t_K\in\{1,\ldots,T\}$.

Moreover, if the loss is not evaluated exactly, but an unbiased estimator is built from $\sigma_{\mathrm{est}}$ subsampled training data points (as in Section \ref{SbSbSct:estimated-training-error}), we only incur an additional error of $\mathcal{O}\left(\sqrt{\nicefrac{\log(\nicefrac{1}{\delta})}{\sigma_{\mathrm{est}}}}\right)$.
\end{thm-mother}

\begin{proof}
To prove this most general version of our results, we combine the previous results and proof strategies.
First, fix $\tau\in\mathbb{N}$ and one of the $G_\tau$ admissible QMLM architectures with exactly $\tau$ trainable $2$-qubit CPTP maps, in which the $t^{\mathrm{th}}$ of these maps is taken from $\mathcal{M}_t$ and used $M_t$ times. With the same strategy as in the proof of Theorem \ref{thm:metric-entropy-optimization-dependent}, if we take $\mathcal{N}_t$ to be a $(\nicefrac{\varepsilon}{K M_t})$-covering net for $\mathcal{M}_t$ w.r.t.~$\norm{\cdot}_\diamond$, and consider the set of $n$-qubit CPTP maps obtained from the QMLM if exactly $K$ of the $\tau$ independently trainable $2$-qubit CPTP maps are taken from the respective $\mathcal{N}_t$, and the remaining $\tau-K$ maps are left at the corresponding reference map, this gives us an $\varepsilon_K$-covering net $\mathcal{N}_\varepsilon$ of the class of $n$-qubit CPTP maps that the QMLM architecture can implement, where
\begin{equation}
    \varepsilon_K
    \coloneqq \varepsilon + \sum_{\substack{t=1 \\t\neq t_1,\ldots,t_K}}^T M_t\Delta_{t}.
\end{equation}
This $\varepsilon_K$-covering net can be taken to have cardinality bounded as
\begin{equation}
    \log(\lvert\mathcal{N}_\varepsilon\rvert)\leq K\log(\tau) + K\max_{1\leq t\leq \tau}c_t \log\left(1 + \nicefrac{K M_t}{\varepsilon}\right).
\end{equation}
If we use this metric entropy bound for the chaining argument presented in the proof of Theorem \ref{thm:optimization-dependent-gen-bound}, we can show that, with probability $\geq 1-\nicefrac{\delta}{2G_\tau \tau^2}$ over the choice of i.i.d.~training data $S$ of size $N$, if $\vec{\theta}^\ast_\tau$ is a (continuous) parameter setting for the $\tau$ parameterized maps obtained through optimization upon data $S$, we have 
\begin{equation}
    R(\vec{\theta}^\ast_\tau)-\hat{R}_S(\vec{\theta}^\ast_\tau)
    \in \mathcal{O}\left(C_\mathrm{loss}\min\left\{\sqrt{\frac{K\max\limits_{1\leq t\leq \tau}c_t\log(KM_t)}{N}} + \sqrt{\frac{K\log(\tau)}{N}} + \sum_{\substack{t=1 \\t\neq t_1,\ldots,t_K}}^T M_t \Delta_{t} + \sqrt{\frac{\log(\nicefrac{2G_\tau \tau^2}{\delta})}{N}}\right\}\right),
\end{equation}
where the minimum is over $K\in\{0,\ldots,\tau\}$ and over choices of pairwise distinct $t_1,\ldots,t_K\in\{1,\ldots,\tau\}$.

We now first take a union bound over the $G_\tau$ admissible structures and then another union bound over $\tau\in\mathbb{N}$ to obtain: With probability $\geq 1-\delta$ over the choice of i.i.d.~training data $S$ of size $N$, if the optimization upon input of $S$ outputs a QMLM architecture with $T=T(N)$ parameterized $2$-qubit CPTP maps and the (discrete and continuous) parameter setting $\vec{\alpha}^\ast = \vec{\alpha}^\ast (S)$, then we have the generalization error bound
\small
\begin{align}
    &R(\vec{\alpha}^\ast)-\hat{R}_S(\vec{\alpha}^\ast)\\
    &\in \mathcal{O}\left(C_\mathrm{loss}\min\left\{\sqrt{\frac{K\max\limits_{1\leq t\leq T}c_t\log(KM_t)}{N}} + \sqrt{\frac{K\log(T)}{N}} + \sum_{\substack{t=1 \\t\neq t_1,\ldots,t_K}}^T M_t \Delta_{t} + \sqrt{\frac{\log(\nicefrac{2G_T T^2}{\delta})}{N}}\right\}\right)\\
    &\in \mathcal{O}\left(C_\mathrm{loss}\min\left\{\sqrt{\frac{K\max\limits_{1\leq t\leq T}c_t\log(K)}{N}} + \sqrt{\frac{K\log(T)}{N}} + \sqrt{\frac{K\max\limits_{1\leq t\leq T}c_t\log(M_t)}{N}} + \sum_{\substack{t=1 \\t\neq t_1,\ldots,t_K}}^T M_t \Delta_{t} + \sqrt{\frac{\log(G_T)}{N}} +\sqrt{\frac{\log(\nicefrac{1}{\delta})}{N}}\right\}\right),
\end{align}
\normalsize
with the minimum as claimed.

To understand the added error in the case in which an unbiased estimate of the empirical risk is used, we now repeat the analysis given in Section \ref{SbSbSct:estimated-training-error}, but use the generalization bound just established instead of the one from Theorem \ref{thm:prediction-error-bound-qnn}, and obtain the claimed bound.
\end{proof}

Also for Theorem \ref{thm-mother}, we shortly explain how this leads to the result stated as Theorem \ref{thm:mother-main-text} in the main text. First, Theorem \ref{thm:mother-main-text} only considers the case $M_t=M$ for all $t$. Second, just like presented in Subsection \ref{SbSbSct:optimization-dependent}, we can bound the constants $c_t$ by their worst-case upper bound of $1024$, and then take a minimum not over all $K$ and all choices of $t_1,\ldots,t_T$, but only over all $K$ with the fixed choice $t_k=k$. With these two simplifications, the bound becomes
\begin{equation}
    R(\vec{\alpha}^\ast) - \hat{R}_S(\vec{\alpha}^\ast)\nonumber
    \in \mathcal{O}\left(C_\mathrm{loss}\min\left\{\sqrt{\frac{K\log(MT)}{N}} + \sum_{\substack{t=1 \\t\neq t_1,\ldots,t_K}}^T M \Delta_{t} + \sqrt{\frac{\log(G_T)}{N}} +\sqrt{\frac{\log(\nicefrac{1}{\delta})}{N}}\right\}\right).
\end{equation}
Finally, once we plug in a constant confidence level $\delta$ and also consider $C_\mathrm{loss}$ as a constant dictated by the problem, we end up with the bound of Theorem \ref{thm:mother-main-text}.

\begin{remark}
In Theorem \ref{thm-mother}, we have chosen a fixed reference map $\mathcal{E}^0_t\in\mathcal{CPTP}\left((\mathbb{C})^{\otimes 2}\right)$ for every $t\in\mathbb{N}$. One could even choose different reference maps for each $k$ and for each of the $G_k$ allowed structures with exactly $k$ parameterized maps.

In Section \ref{SbSbSct:optimization-dependent}, we have taken the initial point of the optimization procedure as reference point for evaluating distances. A similar interpretation is possible in Theorem \ref{thm-mother}, however, the reference points can be more abstract. In principle, the reference maps can be chosen freely, as long as the choice is independent of the training data sample w.r.t.~which the empirical risk is evaluated.
\end{remark}

\begin{remark}\label{rmk:data-reuploading}
    We present our results for the case of a QMLM $\mathcal{E}^{\mathrm{QMLM}}_{\vec{\alpha}}(\cdot)$ acting on a quantum input state $\rho(x)$. If $x$ describes classical data, this presumes an ``encoding-first'' architecture, in which the classical-to-quantum data-encoding $x\mapsto\rho(x)$ is applied first, followed by a trainable quantum circuit.
    As observed in~\cite{gil2020input, schuld2021effect, caro2021encodingdependent}, the expressive power of a QMLM for processing classical data can significantly benefit from allowing for data re-uploading~\cite{perez2020data}. This is achieved by allowing for a more flexible form of QMLM, in which data-encoding and trainable gates can be interleaved. 
    Our results, which focus on the trainable part of the QMLM circuit, directly extend to QMLMs with data re-uploading. 
    
    This can be seen as follows: In our proofs of the metric entropy bounds from Subsection~\ref{SbSct:CoveringNumberBounds}, we already allowed for an interleaving of the trainable gates with arbitrary fixed gates. The same reasoning applies if we replace the fixed gates by encoding gates depending on the classical input data $x$, as long as they are still independent of the trainable parameters.
\end{remark}

\section{Application to quantum phase recognition}\label{Sct:application-quantum-phase-recognition}

As a second application of our prediction error bounds, we demonstrate their implications for quantum phase recognition (QPR) with quantum convolutional neural networks (QCNNs). Here, for each training example $(\ket{\psi_i},y_i)$, the encoded input is simply  $\rho(x_i)=\ketbra{\psi_i}{\psi_i}$, a pure $n$-qubit quantum state that belongs to one of four possible quantum phases of matter. The corresponding output label $y_i\in\{0,1\}^{2}$ tells us to which of the four phases $\rho(x_i)$ belongs. The goal of a quantum machine learning model for this scenario is to accurately predict, given a new input $x$, the corresponding label, and thus the phase, of the state $\psi_i$.

In our language, a QCNN acting on $n$-qubit states, as introduced in \cite{cong2019quantum}, is a QMLM $\mathcal{E}^{\mathrm{QCNN}}_{\vec{\theta}}(\cdot)$ with a particular fixed structure, explained in more detail in Section \ref{sec:numerics}, consisting of $\log(n)$ independently parameterized $2$-qubit maps, each of which is used at most $n$ times. By measuring some of the qubits and then discarding them in pooling layers, the QCNN maps an $n$-qubit input to a $2$-qubit output, on which it then performs a computational basis measurement. The phase prediction that the QCNN makes for an $n$-qubit input state is the one corresponding to the smallest of the four outcome probabilities in the computational basis measurement on the output state. This can be well approximated by running multiple gate-sharing copies of the QCNN in parallel and appropriately post-processing the single measurement outcomes. For simplicity of presentation, we showcase our bounds in the scenario of only one copy of the QCNN. However, this extends to multiple gate-sharing copies according to Corollary \ref{cor:MultipleCopiesQNN}.
Thus, we consider the loss function characterized by the loss observables
\begin{equation}
    O^{\mathrm{loss}}_{x_i, y_i} 
    = O^{\mathrm{loss}}_{y_i} 
    = \ketbra{y_i}{y_i},
\end{equation}
which is independent of $x_i$. This means the loss function is given by 
\begin{equation}
    \ell(\vec{\theta};\psi_i, y_i )
    \coloneqq \bra{y_i}\mathcal{E}^{\mathrm{QCNN}}_{\vec{\theta}}(\ketbra{\psi_i}{\psi_i})\ket{y_i}.
\end{equation}
That is, the QMLM achieves a small value of the loss function on the example $(\ket{\psi_i},y_i)$ if the probability observing $y_i$ when performing a computational basis measurement on the output state, upon input $\ket{\psi_i}$, is small.
Correspondingly, the true risk is
\begin{equation}
    R(\vec{\theta})
    = \mathbb{E}_{(\psi_i,y_i)\sim P}[\bra{y}\mathcal{E}^{\mathrm{QCNN}}_{\vec{\theta}}(\ketbra{\psi}{\psi})\ket{y}]
\end{equation}
and the empirical risk on training data $S=\{(\ket{\psi_i},y_i)\}_{i=1}^N$ is
\begin{equation}
    \hat{R}_S(\vec{\theta})  
    = \frac{1}{N} \sum_{i=1}^N \bra{y_i}\mathcal{E}^{\mathrm{QCNN}}_{\vec{\theta}}(\ketbra{\psi_i}{\psi_i})\ket{y_i}.
\end{equation}

With the scenario established, we can now apply the prediction error bound proved in Corollary \ref{cor:MultipleCopiesQNN}. Here, it takes the form: Suppose that, given training data $S$ of size $N$, our optimization of the parameters in the QCNN yields a parameter setting $\vec{\theta}^\ast=\vec{\theta}^\ast(S)$. Then, with probability $\geq 1-\delta$ over the choice of training data,
\begin{equation}
    \mathbb{E}_{(\psi_i,y_i)\sim P}[\bra{y}\mathcal{E}^{\mathrm{QCNN}}_{\vec{\theta}}(\ketbra{\psi}{\psi})\ket{y}]
    \leq \frac{1}{N} \sum_{i=1}^N \bra{y_i}\mathcal{E}^{\mathrm{QCNN}}_{\vec{\theta}}(\ketbra{\psi_i}{\psi_i})\ket{y_i} + \mathcal{O}\left( \sqrt{\frac{\log(n)^2 + \log(\nicefrac{1}{\delta})}{N}} \right).
\end{equation}
Therefore, a small training error guarantees a small prediction error already for training data size $N\in\OC(\mathrm{poly}(\log(n)))$. In other words, when using a QCNN for QPR, a good generalization error is already guaranteed for training data of size poly-logarithmic in $n$, the number of qubits. Thereby, our results provide a rigorous explanation for the good generalization behavior of QCNNs even for small training data size that was observed numerically in \cite{cong2019quantum}.

\section{Application to unitary compiling}\label{Sct:application-unitary-compiling}

The second application of our generalization guarantees to be presented here is that of learning unitaries in the sense of (quantum-assisted) unitary compiling \cite{khatri2019quantum}. Unitary compiling is the task of finding a circuit representation of a target unitary, given black-box access to that unitary.

From a learning perspective, this motivates the following problem: For each training example $(x_i,y_i)$, the input is a pure $n$-qubit state $\rho(x_i)=\ketbra{\psi_i}{\psi_i}$, and the corresponding label is the pure $n$-qubit state $\ketbra{\phi_i}{\phi_i}=U\ketbra{\psi_i}{\psi_i}U^\dagger$ obtained by unitarily evolving the input state according to the (unknown) target unitary $U$. We consider the loss function given induced by the trace distance via
\begin{equation}
    \ell (\vec{\alpha}; \ket{\psi}, \ket{\phi})
    \coloneqq \norm{\ketbra{\phi}{\phi} - \mathcal{U}^\mathrm{QMLM}_{\vec{\alpha}}(\ketbra{\psi}{\psi})}_1^2.
\end{equation}
where $\mathcal{U}^\mathrm{QMLM}_{\vec{\alpha}}(\cdot)=U_{\vec{\alpha}}(\cdot)U_{\vec{\alpha}}^\dagger$ is a (unitary) quantum machine learning model, and we take $\ket{\phi}=U\ket{\psi}$.

As we are considering a trace distance between pure states, we can rewrite the loss function in terms of the fidelity (i.e., the overlap) as
\begin{equation}
    \ell (\vec{\alpha}; \ket{\psi}, \ket{\phi})
    = 1 - \lvert\braket{\phi}{U_{\vec{\alpha}}\psi}\vert^2
    = 1 - \Tr\left[\ketbra{\phi}{\phi}\cdot \mathcal{U}^\mathrm{QMLM}_{\vec{\alpha}}(\ketbra{\psi}{\psi})\right].
\end{equation}
Hence, we see that this loss function is encompassed by our scenario, because we can write
\begin{equation}
    \ell (\vec{\alpha}; \ket{\psi}, \ket{\phi})
    = \Tr\left[O^{\mathrm{loss}}_{\psi,\phi} \cdot \mathcal{U}^\mathrm{QMLM}_{\vec{\alpha}}(\ketbra{\psi}{\psi}) \right],
\end{equation}
with loss observables $O^{\mathrm{loss}}_{\psi,\phi}= \mathds{1} - \ketbra{\phi}{\phi}$ (depending only on the quantum output, but not on the input).

With (the above rewriting of) this loss function, the expected loss, when the expectation is w.r.t.~drawing the input states independently at random from the Haar measure, becomes connected to the Hilbert-Schmidt inner product between the target unitary and the unitary implemented by the circuit. This, in turn, can be given an operational interpretation, as detailed in \cite{khatri2019quantum}.

We solve this learning problem using a QMLM with a variable structure. (See Sections \ref{sec:numerics} and \ref{sec:Methods} for more details on how this is implemented.) In this scenario, Corollary \ref{cor:VariableGateNumberImproved} implies that, if we optimize over both (discrete) structures and (continuous) parameters and obtain an output structure $\vec{k}^\ast$ with $T$ parameterized gates with a parameter setting $\vec{\alpha}^\ast=(\vec{\theta}^\ast,\vec{k}^\ast)$, then, with probability $\geq 1-\delta$ over the choice of training data of size $N$, which is drawn i.i.d.~from some distribution $P$ over pure $n$-qubit states, we are guaranteed that
\begin{equation}
    \mathbb{E}_{\ket{\psi}\sim P}\left[\norm{U\ketbra{\psi}{\psi}U^\dagger - U_{\vec{\alpha}^\ast}\ketbra{\psi}{\psi}U_{\vec{\alpha}^\ast}^\dagger}_1^2\right]
    \leq \frac{1}{N}\sum_{i=1}^N\norm{U\ketbra{\psi_i}{\psi_i}U^\dagger - U_{\vec{\alpha}^\ast}\ketbra{\psi_i}{\psi_i}U_{\vec{\alpha}^\ast}^\dagger}_1^2 + \tilde{\mathcal{O}}\left(\sqrt{\frac{T}{N}} + \sqrt{\frac{\log(\nicefrac{1}{\delta})}{N}}\right),
\end{equation}
assuming that the number of allowed structures with $T$ gates scales at most exponentially in $T$. Here, the $\tilde{\mathcal{O}}$ hides terms logarithmic in $T$.

Consequently, we know that, with high probability, the trace distance between the state obtained by applying the learned unitary on a new unseen input state (drawn at random from the data-generating distribution) and the true output state will be small if we can achieve a small average trace distance over the $N$ randomly sampled states, where $N$ scales roughly linearly in $T$. For many unitary gates of interest, namely those that can be efficiently implemented, we thus expect $T$, and thereby also $N$, to scale polynomially in $n$, the number of qubits. This is a substantial improvement over the training data sizes used in previous approaches to unitary compiling, which were often taken to be exponential in $n$ such as to uniquely determine the unknown target unitary \cite{qFactor,cincio2018learning,cincio2021machine}. This improvement comes at the cost of not compiling the target unitary exactly, but only with a certain (small) accuracy and success probability. Nevertheless, for many applications, paying this cost is worthwhile, given the significant savings in training data size guaranteed by our results. 

As a concrete example, the QFT discussed in Section \ref{SbSct:unitary-compiling} can be exactly implemented with $T\in\mathcal{O}(n^2)$ gates. In this case, our theory implies that $N\in\mathcal{O}(n^2)$ training data points are, with high probability, sufficient for good generalization. As discussed in \cite{nam2020approximate}, approximate implementations of the QFT are possible with a lower number of gates, namely with $T\in\mathcal{O}(n\log(n))$. Potentially, one can combine this insight with our result to obtain a similar improvement in the upper bound on the sufficient training data size.

\end{appendices}

\end{document}